\newtheorem{theorem}{Theorem}
\newtheorem{corollary}[theorem]{Corollary}
\newtheorem{definition}[theorem]{Definition}
\newtheorem{example}[theorem]{Example}
\newtheorem{lemma}[theorem]{Lemma}
\newtheorem{proposition}[theorem]{Proposition}
\newtheorem{remark}[theorem]{Remark}
\newenvironment{proof}[1][Proof]{\textbf{#1.} }{\ \rule{0.5em}{0.5em}}
\newcommand{\Sym}{\mathrm{Sym}}
\newcommand{\so}{\mathfrak{so}}
\newcommand{\h}{\mathfrak{h}}
\newcommand{\gl}{\mathfrak{gl}}
\newcommand{\T}{\mathbf{T}}
\newcommand{\SO}{\mathrm{SO}}
\newcommand{\OGr}{\mathrm{OGr}}
\newcommand{\GL}{\mathrm{GL}}
\renewcommand{\P}{\mathbf{P}}
\newcommand{\Spin}{\mathrm{Spin}}
\newcommand{\Pf}{\mathrm{Pf}}
\newcommand{\n}{\mathfrak{n}}
\newcommand{\ddef}{\mathrm{def}}
\newcommand{\rT}{\mathrm{T}}
\newcommand{\depth}{\mathrm{depth}}
\newcommand{\rht}{\mathrm{ht}}
\newcommand{\Map}{\mathrm{Map}}
\newcommand{\rE}{\mathrm{E}}
\newcommand{\rG}{\mathrm{G}}
\newcommand{\halpha}{\hat{\alpha}}
\newcommand{\hbeta}{\hat{\beta}}
\newcommand{\hgamma}{\hat{\gamma}}
\newcommand{\hdelta}{\hat{\delta}}
\newcommand{\hepsilon}{\hat{\epsilon}}
 \newcommand{\myline}{\frac{\ \quad}{\quad}}
  \newcommand{\Z}{\mathbb Z}
\newcommand{\s}{\bm{s}} 
\newcommand{\mm}{\bm{m}}  
\newcommand{\C}{\mathbb{C}}   
\newcommand{\hI}{\hat{I}} 
\newcommand{\mat}[4]{ \left( \begin{smallmatrix}   #1 & #2 \\  #3 & #4 \end{smallmatrix}\right) }
\newcommand{\row}[2]{ \left( \begin{smallmatrix}   #1 & #2 \end{smallmatrix}\right) }
\newcommand{\col}[2]{ \left( \begin{smallmatrix}   #1 \\ #2 \end{smallmatrix}\right) }
\begin{document}
\title{Straightened law for quantum isotropic Grassmannian $\OGr^+(5,10)$ } 
\author{M.V. Movshev \\Stony Brook University\\Stony Brook, NY 11794-3651, USA}
\date{\today}
\maketitle
\begin{abstract}
Projective embedding of an  isotropic Grassmannian $\OGr^+(5,10)$  into projective space of spinor representation $S$ can be characterized with a help of $\Gamma$-matrices by equations $\Gamma_{\alpha\beta}^i\lambda^{\alpha}\lambda^{\beta}=0$. A polynomial function of degree $N$ with values in $S$ defines a map to $\OGr^+(5,10)$ if its coefficients satisfy a $2N+1$  quadratic equations. Algebra generated by coefficients of such polynomials is a coordinate ring of the quantum isotropic Grassmannian. We show that this ring  is based on a lattice; its defining  relations satisfy straightened law. This enables us to compute  Poincar\'{e} series of the ring.
\end{abstract}

{\bf Mathematics Subject Classification(2010).} \\
main  13F50, secondary 81R25, 81T30 \\
{\bf Keywords.} Straightened law, pure spinors, Delannoy numbers, Poincar\'{e} series
\tableofcontents

\section{Introduction}\label{S:Intro}
Complex Isotropic Grassmannian $\OGr^+(5,10)$ or a space of pure spinors, as it is known in the physics literature, is a cornerstone of manifestly Poincar\'{e} covariant formulation of string theory \cite{Berkovits}.  A  rigorous construction of the Hilbert space of string theory in the formalism of pure spinors remains a challenging problem (cf. \cite{AABN}). 
Paper \cite{AA}  investigates   a simplified model with a quadric as  the target.  In \cite{Movq} we proved rigorously most of the results of  \cite{AA}. 

The paper \cite{AA} concerns certain algebra-geometric properties of the space of smooth maps $\Map(S^1,Q)\subset \Map(S^1,V^{2n})$ from a circle to a nondegenerate affine quadric $Q$ in $2n$-dimensional linear space $V^{2n}$.  In \cite{Movq} we observed that  analysis of \cite{AA} becomes rigorous if we replace $\Map(S^1,Q)$ by the space of polynomial maps 
 \[\sum_{N\leq k\leq N'}\sum_{\s\in \rG_n}g^k_{\s}v_{\s}z^k\]
 written in some basis $(v_{\s}), \s\in \rG_n=\{1,\dots,n,1^*,\dots,n^*\}$ in $V^{2n}$, and then pass to a  limit $N'\rightarrow \infty, N\rightarrow -\infty$.  In this basis the  $\SO(2n)$-invariant quadratic form $q$ splits: $q=\sum_{i=1}^nx_ix_{i^*}$. An important technical observation, on which hinge all other results in \cite{Movq}, is that the algebra generated by $g^k_{\s}$ is the algebra with straightened law.

 Recall (cf.\cite{EisenbudStr}) that an  algebra $A$ over a ring $R$ with straightened law is based on a directed graph $F$ without oriented cycles. Generators $\tau^{\alpha}$ of the algebra are labelled by vertices of the graph.  The set of vertices is partially ordered: $\alpha\leq \beta$ if there is a directed path  or a  {\it chain} 
\begin{equation}\label{E:chaindef}
\alpha\rightarrow \cdots\rightarrow \beta
\end{equation} in  $F$. 
 $A$ is an  algebra with  straightened law if:
\begin{enumerate}
\item
 {\it Standard monomials} $\tau^{\alpha_1} \cdots \tau^{\alpha_n}$ labelled by $\alpha_1\leq \dots\leq \alpha_n$  form a basis in $A$.
 \item If $\alpha$ and $\beta$ are incomparable and 
 \begin{equation}\label{E:strrelgeneral}
 \tau^{\alpha}\tau^{\beta}=\sum_{i}r_i\tau^{\gamma_{i,1}}\cdots \tau^{\gamma_{i,k_i}}
 \end{equation}
 where $0\neq r_i\in R$, ${\gamma_{i,1}}\leq \cdots \leq {\gamma_{i,k_i}}$ is the unique expression of $\tau^{\alpha}\tau^{\beta}$ in $A$ as a linear combination of standard monomials, then $\gamma_{i,1}\leq \alpha,\beta$.
 \end{enumerate}
 Throughout the paper the ground ring $R$ is the field of complex numbers.
 The goal of the present paper is to establish  straightened law 
 for Quantum Isotropic Grassmannian  in terminology  of \cite{SottileSturmfels} or  the space of Drinfeld's quasimaps of $\P^1$  to $\OGr^+(5,10)$ in terminology of \cite{Braverman}.

 The most straightforward way to define Quantum Isotropic Grassmannian is through $\Gamma$-matrices $\Gamma^{\s}_{\alpha\beta}$.
They are defined  as matrix coefficients    of $\Spin(10)$-intertwiners \begin{equation}\label{E:gammamap}\Sym^2S\rightarrow V\end{equation} 
in a basis 
\begin{equation}\label{E:basis}
\{\theta_{\alpha}|\alpha \in \rE\}
\end{equation} of a spinor representation $S$ and a  basis 
\begin{equation}\label{E:basisvec}
\{v_{\s}| \s\in\rG \}\quad \rG=\rG_5=\{1,\dots,5,1^*,\dots,5^*\}
\end{equation} 
in the defining representation  $V=V^{10}$. Presently we need to know only that cardinality of $\rE$ is equal to sixteen.
By definition 
\begin{equation}\label{E:gammadef}
\sum_{\s\in \rG}\Gamma^{\s}_{\alpha\beta}v_{\s}=\Gamma(\theta_{\alpha},\theta_{\beta}).
\end{equation} 
Generators of algebra  $A_N^{N'}, N\leq N'$ of homogeneous functions on Quantum Isotropic Grassmannian can be arranged into  generating functions 
\begin{equation}\label{E:genuuu}
\lambda^{\alpha}(z)=\sum_{N\leq l\leq N'}\lambda^{\alpha^l}z^l, \quad\alpha^l\overset{\ddef}{=}(\alpha,l)\in \rE\times \mathbb{Z}\overset{\ddef}{=}\hat{\rE}.
\end{equation}  The generating function of relations  $\Gamma^{\s^l}$ is 
\begin{equation}\label{E:relgenfunction}
\sum_{l} \Gamma^{\s^l}z^l=\Gamma^{i}_{\alpha\beta}\lambda^{\alpha}(z)\lambda^{\beta}(z)\quad  \s^l\overset{\ddef}{=} (\s,l)\in  \rG\times \mathbb{Z}\overset{\ddef}{=}\hat \rG
\end{equation}
\begin{definition}\label{D:qgras}
Quantum Isotropic Grassmannian is $Proj(A_{N}^{N'})$.
\end{definition}
 
 The study of straightened law phenomenon is a part of Standard Monomial Theory (cf. \cite{SMT}). It has been  established for algebras of homogeneous functions on Schubert cells  in partial flag spaces of semisimple groups. Straightened law has also been established \cite{LakshmibaiLittelmann} for intersections of Schubert varieties and opposite Schubert varieties (called Richardson varieties). Works \cite{Littelmann1}, \cite{Littelmann2}, and \cite{Littelmann3} generalize results of classical Standard Monomial Theory  to   symmetrizable Kac-Moody algebras. An algebra $C$ of this loosely defined class  enjoys the following properties:
 \begin{enumerate}
 \item\label{a} Straightened law holds.
 \item\label{b} Koszul property holds.
 \item \label{c}$Spec(C)$  is a reduced, irreducible, and normal scheme with   Cohen-Macaulay  singularities.
 \end{enumerate}
  The objects that we are studying in this paper  can be interpreted as  Richardson varieties in a semi-infinite partial flag space of $\hat{\so}_{10}$(cf. \cite{Braverman}),  which are closely related to the spaces   discussed in \cite{FF} and \cite{FFKM}. 
 It seems reasonable to expect that the suitably generalized technique of \cite{LakshmibaiLittelmann} would make it possible  to prove  the listed above package of properties  for spaces of quasimaps to an arbitrary quotient $G/P$ with a semisimple $G$ and a parabolic $P$.

 We however pursue  a more modest goal to establish \ref{a},\ref{b} and some of \ref{c}  for quantum $\OGr^+(5,10)$.  Our analysis relies  on a set of identities between coefficients $\Gamma_{\alpha\beta}^{\s}$ in relations (\ref{E:relgenfunction}). These  Fierz identities are ubiquitous in gauge, gravity, and string theories. 
 
 The diagram $\hat{\rE}$ is fundamental  for   Quantum Isotropic Grassmanians:
\begin{equation}\label{P:kxccvgw13}
\mbox{
\setlength{\unitlength}{3000sp}
\begin{picture}(5025,-460)(-201,-1600)
\thinlines
\put(2326,-3886){\makebox(0,0)[lb]{\tiny{$(3)^{r-1}$}}}
\put(1537,-1897){\vector( 1,-1){300}}
\put(2063,-2423){\vector( 1,-1){300}}
\put(2851,-2386){\vector(-1,-1){300}}
\put(2363,-1898){\vector(-1,-1){300}}
\put(2588,-1898){\vector( 1,-1){300}}
\put(3188,-2498){\vector( 1,-1){300}}
\put(2626,-2986){\vector( 1,-1){300}}
\put(2700,-3960){\vector( 1,-1){300}}
\put(1501,-3886){\vector( 1,-1){300}}
\put(1876,-5461){\vector(-1,-1){300}}
\put(1612,-6097){\vector( 1,-1){300}}
\put(2138,-6623){\vector( 1,-1){300}}
\put(2926,-6586){\vector(-1,-1){300}}
\put(3488,-7148){\vector(-1,-1){300}}
\put(2663,-6098){\vector( 1,-1){300}}
\put(3263,-6698){\vector( 1,-1){300}}
\put(2701,-7186){\vector( 1,-1){300}}
\put(2401,-6136){\vector(-1,-1){300}}
\put(3226,-4411){\vector( 1,-1){300}}
\put(2926,-5611){\vector(-1,-1){300}}
\put(3526,-5011){\vector(-1,-1){300}}
\put(3451,-2986){\vector(-1,-1){300}}
\put(2401,-7111){\vector(-1,-1){300}}
\put(2926,-4411){\vector(-1,-1){300}}
\put(1801,-3361){\vector(-1,-1){300}}
\put(2926,-3436){\vector(-1,-1){300}}
\put(2251,-2911){\vector(-1,-1){300}}
\put(2026,-3436){\vector( 1,-1){300}}
\put(2401,-3886){\vector(-1,-1){300}}
\put(2626,-5011){\vector( 1,-1){300}}
\put(2101,-5536){\vector( 1,-1){300}}
\put(2326,-5011){\vector(-1,-1){300}}
\put(2101,-4486){\vector( 1,-1){300}}
\put(2326,-1861){\makebox(0,0)[lb]{\tiny{$\cdots$}}}
\put(1801,-2386){\makebox(0,0)[lb]{\tiny{$(25)^{r-1}$}}}
\put(3451,-2911){\makebox(0,0)[lb]{\tiny{$(5)^{r-1}$}}}
\put(2851,-2386){\makebox(0,0)[lb]{\tiny{$(34)^{r-1}$}}}
\put(1201,-1786){\makebox(0,0)[lb]{\tiny{$\dots$}}}
\put(2926,-3436){\makebox(0,0)[lb]{\tiny{$(4)^{r-1}$}}}
\put(2326,-2836){\makebox(0,0)[lb]{\tiny{$(35)^{r-1}$}}}
\put(3001,-4336){\makebox(0,0)[lb]{\tiny{$(2)^{r-1}$}}}
\put(1276,-3811){\makebox(0,0)[lb]{\tiny{$(0)^r$}}}
\put(1801,-4336){\makebox(0,0)[lb]{\tiny{$(12)^r$}}}
\put(2326,-4861){\makebox(0,0)[lb]{\tiny{$(13)^r$}}}
\put(2926,-5461){\makebox(0,0)[lb]{\tiny{$(23)^r$}}}
\put(1801,-5461){\makebox(0,0)[lb]{\tiny{$(14)^r$}}}
\put(2401,-6061){\makebox(0,0)[lb]{\tiny{$(24)^r$}}}
\put(2926,-6586){\makebox(0,0)[lb]{\tiny{$(34)^r$}}}
\put(3526,-7111){\makebox(0,0)[lb]{\tiny{$(5)^r$}}}
\put(2401,-7036){\makebox(0,0)[lb]{\tiny{$(35)^r$}}}
\put(1276,-5986){\makebox(0,0)[lb]{\tiny{$(15)^r$}}}
\put(1876,-6586){\makebox(0,0)[lb]{\tiny{$(25)^r$}}}
\put(3001,-7636){\makebox(0,0)[lb]{\tiny{$\cdots$}}}
\put(3601,-4861){\makebox(0,0)[lb]{\tiny{$(1)^{r-1}$}}}
\put(1876,-7561){\makebox(0,0)[lb]{\tiny{$\cdots$}}}
\put(1726,-3436){\makebox(0,0)[lb]{\tiny{$(45)^{r-1}$}}}
\end{picture}
}
\end{equation}  
\bigskip
\bigskip
\bigskip
\bigskip
\bigskip
\bigskip
\bigskip
\bigskip
\bigskip
\bigskip
\bigskip
\bigskip
\bigskip
\bigskip
\bigskip
\bigskip
\bigskip
\bigskip
\bigskip
\bigskip
\bigskip
\bigskip
\bigskip
\bigskip
\bigskip
\bigskip
\bigskip
\bigskip
\bigskip
\bigskip
\bigskip
\bigskip
\bigskip

To make formulas more readable we denote elements of $\hat \rE$ by Greek letters $\halpha,\dots,\hdelta$ with circumflex accent ,  elements of $\hat \rG$ by similarly accented bold Roman letters $\hat \s, \hat{\bm{t}}$.
Condition (\ref{E:chaindef}) defines a partial order on $\hat{\rE}$.
A segment $[\hdelta,\hdelta']$ is a subset $\{\halpha\in \hat{\rE}|\hdelta\leq \halpha\leq \hdelta'\}$. A set of generators $\lambda^{\halpha}$ of  algebra $A_{N}^{N'}$ is labelled by $\halpha\in [(0)^N, (1)^{N'}], N\leq N'$. It is useful to consider  a more general class of algebras $A_{\hdelta}^{\hdelta'}$ labelled by an interval $[\hdelta,\hdelta']\subset \hat{\rE}$. We shall refer to $Proj(A_{\hdelta}^{\hdelta'})$ as a (semi-infinite) {\it Richardson variety}.

 In this paper we establish that straightened law for $A_{\hdelta}^{\hdelta'}$ (Proposition \ref{P:genstrlaw}) can be derived from the Fierz identity (see Proposition \ref{P:Fierz}). As an immediate corollary  of straightened law  we obtain Cohen-Macaulay property of $A_{\hdelta}^{\hdelta'}$ (Corollary \ref{C:CohenMacaulay}), reducibility of $Spec(A_{\hdelta}^{\hdelta'})$ (Proposition \ref{P:reduce}), and Koszul property (Proposition \ref{P:quadbasiscaf}).

Group of automorphisms of algebras  $A_{\hdelta}^{\hdelta'}$ contains a maximal complex torus $\T^5\subset \Spin(10)$ and $\C^{\times}$(acts as a loop rotation). The characters of  $\T^5\times\C^{\times}$ action in graded components of  $A_{\hdelta}^{\hdelta'}$ can be arranged into  Poincar\'{e} series 
\begin{equation}\label{E:poincare}
A_{\hdelta}^{\hdelta'}(z,q,t), z\in \T^5,q\in \C^{\times}.
\end{equation}
   The formula (\ref{E:genchar}) for  $A_{\hdelta}^{\hdelta'}(z,q,t)$ can be  obtained as a corollary of straightened law. Let $J$ be equal to  
   \begin{equation}\label{E:J}
   \{\dots,\hdelta_0,\hdelta_1,\hdelta_2,\hdelta_3,\hdelta_4,\hdelta_5,\dots,\}=\{\dots,(15),(5),(0)^1,(1),(15)^1,(5)^1,\dots\}
   \end{equation}
   The functions $B_r(t)=A_{(0)}^{\hdelta_r}(1,1,t), r\geq 0$ satisfy recursion (see Section \ref{S:specialcases})
\begin{equation}\label{E:Delan}
\begin{split}
&B_r(t)=\frac{1+t}{(1-t)^2}B_{r-1}(t)+\frac{t}{(1-t)^4}B_{r-2}(t)\\
& B_0(t)=\frac{1}{(1-t)^5},B_1(t)=\frac{1+t}{(1-t)^7}\end{split}
\end{equation}
This gives an effective tool for computation of $B_r$ and is closely related to recursion 
for  Delannoy  numbers (cf.\cite{WDelannoy}). In fact, author came across these formulas during computer experiments with $A_{(0)}^{\hdelta_r}(1,1,t)$ and one of the goals of the present paper was an explanation of these findings. 

Several interesting topics have been left outside of the scope of present paper: firstly, the analysis of limiting characters $A_{\hdelta}^{\hdelta'}(z,q,t), \hdelta\rightarrow \infty$; secondly, the construction of the Hilbert space of $\beta\gamma$-system on pure spinor. These will await a future publication.

The main results of the present paper are Corollaries \ref{P:genstrlaw} \ref{P:quadbasiscaf}, Proposition \ref{P:reduce}, \ref{P:depth}  and formulas (\ref{E:genchar}), (\ref{E:Delan}).  Propositions \ref{P:rela}, \ref{P:obstr}  are technically central in this paper.

The paper is organized as follows. The principal construction, the Hasse diagram based on set of  weights $\rE$ of spinor representation, is introduced in Section \ref{S:finite}. For readers convenience we sketch the proof of straightened law in the case of ordinary pure spinors with an emphasis on Fierz identities. 
 The proof  is done in Section \ref{S:str}.  Combinatorial derivation of the character of $\OGr^+(5,10)$  is given in Section \ref{S:Poincare}.
Algebras of Richardson varieties
are defined in Section  \ref{S:Richardson}.
Section  \ref{S:posetaf} contains some preliminary results on the structure of   a partially ordered set $\hat{\rE}$.
Straightened law for quantum $\OGr^+(5,10)$  and for $Spec(A_{\hdelta}^{\hdelta'})$  are proved in Section \ref{S:alghat}.   Cohen-Macaulay property,  dimension, and depth  of these varieties are also established  Section \ref{S:alghat}. A formula for character $A_{\hdelta}^{\hdelta'}(z,q,t)$ is established in Section \ref{S:genfunaff}. Note  that  Sections \ref{S:genfunaff} and \ref{S:alghat} are logical continuation of Sections \ref{S:str} and \ref{S:Poincare}.  
 Section \ref{S:Delannoyp} contains relevant facts about Delannoy numbers. Appendix contains some technical information. In Appendix \ref{S:Aut} we justify existence of  automorphisms that are needed in the proof of straightened law.
Appendix \ref{A:obstr} contains a list of obstructions that is used in the proof of the main result.
\section{Acknowledgments}
The author would like to thank A.S. Schwarz for useful comments.
 Parts of this work has been written during the authors stay at IHES and Max Plank Institute for Mathematics (Bonn).
The author is grateful to these institutions for excellent working conditions. 
\section{Partial order on the basis of spinor representation $S$}\label{S:finite}
In this section we define  a  partial order in the basis of 16-dimensional spinor representation $S$ of the complex algebraic group $\Spin(10)$, the group of type $D_5$ in Cartan classification. This order is fundamental for the definition of the straightened law.  

 Let $V$ be a fundamental complex 10-dimensional vector representation of complex $\SO(10)$. It carries an $\SO(10)$-invariant inner product $(\cdot ,\cdot)$, the polarization of the quadratic form $q$. We choose a decomposition  $W+W'=<v_1,\dots,v_5>+<v_{1^*},\dots,v_{5^*}>=V$. Subspaces $W$ and $W'$ are isotropic and the inner product satisfies 
 \begin{equation}\label{E:innerproduct}
 (v_i,v_{j^*})=\delta_{ij}.
 \end{equation}
Recall (see \cite{Cartan} for details) that  the spinor representation admits a construction  of a fermionic Fock space. The spinor representation $S$ can be identified with the direct sum 
\begin{equation}\label{E:spinordef}
S=\bigoplus_{i\geq0} \Lambda^{2i} W.
\end{equation}
Tensors 
 $\theta_{(ij)}=v_{i}\wedge v_{j}, \theta_{(k)}=v_{1}\wedge \cdots  \hat{v}_{k} \cdots \wedge v_{5}$ and a constant $1=\theta_{(0)}$ define a bases   in $S$. We set (cf. \ref{E:basis})
\begin{equation}\label{E:zerorelations}
 \rE=\{(0),(ij),(k)|1\leq i<j\leq 5,1\leq k \leq 5\}
\end{equation}
By abuse of notation we shall denote by $\rE$ the following Hasse diagram, whose vertices are labelled by weights of $S$:
 \begin{equation}\label{P:kxccvgw1}
\mbox{
\setlength{\unitlength}{3000sp}
\begin{picture}(5025,-460)(2101,-1700)
\thinlines
\put(1876,-1711){\vector( 1, 0){600}}
\put(2776,-1711){\vector( 1, 0){600}}
\put(2626,-1861){\vector( 0,-1){600}}
\put(3451,-1861){\vector( 0,-1){600}}
\put(2776,-2536){\vector( 1, 0){600}}
\put(3601,-2536){\vector( 1, 0){600}}
\put(4426,-2536){\vector( 1, 0){600}}
\put(2626,-2686){\vector( 0,-1){600}}
\put(4276,-2686){\vector( 0,-1){600}}
\put(5101,-2686){\vector( 0,-1){600}}
\put(3451,-2686){\vector( 0,-1){600}}
\put(2776,-3361){\vector( 1, 0){600}}
\put(3601,-3361){\vector( 1, 0){600}}
\put(4426,-3361){\vector( 1, 0){600}}
\put(4276,-3511){\vector( 0,-1){600}}
\put(5101,-3511){\vector( 0,-1){600}}
\put(4426,-4186){\vector( 1, 0){600}}
\put(5251,-4186){\vector( 1, 0){600}}
\put(6076,-4186){\vector( 1, 0){600}}
\put(780,-1770){\makebox(0,0)[lb]{\tiny{$(0)$}}}
\put(1576,-1770){\makebox(0,0)[lb]{\tiny{$(12)$}}}
\put(2480,-1770){\makebox(0,0)[lb]{\tiny{$(13)$}}}
\put(3376,-1770){\makebox(0,0)[lb]{\tiny{$(23)$}}}
\put(2480,-2600){\makebox(0,0)[lb]{\tiny{$(14)$}}}
\put(3376,-2600){\makebox(0,0)[lb]{\tiny{$(24)$}}}
\put(4201,-2600){\makebox(0,0)[lb]{\tiny{$(34)$}}}
\put(5026,-2600){\makebox(0,0)[lb]{\tiny{$(5)$}}}
\put(2480,-3420){\makebox(0,0)[lb]{\tiny{$(15)$}}}
\put(3376,-3420){\makebox(0,0)[lb]{\tiny{$(25)$}}}
\put(4201,-3420){\makebox(0,0)[lb]{\tiny{$(35)$}}}
\put(5026,-3420){\makebox(0,0)[lb]{\tiny{$(4)$}}}
\put(4201,-4240){\makebox(0,0)[lb]{\tiny{$(45)$}}}
\put(5026,-4240){\makebox(0,0)[lb]{\tiny{$(3)$}}}
\put(5851,-4240){\makebox(0,0)[lb]{\tiny{$(2)$}}}
\put(6676,-4240){\makebox(0,0)[lb]{\tiny{$(1)$}}}
\put(976,-1711){\vector( 1, 0){600}}
\end{picture}
}
\end{equation}  
\bigskip
\bigskip
\bigskip
\bigskip
\bigskip
\bigskip
\bigskip
\bigskip
\bigskip

  A spinor $\theta\in S$ can be written  as a sum
\[\theta=\lambda\theta_{(0)}+\sum_{i<j} w_{ij}\theta_{(ij)}+\sum_{k=1}^5 p_k \theta_{(k)}=\sum_{\alpha\in \rE}\lambda^{\alpha}\theta_{\alpha}\]

The Lie algebra $\gl_5\subset \so_{10}$ acts tautologically on $W$, via contragradient representation on $W'$ and diagonally on $W+W'$. 
 Vectors $(\theta_{\alpha})$ define  a weight basis for  the algebra of diagonal matrices  
 \begin{equation}\label{E:cartan}
 \h\subset\gl_5\subset \so_{10}
 \end{equation} 
 The action of the  subgroup of diagonal matrices $\widetilde{H}\subset \widetilde{\GL}(5)\subset \Spin(10)$ ( $\widetilde{ }$ stands for two-sheeted cover) on the weight vectors 
 is 
\begin{equation}\label{E:weights}
\begin{split}
&\rho(z)\theta_{(0)} ={\det}^{-\frac{1}{2}}(z)1\overset{\ddef}{=}e_{(0)}(z)1\\
&\rho(z))\theta_{(ij)}={\det}^{-\frac{1}{2}}(z)z_iz_j\theta_{(ij)}\overset{\ddef}{=}e_{(ij)}(z)\theta_{(ij)}\\
&\rho(z))\theta_{(k)}={\det}^{\frac{1}{2}}(z)z^{-1}_k\theta_{(k)}\overset{\ddef}{=}e_{(k)}(z)\theta_{(k)}
\end{split}
\end{equation}
In view of this we can identify $\rE$ with the set of $H$-weights in $S$.

Following Section 7 in \cite{SMT}  we identify  the Weyl group $W(D_5)$  with  a subgroup of a semidirect product $S_5\ltimes (\Z_2)^5$. The symmetric group acts by permutation of entries of the array $(\epsilon_1,\dots,\epsilon_5)\in (\Z_2)^5$. The group $W(D_5)$ is a semidirect product $S_5\ltimes N$, with $N\cong (\Z_2)^4$ being a kernel of the map $(\Z_2)^5\rightarrow \Z_2$ 
\begin{equation}\label{E:split}
(\epsilon_1,\dots,\epsilon_5)\rightarrow \sum_{i=1}^5\epsilon_i.
\end{equation} 
Representation $S$ is {\it minuscule}, i.e. $W(D_5)$ acts transitively on $\rE$ (\cite{SMT}).  The $W(D_5)$ orbit of the highest vector identifies with  the coset $S_5\ltimes N/S_5$. 
The  elements \[(0,0,0,0,0),(0\dots,\overset{i}{1},\dots,0,\dots,\overset{j}{1},\dots,0), (1,\dots,\overset{k}{0},\dots,1) \] in $S_5\ltimes N/S_5\cong (\Z_2)^4\subset (\Z_2)^5$ correspond to elements  $(0),(ij),(k)$ in $\rE$

Let $\sigma_{ij}\in S_5$ be a permutation, $r_{\mm}$ be an operator of multiplication on $\mm\in N$. We shall be using the following set of Coxeter generators $s_1=\sigma_{12}$,$s_2=r_{(1,1,0,0,0)}$, $s_3=\sigma_{23}$, $s_4=\sigma_{34}$, $s_5=\sigma_{45}$, which can be arranged into the Coxeter graph (see \cite{Humphreys} for details):

\[
 \renewcommand{\arraycolsep}{0.1cm}
 \renewcommand{\arraystretch}{1.4}
 \begin{matrix}
 s_{1} &\myline& s_{3} &\myline& s_{4} &\myline& s_{5} \\
&&\big|&&  \\
 && \kern-1cm s_2 \kern-1cm&&
 \end{matrix}
 \] 

The {\it poset} (partially ordered set) $\rE$ with the  order (\ref{E:chaindef}) determined by the diagram (\ref{P:kxccvgw1}) is a lattice in a sense of \cite{Birkhoff} : any two elements $a,b$ have a {\it unique} supremum $a\vee b$ (join) and infimum $a\wedge b$ (meet).
The set of unordered pairs of  non comparable  elements in the poset $\rE$, which determine commutative squares in $\rE$,  is 
\begin{equation}\label{E:Enoncomp}
\begin{split}
&M=\{((14),(23)),((15),(23)),((15),(24)),((15),(34)),((15),(5)),\\
&((25),(34)),((25),(5)),((35),(5)),((45),(5)),((45),(4))\}
\end{split}
\end{equation}
Let $P$ be polynomial algebra $\C[\lambda^{\beta}], \beta\in \rE$. Elements of $M$ as well as monomials $\{\lambda^{\alpha}\lambda^{\alpha'}\in P| (\alpha,\alpha')\in M\}$  shall be called {\it clutters}.
\begin{proposition}\label{E:transitiveM}
Elements of $M$ belong to one $W(D_5)$-orbit in the symmetric square $\Sym^2\rE$.
\end{proposition}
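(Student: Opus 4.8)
The plan is to work with the weight realisation of $\rE$. By (\ref{E:weights}) every $\alpha\in\rE$ is a weight $e_\alpha$ of $S$ for the Cartan $\h$, and $W(D_5)=S_5\ltimes N$ permutes the sixteen weights $\{e_\alpha\}$: writing them additively as vectors $\tfrac12(\pm1,\dots,\pm1)$ with an even number of plus signs (the weight $\varepsilon_i$ of $v_i$ being the $i$-th coordinate function, that of $v_{i^{*}}$ being $-\varepsilon_i$), $S_5$ permutes coordinates and $N$ flips signs on even-size subsets of coordinates. Consider
\[
\phi\colon\Sym^2\rE\longrightarrow\h^{*},\qquad \phi\big(\{\alpha,\beta\}\big)=e_\alpha+e_\beta .
\]
Since $W(D_5)$ acts linearly on $\h^{*}$ compatibly with its permutation action on $\{e_\alpha\}$, the map $\phi$ is $W(D_5)$-equivariant.

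The first step is to evaluate $\phi$ on the ten pairs of (\ref{E:Enoncomp}). From (\ref{E:weights}) one has $e_{(ij)}=-\tfrac12\sum_l\varepsilon_l+\varepsilon_i+\varepsilon_j$, $e_{(k)}=-\tfrac12\sum_l\varepsilon_l+\sum_{l\ne k}\varepsilon_l$ and $e_{(0)}=-\tfrac12\sum_l\varepsilon_l$, and a routine substitution shows that for each clutter $\{\alpha,\beta\}$ the vector $e_\alpha+e_\beta$ equals $\pm\varepsilon_i$ for some $i$; moreover the ten resulting vectors run exactly once through the ten weights $\pm\varepsilon_1,\dots,\pm\varepsilon_5$ of the defining representation $V$. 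Hence $M\subseteq X:=\phi^{-1}\big(\{\pm\varepsilon_1,\dots,\pm\varepsilon_5\}\big)$.

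The second step is to show that $X$ is a single $W(D_5)$-orbit. As $V$ is irreducible, $W(D_5)$ acts transitively on its ten weights, so $\phi|_X$ is a $W(D_5)$-equivariant surjection onto a transitive set; therefore $X$ is one orbit if and only if the stabiliser of one target weight, say that of $v_5$, acts transitively on its fibre. A short computation shows $\varepsilon_5$ is a sum of two distinct $S$-weights in exactly four ways, so that this fibre is $\{\{(j5),(j)\}\mid j=1,2,3,4\}$, and the copy of $S_4\subset W(D_5)$ permuting the coordinates $\{1,2,3,4\}$ fixes $\varepsilon_5$ and permutes these four pairs transitively. Thus $X$ is a single orbit, and since $M\subseteq X$, all ten clutters lie in it.

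The point to be careful about — and the reason a naive argument fails — is that $M$ is \emph{not} itself $W(D_5)$-stable: the Weyl group does not respect the partial order on $\rE$, so it sends some incomparable pairs to comparable ones (for instance any $w$ carrying one entry of a clutter to the minimum $(0)$). Hence one must enlarge $M$ to the genuinely stable set $X$, whose non-clutter members are the comparable pairs that also sum to a weight of $V$; the actual labour is then confined to the bookkeeping of the first step. A less slick alternative, if the fibre computation feels unmotivated, is to exhibit for each of the ten clutters an explicit element of $S_5\ltimes N$ taking it to the fixed clutter $((14),(23))$ — again a finite check.
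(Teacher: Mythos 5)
Your proof is correct, and it follows essentially the same idea as the paper's: build a $W(D_5)$-equivariant map out of $\Sym^2\rE$ by adding weights, compute its values on the ten clutters, and handle the fibers by a kernel/stabilizer argument. The difference is in the packaging. The paper uses the identification $\psi\colon\rE\cong N\subset(\Z_2)^5$ and the map $l(\alpha,\beta)=\psi(\alpha)+\psi(\beta)\in N$. Because $\psi(\alpha)+\psi(\beta)$ is invariant under adding any $n\in N$ to both arguments, the fibers of $l$ are literally $N$-orbits, which is a one-line observation; together with the computation that $l(M)$ is the $S_5$-orbit of $(0,1,1,1,1)$, this finishes the proof since $W(D_5)=S_5\ltimes N$. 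You instead work with the lift $\phi(\{\alpha,\beta\})=e_\alpha+e_\beta\in\h^*$, which does not lose sign information (so the ten clutters map bijectively to the ten weights $\pm\varepsilon_i$ of $V$), and you then must verify fiber-transitivity by hand: the four pairs $\{(j5),(j)\}$ in $\phi^{-1}(\varepsilon_5)$ are permuted transitively by $S_4\subset\mathrm{Stab}(\varepsilon_5)$. The computations check out. The paper's version is slicker because the $N$-orbit fact is free; your version is more explicit about what the enclosing orbit actually is (the $40$-element set $X$), and your closing remark that $M$ itself is not $W(D_5)$-stable — so the orbit is strictly larger than $M$ — is a useful point that the paper leaves implicit.
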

\begin{proof}
We use $W(D_5)$-equivariant identification $\psi:\rE\cong N$. The sum 
\begin{equation}\label{E:mapl}
\begin{split}
&l:\Sym^2\rE\rightarrow N\\
& l(\alpha,\beta)=\psi(\alpha)+\psi(\beta)
\end{split}
\end{equation}
defines $W(D_5)$-map of sets. Fibers of $l$ are $N$-orbits.   The set $l(M)$ is \[\{(0,1,1,1,1),(1,0,1,1,1)(1,1,0,1,1)(1,1,1,0,1)(1,1,1,1,0)\}.\] It is an $S_5$-orbit.
\end{proof}

The graph $\rE$ has group theoretic and Lie-algebraic interpretations. 
The group theoretic approach gives a non-directed graph. Its based on the following    general construction.
\begin{definition}\label{D:graph}
Let $X$ be a set equipped with a  group action. Suppose the group $G$ is generated by $s_1,\dots,s_k$.
The set of vertices of  graph $Q(X,s_1,\dots,s_k)$ is $X$. Two vertices $a\neq b$ in $X$ are connected by an edge in $Q(X,s_1,\dots,s_k)$ if $b=s_ia$ for some $s_i$. In case $b=s_ia$ for two values of $i$  we still have one connecting edge.
\end{definition}
\begin{proposition}
The graph $Q(S_5\ltimes N/S_5,s_1,\dots,s_5)$ is isomorphic to a non-directed version of (\ref{P:kxccvgw1}).
\end{proposition}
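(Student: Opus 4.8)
The plan is to exhibit an explicit isomorphism between the abstract graph $Q(S_5\ltimes N/S_5, s_1,\dots,s_5)$ and the underlying undirected graph of the Hasse diagram (\ref{P:kxccvgw1}), using the $W(D_5)$-equivariant bijection $\psi\colon \rE \cong N$ together with the identification $S_5\ltimes N/S_5 \cong N$ already introduced. Since both graphs have $\rE$ (equivalently $N$, equivalently the coset space) as vertex set, it suffices to check that the two edge relations coincide: a pair $\{a,b\}$ with $a\neq b$ is an edge in $Q$ iff $b = s_i a$ for some $i\in\{1,\dots,5\}$, and I must show this happens iff $a$ and $b$ are joined by an (undirected) edge of (\ref{P:kxccvgw1}), i.e. one covers the other in the poset $\rE$.

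First I would spell out the action of each Coxeter generator on $N \cong (\Z_2)^4 \subset (\Z_2)^5$ in the coordinates $(\epsilon_1,\dots,\epsilon_5)$ subject to $\sum \epsilon_i \equiv 0$: the generators $s_1 = \sigma_{12}$, $s_3 = \sigma_{23}$, $s_4 = \sigma_{34}$, $s_5 = \sigma_{45}$ act by transposing adjacent coordinates, while $s_2 = r_{(1,1,0,0,0)}$ acts by adding $(1,1,0,0,0)$. Then, translating the dictionary $(0)\leftrightarrow(0,0,0,0,0)$, $(ij)\leftrightarrow$ the vector with $1$'s in positions $i,j$, and $(k)\leftrightarrow$ the vector with $0$'s only in position $k$, I would go through the sixteen vertices and determine for each $a\in\rE$ which vertices $s_i a$ are reachable. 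For instance $s_2\cdot(0) = (12)$; $\sigma_{23}\cdot(12) = (13)$; $s_2\cdot(ij)$ flips the first two coordinates, which for $(12)$ gives $(0)$ and for a generic $(ij)$ with $\{i,j\}\cap\{1,2\}$ of size $1$ produces another weight, etc. This bookkeeping reproduces exactly the covering relations drawn in (\ref{P:kxccvgw1}): the rightward and downward arrows there are precisely the images under a single $s_i$. One should double-check the degenerate cases where $b = s_i a$ for two different $i$ (the Definition \ref{D:graph} clause about a single edge) and confirm these correspond to the commutative squares / coincidences already visible in the diagram — these are governed by the ten clutters $M$ of (\ref{E:Enoncomp}), where two generators carry $a$ to the same neighbor along different sides of a square.

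An alternative, cleaner route that avoids a sixteen-vertex case check: use that $S$ is minuscule, so $\rE$ is a single $W(D_5)$-orbit and the poset $\rE$ is (by the theory of minuscule posets, cf. \cite{SMT}) the Bruhat order restricted to the coset representatives, whose covering relations are exactly the pairs $\{w, s_i w\}$ differing in length by one. The graph $Q$ is by definition the graph of the relations $b = s_i a$; what needs checking is that for $a \neq b$ in a minuscule orbit, $b = s_i a$ forces $a$ and $b$ to be Bruhat-comparable with the one covering the other (no "long" jumps), which is the standard fact that in a minuscule representation each simple reflection either fixes a weight or moves it to an adjacent weight in the weight poset. Granting that, the edge sets of $Q$ and of the undirected (\ref{P:kxccvgw1}) literally agree.

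The main obstacle, and the only genuinely non-formal point, is the degeneracy clause: ensuring that whenever $s_i a = s_j a \neq a$ for $i\neq j$ one still gets a single edge matching the diagram, and more importantly that one never gets an edge of $Q$ that is \emph{not} a covering edge of $\rE$ (e.g. an accidental coincidence $s_i a = b$ with $a, b$ at distance $2$). This is exactly where the minuscule property — equivalently, the explicit computation above showing every $s_i$ moves a weight by at most one step — does the work, so I would present the weight-by-weight verification (or cite the minuscule-poset statement from \cite{SMT}) as the crux, with the rest being routine translation through $\psi$.
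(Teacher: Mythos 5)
Your proposal is correct and does exactly the ``straightforward exercise'' that the paper leaves unwritten (the paper's proof is literally the phrase ``Straightforward exercise''): translate the poset $\rE$ into $N\cong(\Z_2)^4$ via the dictionary of Section~\ref{S:finite}, compute the action of the five Coxeter generators, and match edges against the covering relations of (\ref{P:kxccvgw1}), or alternatively cite the minuscule-poset fact that each $s_i$ fixes a weight or moves it by a single simple root. Both routes you sketch are sound, and your flagging of the degeneracy clause in Definition~\ref{D:graph} as the one non-formal point is apt.
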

\begin{proof}
Straightforward exercise.
\end{proof}

Left multiplication on $v_i$ and contraction with $v^*_j$ define operators in $\Lambda W$. By abuse of notations we denote these operators by $v_i,v^*_j$. Operators $v_iv_j,v_iv^*_j, v^*_iv^*_j, 1$ satisfy commutation relation of $\so_{10}\times \C$ (cf. \cite{ChevalleySpinors}). 
Operators $v_iv^*_i,1, i=1,\dots,5$ define the action of  Cartan subalgebra $\h\subset\gl_5\subset \so_{10}\times \mathbb{C}$ in $S$. 
 Operators in $S$ corresponding to simple positive root vectors $e_i,i=1,\dots,5$ in $\so_{10}$ are 
 \begin{equation}\label{E:R1}
 R^+_2x= v_1v_2x
 \end{equation}
  and 
  \begin{equation}\label{E:R25}
  R^+_ix= v_iv^*_{i-1}x,\quad  i=1,3,4,5.
  \end{equation}

We connect  $\alpha,\alpha'\in \rE$  by a directed edge if    $R^{+}_i\theta_{\alpha}$ is proportional  to $\theta_{\alpha'}$ for some $i$ . 
The choice of $\h$ and $\{e_i\}$ defines a triangular decomposition
\begin{equation}\label{E:so10decomp}
\n_-+\h+\n_+=\so_{10}
\end{equation}
\begin{proposition}
Directed graph obtained this way is isomorphic to $\rE$.
\end{proposition}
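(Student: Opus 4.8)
The claim is that the directed graph on $\rE$ built from the simple raising operators $R^+_i$ of $\so_{10}$ (drawing an arrow $\alpha\to\alpha'$ whenever $R^+_i\theta_\alpha$ is a nonzero multiple of $\theta_{\alpha'}$) coincides with the Hasse diagram $\rE$ of (\ref{P:kxccvgw1}). The plan is to verify this by a direct computation using the explicit fermionic Fock model $S=\bigoplus_i\Lambda^{2i}W$ together with the explicit formulas (\ref{E:R1})--(\ref{E:R25}) for the raising operators, and then check that the resulting edges are exactly the covering relations in (\ref{P:kxccvgw1}).

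First I would record, for each basis vector $\theta_{(0)},\theta_{(ij)},\theta_{(k)}$, the effect of the five operators $R^+_2 x=v_1v_2x$ and $R^+_i x=v_iv^*_{i-1}x$ for $i=1,3,4,5$, using the left-multiplication/contraction action on $\Lambda W$. For instance $R^+_2\theta_{(0)}=v_1v_2\cdot 1=\theta_{(12)}$, giving the edge $(0)\to(12)$; $R^+_3\theta_{(12)}=v_3v_2^*(v_1\wedge v_2)=\pm\, v_1\wedge v_3=\pm\theta_{(13)}$, giving $(12)\to(13)$; and on the top stratum $\theta_{(k)}=v_1\wedge\cdots\hat v_k\cdots\wedge v_5$ one computes $v_iv_{i-1}^*\theta_{(k)}$ to produce the edges among the $(k)$'s and from the $(ij)$'s of the form $(i5)\to\dots$. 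The output of this bookkeeping is a list of ordered pairs; I would then compare it termwise with the arrows drawn in (\ref{P:kxccvgw1}), checking both that every $R^+_i$-edge appears there and that every covering relation of (\ref{P:kxccvgw1}) is realized by some $R^+_i$. Whenever $R^+_i\theta_\alpha=0$ or lands outside the span of a single $\theta_{\alpha'}$ (it never does, since $S$ is minuscule, so each $R^+_i$ sends a weight vector to a multiple of a weight vector or to zero), there is simply no edge.

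A cleaner, essentially computation-free route is available and I would present it as the main argument: since $S$ is minuscule (stated above, following \cite{SMT}), the weight $H$-spaces are one-dimensional and the weight poset of $S$ under the dominance order generated by subtracting simple roots is, by the standard theory of minuscule representations, canonically isomorphic to the poset of the Weyl-group coset $W(D_5)/W_P$ with its Bruhat order, and an arrow for the simple reflection/root $\alpha_i$ is present exactly when the two weights differ by $\alpha_i$. This is precisely the combinatorics encoded by the Coxeter generators $s_1,\dots,s_5$ chosen above, so the $R^+_i$-graph is the covering graph of that coset poset. On the other hand, the preceding proposition in this section identifies the \emph{undirected} graph $Q(S_5\ltimes N/S_5,s_1,\dots,s_5)$ with the undirected version of (\ref{P:kxccvgw1}); since the minuscule weight poset has a unique maximal element (the highest weight, here $(1)$) and a unique minimal element ($(0)$), the direction of each edge is forced, so the directed $R^+_i$-graph is exactly (\ref{P:kxccvgw1}). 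I would finish by matching the labelling: the highest weight vector is $\theta_{(1)}$ (killed by all $R^+_i$), $\theta_{(0)}$ is the lowest, and the identification of intermediate labels $(ij),(k)$ with the elements of $N\subset(\Z_2)^5$ is the one fixed in (\ref{E:weights}) and in the discussion of $S_5\ltimes N/S_5$ above, under which $s_i\theta_\alpha$ corresponds to applying the transposition or the sign-flip $r_{(1,1,0,0,0)}$ to the $\{0,1\}$-array.

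The only real obstacle is sign/orientation bookkeeping: one must be sure that the arrows point from lower to higher weights consistently (i.e. that $R^+_i$ genuinely \emph{raises}, matching the downward-right reading of (\ref{P:kxccvgw1})), and that no spurious edge is missed on the middle stratum $\Lambda^2W$ where two different simple operators can act on the same $\theta_{(ij)}$ — but by Definition \ref{D:graph}'s convention such a coincidence still yields a single edge, so this is harmless. Given the minuscule structure and the already-established undirected isomorphism, the argument reduces to this orientation check plus reading off the two extremal weights, so I expect the proof to be short (indeed, as written, "straightforward exercise" is a fair summary, but I would spell out the minuscule-poset reduction and the endpoint identification to make it rigorous).
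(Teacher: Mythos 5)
Your proposal is correct, and in fact it gives two arguments: a brute-force check of $R^+_i\theta_\alpha$ for every edge, which is what the paper's one-line proof (``Direct inspection'') really means, and a conceptual reduction to minuscule weight-poset combinatorics plus the preceding proposition identifying the undirected graph with $Q(S_5\ltimes N/S_5,s_1,\dots,s_5)$.

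The second route is genuinely different from what the paper does, and it is worth contrasting. What you buy is a proof that needs no entry-by-entry verification: once you know $S$ is minuscule, the weight poset of $S$ (arrows given by adding a simple root) is canonically the Bruhat order on $W(D_5)/W_P$, and the $Q$-graph of the preceding proposition is exactly its undirected cover graph; so the two undirected graphs match for structural reasons. The one place where a little more care is needed is your claim that ``the direction of each edge is forced'' by the existence of a unique minimum $(0)$ and maximum $(1)$. On its own that statement is not quite a theorem about arbitrary undirected graphs; the correct justification here is that both posets are \emph{graded} — each $R^+_i$ raises the $\rht$-function of Definition~\ref{D:rht} by exactly $1$, and the picture (\ref{P:kxccvgw1}) is likewise ranked — so once the unique minimum is identified the rank function is determined and hence every Hasse edge's orientation is determined. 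With that one sentence added, the argument is airtight and avoids all the sign bookkeeping in the operators $R^+_i=v_iv^*_{i-1}$, $R^+_2=v_1v_2$. The paper's own proof is the computational route; your structural alternative is a nice addition but is not what the author intended by ``direct inspection.''
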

\begin{proof}
Direct inspection.
\end{proof}

In the following we shall not make a distinction between directed graphs with without directed cycles and partially ordered sets they define.

  The points of the affine cone defined by equation 
\begin{equation}\label{E:pure}
\Gamma^{\s}\overset{\ddef}{=}\sum_{\alpha\beta\in \rE}\Gamma^{\s}_{\alpha\beta}\lambda^{\alpha}\lambda^{\beta}=0 \quad \s\in \rG
\end{equation} are called pure spinors. It is know \cite{Cartan} that $Proj(A)=\OGr^+(5,10)$ where 
\[A=P/(\Gamma^{\s}) \quad \s\in \rG\]

According to \cite{Berkovits} and  \cite{CortiReid} $\Gamma^{i}$ and  $\Gamma^{i^*}$ are respectively 
\begin{equation}\label{E:relexpl}
\begin{split}
&\lambda p_{i}-\Pf_{i}(w)=0\quad {i}=1,\dots,5\\
&wp=0.
\end{split}
\end{equation}
Last line contains five equations, written in a matrix form.
The function  $\Pf_i(w)$ is the pfaffian of  the complement of $i$-th row and column in $w$.

\section{Proof of the straightened law for algebra $A$}\label{S:str}
Formally this section contains no new results. The goal we pursue here is to familiarize the reader with simplified version of the proof of the main statements in a well studied setting with an emphasis on specifics of a spinor representation.

\begin{proposition}\label{E:degreenull}\cite{SMT} 
\renewcommand{\theenumi}{\roman{enumi}}
\begin{enumerate}
\item Each of the equations (\ref{E:relexpl}) and their expanded version  (\ref{E:equations}) contains a unique clutter.
\item More precisely  equations have  the form 
\begin{equation}\label{E:reluniv}
\lambda^{\alpha}\lambda^{\alpha'}\pm\lambda^{\alpha\vee\alpha'}\lambda^{\alpha\wedge\alpha'}=\sum_{\gamma< \alpha\wedge\alpha',  \gamma'>  \alpha\vee\alpha'} \pm\lambda^{\gamma}\lambda^{\gamma'}
\end{equation}
\end{enumerate}
\end{proposition}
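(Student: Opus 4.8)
The plan is to verify Proposition \ref{E:degreenull} by a direct but structured inspection of the relations (\ref{E:relexpl}), organized around the lattice structure of $\rE$ and the list $M$ of clutters in (\ref{E:Enoncomp}). First I would recall that the spinor $\theta$ has components $\lambda, w_{ij}, p_k$ with the identification $\lambda=\lambda^{(0)}$, $w_{ij}=\lambda^{(ij)}$, $p_k=\lambda^{(k)}$, so that the relations $\lambda p_i - \Pf_i(w)=0$ and the matrix equation $wp=0$ become explicit quadratic polynomials in the $\lambda^\alpha$. Expanding each $\Pf_i(w)$ and each entry of $wp$ gives the ``expanded version'' (\ref{E:equations}). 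For statement (i), I would go through these eleven equations (five pfaffian relations, and the entries of the skew matrix $wp$, which after the obvious symmetry considerations reduce to the relevant distinct ones) and match each monomial appearing to a pair of vertices of $\rE$, checking which pairs are comparable (in the order of (\ref{P:kxccvgw1})) and which are incomparable. The incomparable pairs are exactly the clutters, so the claim is that in each equation precisely one monomial $\lambda^\alpha\lambda^{\alpha'}$ has $\alpha,\alpha'$ incomparable, i.e.\ $(\alpha,\alpha')\in M$; by Proposition \ref{E:transitiveM} it suffices in principle to do this for one representative and transport by $W(D_5)$, but since there are only ten clutters it is just as quick to list them all and see that each of the eleven equations contains exactly one.

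For statement (ii) I would argue as follows. Fix an equation and its unique clutter $\lambda^\alpha\lambda^{\alpha'}$ with $\alpha,\alpha'$ incomparable. Because $\rE$ is a lattice (as asserted right before (\ref{E:Enoncomp})), the join $\alpha\vee\alpha'$ and meet $\alpha\wedge\alpha'$ exist and are distinct from both $\alpha$ and $\alpha'$, hence $\lambda^{\alpha\vee\alpha'}\lambda^{\alpha\wedge\alpha'}$ is a product of two \emph{comparable} elements --- a standard monomial after reordering. The content of (\ref{E:reluniv}) is then that, up to sign, the relation reads $\lambda^\alpha\lambda^{\alpha'} \pm \lambda^{\alpha\vee\alpha'}\lambda^{\alpha\wedge\alpha'} = (\text{terms }\lambda^\gamma\lambda^{\gamma'}\text{ with }\gamma<\alpha\wedge\alpha',\ \gamma'>\alpha\vee\alpha')$. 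I would check the sign and the shape of the right-hand side by examining the one pfaffian relation $\lambda p_5 - \Pf_5(w)=0$ and one entry of $wp=0$ concretely: in the pfaffian, the clutter and the ``straightening term'' are literally the two terms of $\lambda p_5$ versus the appropriate $2\times 2$ complementary minor, and the remaining pfaffian terms are monomials $w_{ij}w_{kl}$ whose index sets one verifies lie strictly below the meet and strictly above the join; then transport to the other pfaffians and matrix entries by the $W(D_5)$-action, which permutes $M$ transitively and is compatible with $\vee,\wedge$ and with the order. The key algebraic input that makes the right-hand monomials land in the stated range is the Fierz identity (Proposition \ref{P:Fierz}) --- or at this finite level simply the combinatorics of complementary pfaffian index sets --- so I would phrase the verification so that it reduces to: every monomial other than the clutter and its straightened partner pairs an element below $\alpha\wedge\alpha'$ with an element above $\alpha\vee\alpha'$.

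Concretely the bookkeeping step is this: for each of the ten pairs in $M$ compute $\alpha\vee\alpha'$ and $\alpha\wedge\alpha'$ from diagram (\ref{P:kxccvgw1}), identify which relation among (\ref{E:relexpl}) that clutter sits in, and confirm the remaining monomials of that relation are of the form $\lambda^\gamma\lambda^{\gamma'}$ with $\gamma<\alpha\wedge\alpha'$, $\gamma'>\alpha\vee\alpha'$. For instance the clutter $((45),(4))$ has meet $(45)$ pushed down and join... here I would simply tabulate rather than belabor it in prose. Since $S$ is minuscule and $W(D_5)$ acts transitively on $\rE$ and on $M$, all ten cases are $W(D_5)$-translates of one, so a single worked case plus an equivariance remark suffices; I would nonetheless include the table for the reader.

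The main obstacle I expect is purely organizational rather than conceptual: making sure the sign in (\ref{E:reluniv}) is pinned down consistently (the $\pm$ in front of the straightened term versus the $\pm$ inside the sum) under the chosen identification of basis vectors $\theta_{(ij)}=v_i\wedge v_j$ and the chosen orientation conventions for the pfaffians $\Pf_i(w)$, and checking that the $W(D_5)$-transport really does carry signs correctly (it acts projectively on $S$, so one must be slightly careful), and that no relation secretly contains a second incomparable pair hidden by an index coincidence. Everything else --- lattice computations of $\vee$ and $\wedge$, and reading off which monomials lie strictly below the meet or strictly above the join --- is routine from the explicit diagram (\ref{P:kxccvgw1}).
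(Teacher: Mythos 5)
Your proposal is correct in substance, but it follows the \emph{first} of the two proofs the paper sketches (the direct ``brute force'' inspection of (\ref{E:equations}), with underbracket/overbracket marking the clutter and the $\alpha\wedge\alpha',\alpha\vee\alpha'$ term, which the paper explicitly ``leaves to the reader''), whereas the proof the paper actually develops in detail is the \emph{second}, representation-theoretic one: decompose $S=S^+ \oplus S^-$ under $\so_8\subset\so_{10}$, observe that $\Gamma^{\pm}$ restrict to the invariant inner products on $S^\pm$, use triality to identify the Hasse diagram of $S^\pm$ with that of the vector representation $V^8$ (for which the unique-clutter statement is elementary), and then transport to all ten quadrics by the transitive $W(D_5)$-action on $M$ from Proposition \ref{E:transitiveM}. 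The paper goes to this trouble not because the brute-force check is inadequate for $\rE$, but because the $\so_8$-branching argument is the one that generalizes verbatim to Proposition \ref{P:rela} in the affine case, where an exhaustive monomial check is no longer an option. Your $W(D_5)$-transport shortcut is a good instinct and overlaps with the paper's use of Weyl-group transitivity, but you do not isolate the $\so_8$ reduction that does the real work there.

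A few small corrections to your write-up: there are ten equations, not eleven --- $w$ is skew $5\times 5$ and $p$ is a column vector, so $wp=0$ is five scalar equations, not a skew matrix with redundancy, giving $5+5=10$ total (matching $\#\rG=10$ and $\#M=10$). The Fierz identity (Proposition \ref{P:Fierz}) plays no role in this proposition; it enters only later, in the Diamond-Lemma/Buchberger argument for Proposition \ref{P:quadbasis}, and invoking it here is a red herring. Finally, your concern about signs under the projective $W(D_5)$-action is legitimate and is precisely one of the things the paper's second proof sidesteps: the paper uses $W(D_5)$ only to permute clutters and relations (i.e.\ to establish the one-to-one correspondence and count), not to transport coefficients, so it never has to track the cocycle; if you want to transport the full straightened identity (\ref{E:reluniv}) $W(D_5)$-equivariantly you do need to check that the projective factors cancel in each monomial, which is an extra lemma your sketch would have to supply.
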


\begin{proof}
We have already mentioned  that several  proofs that work in much more general context are known (see Introduction), we give one proof here with a purpose of generalization in Proposition \ref{P:rela}.

Quadrics (\ref{E:relexpl}) can be written in the expanded form:
\begin{equation}\label{E:equations}
\begin{split}
&\Gamma^{1}=\lambda p_{1} + \underbracket{w_{2, 5} w_{3, 4}} - \overbracket{w_{2, 4} w_{3, 5}} + w_{2, 3} w_{4, 5}\sim 0\\
&\Gamma^{2}=-\lambda  p_{2}-\underbracket{w_{1,5} w_{3,4}}+\overbracket{w_{1,4} w_{3,5}}-w_{1,3} w_{4,5}\sim 0\\
&\Gamma^{3}=\lambda  p_{3}+\underbracket{w_{1,5} w_{2,4}}-\overbracket{w_{1,4} w_{2,5}}+w_{1,2} w_{4,5}\sim 0\\
&\Gamma^{4}=-\lambda  p_{4}-\underbracket{w_{1,5} w_{2,3}}+\overbracket{w_{1,3} w_{2,5}}-w_{1,2} w_{3,5}\sim 0\\
&\Gamma^{5}=\lambda  p_{5}+\underbracket{w_{1,4} w_{2,3}}-\overbracket{w_{1,3} w_{2,4}}+w_{1,2} w_{3,4}\sim 0\\
&\Gamma^{1^*}=-p_{2} w_{1, 2} + p_{3} w_{1, 3} - \overbracket{p_{4} w_{1, 4}} + \underbracket{p_{5} w_{1, 5}}\sim 0\\
&\Gamma^{2^*}=-p_{1} w_{1, 2} + p_{3} w_{2, 3} - \overbracket{p_{4} w_{2, 4}} + \underbracket{p_{5} w_{2, 5}}\sim 0\\
&\Gamma^{3^*}=-p_{1} w_{1, 3} + p_{2} w_{2, 3} - \overbracket{p_{4} w_{3, 4}} + \underbracket{p_{5} w_{3, 5}}\sim 0\\
&\Gamma^{4^*}=-p_{1} w_{1, 4} + p_{2} w_{2, 4} - \overbracket{p_{3} w_{3, 4}} +\underbracket{p_{5} w_{4, 5}}\sim 0\\
&\Gamma^{5^*}=-p_{1} w_{1, 5} + p_{2} w_{2, 5} - \overbracket{p_{3} w_{3, 5}} + \underbracket{p_{4} w_{4, 5}}\sim 0
\end{split}
\end{equation}
A brute force approach to the proof requires verification conditions of the proposition for all equations (\ref{E:equations}).   Monomials marked by underbracket $\underbracket{\ \ }$ are clutters. Monomials  $\lambda^{\alpha\wedge\beta}\lambda^{\alpha\vee\beta}$ are marked by overbracket $\overbracket{\ \ }$.  We leave verifications to the reader.

There is a less computational but lengthier  proof, based on  representation theory. We include it here as a prototype of the proof in the affine case.  

Spinors  decompose under $\so_8\subset \so_{10}$ into a direct sum 
\begin{equation}\label{E:spinorsplit}
S=S^++S^-
\end{equation} of eight-dimensional spinor representations of opposite chiralities (see e.g. \cite{BVinbergALOnishchik}). We choose  Cartan subalgebra   $\h\subset \so_{8}\times \so_{2}\subset \so_{10}$.  $\Gamma$ maps weight vectors $\theta^{\alpha}\theta^{\alpha'}$ in the space of symmetric tensors $ \Sym^2S$ to weight vectors $\Gamma(\theta^{\alpha},\theta^{\alpha'}) \in V$. 
A pair of vectors  $v_+=v_i,v_-=v_{i^*}\in V$  (\ref{E:basisvec}) is  invariant with respect to some  $\so_{8}\subset \so_{10}$ action and get  scaled under action of $\so_{2}\subset \so_{10}$. The remaining  $\{v_j,v_{j^*}|j\neq i\}$ span eight-dimensional vector representation of $\so_{8}$.
 Quadratic functions $\Gamma^{+}=\Gamma^{i},\Gamma^{-}=\Gamma^{i^*}$ define nonzero $\so_{8}$-invariant inner products  $(\cdot,\cdot)_{\pm}$ on  $S^{\pm}$. Note that $\so_2$-scaling of $v_{\pm}$ insures (cf. \ref{E:gammadef}) that  $\Gamma^{+}|_{S^{-}}=\Gamma^{-}|_{S^{+}}=0$. $\Gamma^{\pm}$ are  unique up to a factor (see \cite{BVinbergALOnishchik}). 

Let 
\begin{equation}\label{E:trso8}
\n'_-+\h+\n'_+=(\n_+\cap \so_{8})+(\h\cap \so_{8})+(\n_-\cap \so_{8})
\end{equation} 
the  decomposition of $\so_{8}$ compatible with (\ref{E:so10decomp}).
 Let $\gamma_{\alpha,\alpha'}$ be a path in $\rE$ that connects $\alpha$ and $\alpha'$. It is easy to see that such path exists if and only if  there is an element $u$ in universal enveloping algebra $U(\n_+)$ such that $u\theta^{\alpha}$ is proportional to $\theta^{\alpha'}$.  
 
 We decompose $\rE$ into a union $\rE^{+}\cup \rE^{-}$, with  $\rE^{\pm}=\{\alpha\in \rE|\theta^{\alpha}\in S^{\pm}\}$.  A poset structure on $\rE^{\pm}$ is defined as follows: a directed path $\gamma_{\alpha,\alpha'}$ connects $\alpha, \alpha' \in \rE^{\pm}$ if there $u'\in U(\n_{+}')$ with $u\theta^{\alpha}\sim\theta^{\alpha'}$. In this case we define $\alpha$ to be less or equal to $\alpha'$.  
 
  Simple positive root vectors in $\so_{8}$ with respect to decomposition (\ref{E:trso8})  are $v_2v^{*}_1,v_3v^{*}_2,v_4v^{*}_3, v_1v_2 $ if $v_+=v_5,v_-=v_{5^*}$. Their action on weight vectors in $S^{\pm}\subset S$, which are also $\so_{10}$ weight vectors,  produce the following subdiagram in $\rE$ 
   \[\rE^+: (0) \rightarrow (12)  \rightarrow (13) \begin{array}{c}\nearrow  \begin{array}{c} (23) \\ \text{ } \end{array} \searrow \\ \searrow   \begin{array}{c} \text{ }\\(14)  \end{array} \nearrow \end{array}(24)  \rightarrow (34) \rightarrow (5)  \]
 \[\rE^-: (15) \rightarrow (25)  \rightarrow (35) \begin{array}{c}\nearrow  \begin{array}{c} (4) \\ \text{ } \end{array} \searrow \\ \searrow   \begin{array}{c} \text{ }\\(45)  \end{array} \nearrow \end{array}(3)  \rightarrow (2) \rightarrow (1)  \]
  
  One of the triality automorphisms of  $\so_{8}$ that conjugates $S^{\pm}$ with the defining representation $V^8$ preserves decomposition (\ref{E:trso8}), shuffles simple roots, and conjugate weight vectors and quadratic forms. This is why Hasse diagrams of $S^{\pm}$ and $V^8$ are the same. Hasse diagrams of defining representations of $\so_{2n}$ has been studied in \cite{Movq}. When $n=4$ such diagram is 
  \[ (4) \rightarrow(3)  \rightarrow ({2}) \begin{array}{c}\nearrow  \begin{array}{c} (1) \\ \text{ } \end{array} \searrow \\ \searrow   \begin{array}{c} \text{ }\\(1^*)  \end{array} \nearrow \end{array}(2^*)  \rightarrow (3^*) \rightarrow (4^*)  \]
  
Quadratic form    on $V^8$
  $\sum_{i=1}^4x_ix^{i^*}$ in the basis $(v_{\s})$ contains a unique clutter $x_1x^{1^*}$. Therefore the conjugated $\Gamma^+$ contains a unique  clutter $\lambda^{14}\lambda^{23}$ and   the conjugated $\Gamma^-$ contains a unique clutter $\lambda^{4}\lambda^{45}$.

  The group $W(D_5)$ acts transitively on weights of $V$. By Proposition (\ref{E:transitiveM})  all clutters belong to one $W(D_5)$-orbit. The Weyl group shuffles the quadrics $\Gamma^{\bm{s}}$. The $w$-transformed quadric $\Gamma^{\bm{s}}=(\Gamma^{+})^{ w}, w\in W(D_5)$  contains any given clutter. Ten - the number of quadrics coincides with the number of clutters. A quadric can contain at least one clutter, because an $\rE$-clutter  is an $\rE^{\pm}$-clutter, which is unique as we saw above. Thus, each $\Gamma^{\bm{s}}$ contains a unique $\rE$-clutter.
  
\end{proof}

\begin{corollary}
There is a one-to-one correspondence between a set of clutters $(\alpha,\alpha')\in M\subset  \rE\times \rE$ and quadrics  $\Gamma^{\bm{s}},\bm{s}\in \rG$. 
\end{corollary}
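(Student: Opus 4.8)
The plan is to deduce this corollary directly from the two facts just established: Proposition~\ref{E:transitiveM} says the ten clutters in $M$ form a single $W(D_5)$-orbit in $\Sym^2\rE$, and Proposition~\ref{E:degreenull} (together with its proof) says that each quadric $\Gamma^{\bm{s}}$, $\bm{s}\in\rG$, contains exactly one clutter $\lambda^{\alpha}\lambda^{\alpha'}$ with $(\alpha,\alpha')\in M$. So I would define the map $\Phi\colon \rG \to M$ sending $\bm{s}$ to the unique clutter appearing in $\Gamma^{\bm{s}}$; it is well-defined by Proposition~\ref{E:degreenull}(i). It remains to check $\Phi$ is a bijection, and since $|\rG|=10=|M|$ (both sets have ten elements: five pairs of the form $\{i,i^*\}$ on one side, and the ten pairs listed in~(\ref{E:Enoncomp}) on the other), it suffices to show $\Phi$ is either injective or surjective.

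For surjectivity I would use the $W(D_5)$-equivariance that is already implicit in the proof of Proposition~\ref{E:degreenull}: the Weyl group permutes the quadrics $\Gamma^{\bm{s}}$ (since $W(D_5)$ acts transitively on the weights of $V$, hence permutes the components of the intertwiner $\Sym^2 S\to V$), and it permutes the clutters in $M$ (Proposition~\ref{E:transitiveM}), and the assignment $\bm{s}\mapsto$ (clutter in $\Gamma^{\bm{s}}$) intertwines these two actions. Concretely: if $w\in W(D_5)$ and $w\cdot\Gamma^{\bm{s}}=\Gamma^{\bm{t}}$ up to scalar, then applying $w$ to the defining relation $\lambda^{\alpha}\lambda^{\alpha'}\pm\lambda^{\alpha\vee\alpha'}\lambda^{\alpha\wedge\alpha'}=\cdots$ for $\Gamma^{\bm{s}}$ turns the clutter $\lambda^{\alpha}\lambda^{\alpha'}$ into $\lambda^{w\alpha}\lambda^{w\alpha'}$, which must then be the (unique) clutter of $\Gamma^{\bm{t}}$, so $\Phi(\bm{t}) = w\cdot\Phi(\bm{s})$. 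Since $M$ is a single $W(D_5)$-orbit, starting from any one quadric (e.g. $\Gamma^{+}=\Gamma^{i}$, whose clutter $\lambda^{14}\lambda^{23}$ was identified in the proof above) we reach every element of $M$ in the image of $\Phi$. Hence $\Phi$ is surjective, and by the equality of cardinalities it is a bijection.

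As an alternative, cleaner route I would instead simply invoke the explicit list~(\ref{E:equations}): the underbracketed monomials there exhibit, by direct inspection, the clutter of each of the ten quadrics $\Gamma^{1},\dots,\Gamma^{5},\Gamma^{1^*},\dots,\Gamma^{5^*}$, and one checks these ten clutters are pairwise distinct and exhaust the list~(\ref{E:Enoncomp}) — that is, $\Gamma^{1}\leftrightarrow\{(23),(14)\}$ essentially\footnote{After the triality/Weyl matching of the proof; the underbracketed monomial in $\Gamma^1$ is $w_{25}w_{34}$, i.e.\ the clutter $((25),(34))$, etc.}, and so on down the list, giving the ten clutters $((25),(34)),((15),(34)),((15),(24)),((15),(23)),((14),(23))$ from the $\Gamma^i$ and the five clutters involving a $p$-variable from the $\Gamma^{i^*}$. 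Either way the correspondence is established.

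The main obstacle is bookkeeping rather than conceptual: one must be careful that the ``clutter of $\Gamma^{\bm{s}}$'' is genuinely well-defined (no quadric secretly contains two clutters — this is exactly Proposition~\ref{E:degreenull}(i), already proved) and that the $W(D_5)$-action on the index set $\rG$ of quadrics matches its action on weights of $V$ compatibly with the intertwiner~(\ref{E:gammamap}); this compatibility is what makes $\Phi$ equivariant. Once that is pinned down, the count $|\rG|=|M|=10$ closes the argument immediately. I would present the $W(D_5)$-equivariance proof as the main one, since it is the template that will be needed in the affine case, and mention the brute-force check against~(\ref{E:equations}) as a reassuring sanity check.
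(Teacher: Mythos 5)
Your argument is correct and follows essentially the same route as the paper: the map $\Phi\colon\rG\to M$ is well-defined by Proposition~\ref{E:degreenull}, its $W(D_5)$-equivariance (coming from the transitive Weyl action on the weights of $V$ and Proposition~\ref{E:transitiveM}) gives surjectivity, and the count $|\rG|=|M|=10$ finishes it — which is exactly the reasoning embedded in the last paragraph of the paper's proof of Proposition~\ref{E:degreenull}. The brute-force alternative via the underbracketed monomials in~(\ref{E:equations}) matches the paper's ``brute force'' remark as well, so both routes you mention are already in the text.
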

Poincar\'{e} series $A(t)=\sum_{n\geq 0}\dim A_kt^k$ has been determined in \cite{BerNek} using fixed point technique and in  \cite{CortiReid} using method of resolutions.  Our method relies on Gr\"{o}bner basis technique. It admits a straightforward generalization to the algebras $A_{\hdelta}^{\hdelta'}$ (see Section \ref{S:alghat}).

 Relations (\ref{E:reluniv}) provide us with a combinatorial method to compute $A(t)$.
 The method amounts to constructing  a basis  in $A_k\subset A$. Consequently $\dim A_k$ is exactly the number of elements in the basis.  Monomials $\lambda^{\bm{n}}=\prod_{\alpha \in \rE}(\lambda^{\alpha})^{n_{\alpha}}, \deg \lambda^{\bm{n}}=\overline{\bm{n}}=k$ define a basis in $P$, but because of relations in $A$ some inevitably become  redundant in $A_{k}$. Our goal is to construct a subset $B_k\subset \{\lambda^{\bm{n}}|\overline{\bm{n}}=k\}$, which becomes a basis in $A_{k}$. Any monomial $\lambda^{\bm{n}}$, because of the relations (\ref{E:reluniv}), is a linear combination of standard monomials (see Section \ref{S:Intro}).
  
We denote by $X_k$ the  set of standard monomials of degree $k$. The process of elimination of clutters using relations is called reduction. We associate a reduction with  each  $\Gamma^{\s}$ in (\ref{E:equations}).

The set $X_k$ might not be a basis.  Some monomials like $p_5w_{2,5}w_{3,4}$, which are called {\it obstructions},  admit several reductions: we can apply reduction $\Gamma^{2^*}$ to $p_5w_{2,5}$ or $\Gamma^1$ to $w_{2,5}w_{3,4}$.   

We shall use term obstruction also for a set of unordered triples $(\alpha,\beta,\gamma)\in \Sym^3 \rE$ such that $(\alpha,\beta), (\alpha,\gamma)\in M$. They encode obstructing monomials $\lambda^{\alpha}\lambda^{\beta}\lambda^{\gamma}$.
For technical purposes we need to modify slightly this definition.
\begin{definition}\label{D:obstruction}

An   (unordered) list  $(\gamma, (\alpha ,\beta))$ of elements in a poset $\rE$ is called a noncommutative obstruction if $(\alpha,\beta)$ is a clutter and if $(\alpha,\gamma)$ or $(\beta,\gamma)$ is a clutter. Thus noncommutative obstructions come in pairs:
\[\{(\gamma, (\alpha ,\beta)), (\alpha,(\beta,\gamma))\} \mbox{ or }  
\{(\gamma, (\alpha ,\beta)), (\beta,(\alpha,\gamma))\}
\]
By abuse of notations we shall also call a monomial $\lambda^{\gamma} \otimes  \lambda^{\alpha} \lambda^{\beta}$ a noncommutative  obstruction.
\end{definition}
If $(\gamma, (\alpha ,\beta))$ is a noncommutative obstruction, then $(\gamma, \alpha ,\beta)$ is an ordinary obstruction. A list (\ref{E:obstructionsord}) of pairs noncommutative obstructions in $\rE$ is given in Appendix \ref{A:obstr}. 

It is conceivable that different reductions of an obstruction  result in  distinct  non reducible expressions.
It would mean that we need to further reduce the set $X_k$.
Luckily, this is does not happen in our case.
There is a technology of Gr\"{o}bner bases (see \cite{Mora} for introduction and references), which has been specifically designed to deal with such issues. 
In order to apply it we  refine our partial order on $\rE$ to a strict total order:

\begin{equation}\label{E:order}
\begin{split}&(0) <(12)<(13)<(14)<(23)<(15)<(24)<(25)<\\
&<(34)<(35)<(5)<(45)<(4)<(3)<(2)<(1)
\end{split}
\end{equation}

We extend this order  to degree lexicographic order in the monomial basis in  $P$. 
 The tip $\rT(f)=lc(f)\lambda^{\bm{n}}$  is  the maximal monomial of $f\in P$, $lc(f)$ is the leading coefficient.
We write $f>g$ for polynomials $f,g$ with $\rT(f)>\rT(g)$.
The order is chosen so that reductions decrease it:
\[\rT(\Gamma^{\bm{s}})>\rT(\Gamma^{\bm{s}})-\Gamma^{\bm{s}}\]

The "Diamond Lemma" \cite{Bergman} guarantees that if results of any two reduction   of all (cubic) obstructions agree,  then  $X_k$ is a basis in $A_k$.
 To be precise we define $T(f,g)$ to be the least common multiple ($lcm$) of $\rT(f)$ and $\rT(g)$. 
\begin{definition}
The polynomial \[S(f,g)=lc(f)\frac{\rT(f,g)}{\rT(g)}g-lc(g)\frac{\rT(f,g)}{\rT(f)}f\]   is called $S$-polynomial of $f,g$.
\end{definition}
Let  $\lambda^{\bm{n}}$ be some cubic obstruction, $\Gamma^{\bm{s}},\Gamma^{\bm{s}'}$ be a pair of relations in (\ref{E:equations}) such that $lcm(\rT(\Gamma^{\bm{s}}),\rT(\Gamma^{\bm{s}'}))=\lambda^{\bm{n}}$, then $S(\Gamma^{\bm{s}},\Gamma^{\bm{s}'})$ contains no multiple of $\lambda^{\bm{n}}$: it  has been canceled in the subtraction. 

\begin{proposition}\label{P:Buchbergersimp}
If for any pair of generating relations  $\Gamma^{\bm{s}},\Gamma^{\bm{s}'}\in I\subset P$ (${\bm{s}},{\bm{s}'}\in \rG$) with $\rT(\Gamma^{\bm{s}},\Gamma^{\bm{s}'})\neq \rT(\Gamma^{\bm{s}})\rT(\Gamma^{\bm{s}'})$ the $S$-polynomials are  
\begin{equation}\label{E:bush}
S(\Gamma^{\bm{s}},\Gamma^{\bm{s}'})=\sum_{\alpha\in \rE}\sum_{{\bm{r}}\in \rG} c_{\alpha,{\bm{r}}}\lambda^{\alpha}\Gamma^{\bm{r}}
\end{equation} with no obstructions among monomials in $\lambda^{\alpha}\Gamma^{\bm{r}}$ , then the set $X_k$ is a basis in $A_k$.
\end{proposition}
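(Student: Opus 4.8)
The plan is to invoke the Bergman Diamond Lemma (already cited in the excerpt) in the form appropriate for noncommutative Gröbner bases of quotients of the polynomial ring $P$ by the ideal $I$ generated by the quadrics $\Gamma^{\bm{s}}$. The starting point is the remark preceding the statement: once we fix the total order (\ref{E:order}) and extend it degree-lexicographically, each relation $\Gamma^{\bm{s}}$ in (\ref{E:equations}) has a tip $\rT(\Gamma^{\bm{s}})$ equal to the unique clutter $\lambda^\alpha\lambda^{\alpha'}$ it contains (this uses Proposition \ref{E:degreenull}(ii), since the overbracketed term $\lambda^{\alpha\wedge\alpha'}\lambda^{\alpha\vee\alpha'}$ and the tail terms $\lambda^\gamma\lambda^{\gamma'}$ with $\gamma<\alpha\wedge\alpha'$, $\gamma'>\alpha\vee\alpha'$ are all strictly smaller in the chosen order). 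Hence the set of tips is exactly $\{\lambda^\alpha\lambda^{\alpha'}\mid(\alpha,\alpha')\in M\}$, and the standard (i.e. normal, tip-avoiding) monomials of degree $k$ are precisely the set $X_k$ of standard monomials defined via the poset order on $\rE$.

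First I would recall the precise statement of Buchberger's criterion / the Diamond Lemma: the reductions coming from the $\Gamma^{\bm{s}}$ are confluent — equivalently $X_k$ is a basis of $A_k$ for every $k$ — if and only if every ambiguity (overlap) resolves, i.e. for every pair $\Gamma^{\bm{s}},\Gamma^{\bm{s}'}$ whose tips have a nontrivial least common multiple, the $S$-polynomial $S(\Gamma^{\bm{s}},\Gamma^{\bm{s}'})$ reduces to zero modulo $I$ using the $\Gamma^{\bm{r}}$'s. Because all relations are quadratic and each tip is a squarefree degree-two monomial $\lambda^\alpha\lambda^{\alpha'}$, a nontrivial overlap occurs exactly when two clutters share one vertex, i.e. when we are in the situation of a noncommutative obstruction $(\gamma,(\alpha,\beta))$ of Definition \ref{D:obstruction}; the corresponding $lcm$ is the cubic monomial $\lambda^{\bm n}=\lambda^\gamma\lambda^\alpha\lambda^\beta$. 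So the set of ambiguities to check is exactly the list of pairs of noncommutative obstructions, which is tabulated in Appendix \ref{A:obstr} as (\ref{E:obstructionsord}). The content of the proposition is then: if for each such cubic obstruction the displayed $S$-polynomial has the shape (\ref{E:bush}) — a $P$-linear combination $\sum c_{\alpha,\bm r}\lambda^\alpha\Gamma^{\bm r}$ with \emph{no obstruction occurring among the monomials of any $\lambda^\alpha\Gamma^{\bm r}$} — then each $S$-polynomial reduces unambiguously to zero, so by the Diamond Lemma $X_k$ is a basis.

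The argument I would give is: the hypothesis exhibits $S(\Gamma^{\bm s},\Gamma^{\bm s'})$ as lying in $I$ with an explicit expression as a combination of "one step" reductions $\lambda^\alpha\Gamma^{\bm r}$; since by hypothesis none of the monomials appearing in these terms is itself an obstruction (the only monomials on which a reduction is non-confluent), every monomial that shows up can be reduced in an order-independent way, and each such reduction strictly decreases the degree-lexicographic order while the tip of $S(\Gamma^{\bm s},\Gamma^{\bm s'})$ is already strictly below $\lambda^{\bm n}$. Iterating, the rewriting of $S(\Gamma^{\bm s},\Gamma^{\bm s'})$ terminates at $0$ regardless of choices; this is precisely confluence of all ambiguities, so the Diamond Lemma (Bergman) yields that the normal monomials — exactly $X_k$ — form a basis of $A_k$. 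I would also note that the generating relations here are the $\Gamma^{\bm s}$, $\bm s\in\rG$, and that the finiteness of $\rG$ and of the obstruction list makes this a finite verification; no new generators appear because the overlaps are all cubic and resolve within degree three.

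The main obstacle — the place where the real work lives — is producing the expressions (\ref{E:bush}) and checking the ``no obstruction among the monomials'' condition for each of the ten-or-so pairs of noncommutative obstructions in (\ref{E:obstructionsord}). This is where the Fierz identities among the $\Gamma^{\bm s}_{\alpha\beta}$ (Proposition \ref{P:Fierz}) enter: the relation $S(\Gamma^{\bm s},\Gamma^{\bm s'})=\sum c_{\alpha,\bm r}\lambda^\alpha\Gamma^{\bm r}$ is an identity in the ideal $I$ that must be inherited from a quadratic identity among the structure constants, and it is not a priori obvious that the particular combination one gets avoids re-introducing a clutter in any $\lambda^\alpha\Gamma^{\bm r}$. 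I would handle this by working orbit-wise under $W(D_5)$ (using Proposition \ref{E:transitiveM}, that all clutters and hence all obstructions lie in few orbits), reducing the ten cases to essentially one or two representative overlaps, verifying those by a direct computation with the explicit quadrics (\ref{E:equations}), and then transporting by the Weyl group action that permutes the $\Gamma^{\bm s}$. The role of Proposition \ref{P:Buchbergersimp} itself is just the clean packaging of the Diamond Lemma for this setup; its proof is the two paragraphs above, and the substantive input is deferred to the obstruction list and the Fierz identity.
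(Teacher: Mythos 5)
Your proposal is correct and follows the same route as the paper: the paper's proof of Proposition~\ref{P:Buchbergersimp} is simply the two-sentence observation that this is Buchberger's criterion (equivalently the Diamond Lemma) adapted to the present generating set, with the hypothesis guaranteeing that the algorithm terminates after the initial pass. You spell out that argument in considerably more detail (tips are the clutters, overlaps are cubic, the no-obstruction condition makes the representation~(\ref{E:bush}) a standard representation forcing each $S$-polynomial to reduce to $0$), and correctly note that the substantive verification is deferred to Proposition~\ref{P:quadbasis} via the Fierz identities; all of this is consistent with the paper.
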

\begin{proof}
This is a trivial restatement of Buchberger algorithm for  a commutative Gr\"{o}bner basis (\cite{Mora}) adapted to our needs. More precisely this is a statement that the algorithm terminates at the first iteration. 
\end{proof}

The inner product (\ref{E:innerproduct}) can be used to raise and to lower the $\s$-index in the tensor $\Gamma^{\s}_{\alpha\beta}$. 

\begin{proposition}\label{P:Fierz}
Quadratic functions  $\Gamma^{\s}$ satisfy
\begin{equation}\label{E:Fierz}
I_{\alpha}=\sum_{\s\in\rG}\Gamma^{\s}_{\alpha\beta}\lambda^{\beta}\Gamma_{\s}=0
\end{equation}
\end{proposition}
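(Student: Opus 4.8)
The plan is to recognize the Fierz identity \eqref{E:Fierz} as the statement that a certain $\Spin(10)$-equivariant bilinear map vanishes, and hence to prove it by a representation-theoretic count rather than by manipulating the explicit quadrics \eqref{E:equations}. Fix $\alpha$; the left-hand side $I_\alpha = \sum_{\s\in\rG}\Gamma^\s_{\alpha\beta}\lambda^\beta\,\Gamma_\s$ is, as $\alpha$ varies, a collection of cubic polynomials in the $\lambda$'s, i.e. an element of $S\otimes \Sym^3 S$ (the free $\alpha$-index living in $S^\vee\cong S$, since $S$ is self-dual for $\Spin(10)$, up to the standard identification). More precisely, $\Gamma$ is the intertwiner $\Sym^2 S\to V$ of \eqref{E:gammamap}, its transpose/raised-index version is an intertwiner $V\to \Sym^2 S$, and contracting $V$ against $V$ by the invariant form \eqref{E:innerproduct} one sees that $I$ is the image of the canonical map built from $\Gamma$ and $\Gamma^\vee$,
\[
S\otimes \Sym^2 S \xrightarrow{\ \id\otimes\Gamma\ } S\otimes V \xrightarrow{\ \Gamma^\vee\otimes\id\ \text{+ contraction}\ } \Sym^2 S\otimes S \text{ (or } \Sym^3 S\text{)},
\]
restricted along $\Sym^3 S\hookrightarrow S\otimes\Sym^2 S$. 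So $I$ is a $\Spin(10)$-module map from $\Sym^3 S$ into $S\otimes V$ followed by the relevant projection, and to prove $I=0$ it suffices to show there is no copy of the relevant target irreducible inside $\Sym^3 S$ — or, more cheaply, to show the one weight-space computation that pins it down.

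The key steps, in order. First I would set up the identification $S\cong S^\vee$ and write $\Gamma^\vee_\s{}^{\alpha\beta}$ for the raised-index tensor, so that \eqref{E:Fierz} literally reads $\sum_\s \Gamma^\s_{\alpha\beta}\Gamma^\vee_{\s}{}^{\gamma\delta}\,\lambda^\beta\lambda^\gamma\lambda^\delta = 0$ after symmetrizing in $\beta,\gamma,\delta$; this exhibits $I_\alpha$ as the contraction of the $\Spin(10)$-invariant tensor $\sum_\s \Gamma^\s_{\alpha\beta}\Gamma^\vee_\s{}^{\gamma\delta}\in S^\vee\otimes S^\vee\otimes S\otimes S$ against $\lambda^{\otimes 3}$. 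Second, I would decompose $\Sym^3 S$ for $\mathfrak{so}_{10}$: the highest weight of $S$ is a spinor weight $\omega_5$, so $\Sym^3 S$ decomposes into $V(3\omega_5)\oplus V(\omega_3)\oplus\cdots$ (the precise list is classical — $V(3\omega_5)$, the module with highest weight $\omega_2+\omega_5$ or $\omega_3$ appearing once, etc.), and one checks that the target of the map $I$, namely the relevant summand of $S\otimes V = V(\omega_5+\omega_1)\oplus V(\omega_4)$ — wait, $S\otimes V$ decomposes as the other half-spinor times... I would instead compute $S\otimes V \cong V(\omega_4)\oplus V(\omega_1+\omega_5)$ using that $S$ has highest weight $\omega_5$ and $V$ has highest weight $\omega_1$, and observe that neither $V(\omega_4)$ (the opposite-chirality spinor $S^-$, $8$-... no, the other $16$) nor $V(\omega_1+\omega_5)$ occurs in $\Sym^3 S$; hence any $\Spin(10)$-map $\Sym^3 S\to S\otimes V$ is zero, giving $I=0$. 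Third, as a concrete cross-check (and the route closest in spirit to the rest of the paper, which prefers explicit Fierz-type identities), I would verify \eqref{E:Fierz} on a single highest-weight–type entry using the fermionic Fock model \eqref{E:spinordef}: write $\lambda = \sum_\alpha \lambda^\alpha\theta_\alpha$ with the $\theta_\alpha$ as in \eqref{E:zerorelations}, use the explicit form \eqref{E:relexpl}, \eqref{E:equations} of $\Gamma^i,\Gamma^{i^*}$, and check that $\sum_i\big(\Gamma^i_{\alpha\beta}\lambda^\beta\,\Gamma^{i^*} + \Gamma^{i^*}_{\alpha\beta}\lambda^\beta\,\Gamma^i\big)$ collapses — the Pfaffian expansion identities ($\sum_i \pm \lambda^i\Pf_i(w) = $ terms that cancel against $wp=0$ contractions) do the cancellation. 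By $W(D_5)$-transitivity on $\rE$ (the representation $S$ is minuscule, as noted after \eqref{E:split}) and the $\Spin(10)$-equivariance of \eqref{E:Fierz}, checking one value of $\alpha$ suffices.

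The main obstacle I expect is not conceptual but bookkeeping: getting the index-raising conventions and signs in $\Gamma^\vee$ consistent with the sign pattern in \eqref{E:equations} (note the alternating underbracket/overbracket signs there), so that the claimed cancellation is exact rather than off by a sign in one of the ten terms. The cleanest way to sidestep this is to lean on the representation-theoretic vanishing — once one knows $\mathrm{Hom}_{\Spin(10)}(\Sym^3 S,\, S\otimes V)=0$, no sign can be wrong because the whole map is forced to be zero; the explicit Fock-space computation then serves only to confirm that the map $I$ really is the $\Spin(10)$-equivariant one (i.e. that $\sum_\s\Gamma^\s_{\alpha\beta}\Gamma_\s$ is genuinely built from the invariant intertwiners and not, say, identically zero for a trivial reason), which is immediate from the definitions \eqref{E:gammadef}, \eqref{E:pure}.
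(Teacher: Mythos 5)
Your proof proposal is correct in substance, but it takes a genuinely different route from the paper. The paper's proof first observes that the 1-form $I_\alpha\,d\lambda^\alpha$ is (one quarter of) the differential of the quartic polynomial $I(\lambda)=\Gamma^\s\Gamma_\s=\sum_{i=1}^5\Gamma^i\Gamma^{i^*}$, which is $\Spin(10)$-invariant by construction; it then invokes Igusa's classification to conclude that $S$ has a dense $\Spin(10)$-orbit, so every polynomial invariant is constant, hence $I(\lambda)\equiv I(0)=0$ and $I_\alpha=0$. You instead argue directly that the tensor $I$, viewed as a $\Spin(10)$-equivariant map out of $\Sym^3 S$, is forced to be zero by decomposing source and target into irreducibles. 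That computation does go through: $\Sym^3 S^+\cong V(3\omega_5)\oplus V(\omega_1+\omega_4)$ (dimensions $672+144=816=\binom{18}{3}$), while $S^*\cong V(\omega_4)$ and $S\otimes V\cong V(\omega_1+\omega_5)\oplus V(\omega_4)$ share no constituent with $\Sym^3 S^+$, so $\mathrm{Hom}_{\Spin(10)}(\Sym^3 S,S^*)=0$. The paper's approach buys you economy: recognizing $I_\alpha$ as a gradient converts the problem into the vanishing of a single invariant quartic, and the dense-orbit fact (cited once) kills all invariants at once without needing the plethysm of $S$. Your approach buys you self-containedness at the cost of the explicit $\Sym^3 S$ decomposition, and your exposition wanders somewhat — the detour through $\mathrm{Hom}(\Sym^3 S, S\otimes V)$ is unnecessary (the cleaner target is $S^*\cong V(\omega_4)$, a direct summand of $S\otimes V$, so your conclusion still follows), and the mid-paragraph back-and-forth about what $S\otimes V$ decomposes as should be tidied before this would read as a proof. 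Your third step (the Fock-model cross-check) is closer to a direct Fierz computation and is not what the paper does, but it is a reasonable sanity check on signs.
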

\begin{proof}
This is equivalent to Fierz identity
\[\Gamma^{\s}_{\alpha(\beta}\Gamma_{\gamma\delta)\s}=0.\]
 $()$ stands for symmetrization.

  Theory of  invariants gives a different outlook on this identity. It is not hard to see that $I_{\alpha}d\lambda^{\alpha}$ is a differential of  $I(\lambda)=\Gamma^{\s}\Gamma_{\s}  =\sum_{i=1}^5\Gamma^{i}\Gamma^{i^*}$.  The function $I(\lambda)$ is by construction a $\Spin(10)$-invariant. By Igusa's classification of spinors in ten dimensions \cite{Igusa}  $S$ contains a dense orbit.  
   This implies that $I(\lambda) \equiv I(0) = 0$, hence $I_{\alpha}=0$.
\end{proof}

\begin{proposition}\label{P:quadbasis}Fix  the order (\ref{E:order}) on generators of $P$.
Then assumptions of Proposition \ref{P:Buchbergersimp} are satisfied for relations (\ref{E:equations}).
\end{proposition}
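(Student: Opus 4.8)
\textbf{Setup.} By Proposition \ref{P:Buchbergersimp} it suffices to show that for every pair of generating relations $\Gamma^{\bm{s}},\Gamma^{\bm{s}'}$ whose tips share a variable (so that $\rT(\Gamma^{\bm{s}},\Gamma^{\bm{s}'})\neq \rT(\Gamma^{\bm{s}})\rT(\Gamma^{\bm{s}'})$), the $S$-polynomial $S(\Gamma^{\bm{s}},\Gamma^{\bm{s}'})$ can be rewritten as a combination $\sum_{\alpha,{\bm{r}}}c_{\alpha,{\bm{r}}}\lambda^{\alpha}\Gamma^{\bm{r}}$ in which no monomial appearing in any $\lambda^{\alpha}\Gamma^{\bm{r}}$ is an obstruction (a cubic monomial $\lambda^{\alpha}\lambda^{\beta}\lambda^{\gamma}$ with two of the three pairs being clutters). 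First I would note that each overlapping pair of tips corresponds, via Proposition \ref{E:degreenull} and its Corollary, to a noncommutative obstruction $(\gamma,(\alpha,\beta))$: the shared variable is one of the generators, and the tip of each $\Gamma^{\bm{s}}$ is precisely a clutter $\lambda^{\alpha}\lambda^{\beta}$ together with the sign conventions of (\ref{E:reluniv}). So the $S$-polynomials are indexed exactly by the list (\ref{E:obstructionsord}) of noncommutative obstructions in Appendix \ref{A:obstr}.

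\textbf{Key step: the Fierz identity produces the rewriting.} The plan is to use Proposition \ref{P:Fierz}. The identity $I_\alpha=\sum_{\bm{s}}\Gamma^{\bm{s}}_{\alpha\beta}\lambda^{\beta}\Gamma_{\bm{s}}=0$ is, for each fixed $\alpha$, a relation of the form "$\sum$ of $\lambda^{\beta}\Gamma^{\bm{s}}$ with scalar coefficients $=0$''. I claim this is, up to normalization, the Buchberger relation $S(\Gamma^{\bm{s}},\Gamma^{\bm{s}'})=\sum c_{\alpha,{\bm{r}}}\lambda^{\alpha}\Gamma^{\bm{r}}$ that we need. Concretely: fix a noncommutative obstruction $(\gamma,(\alpha,\beta))$; the two reductions of the obstructing monomial $\lambda^{\gamma}\lambda^{\alpha}\lambda^{\beta}$ correspond to the two quadrics $\Gamma^{\bm{s}},\Gamma^{\bm{s}'}$ with tips $\lambda^{\alpha}\lambda^{\beta}$ and (say) $\lambda^{\gamma}\lambda^{\alpha}$ respectively, and $S(\Gamma^{\bm{s}},\Gamma^{\bm{s}'})=\lambda^{\gamma}\Gamma^{\bm{s}}-\lambda^{\beta}\Gamma^{\bm{s}'}$ (up to leading coefficients, which are $\pm1$ here). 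I would then identify this difference with an appropriate linear combination of the relations $I_{\alpha'}=0$ obtained by contracting Fierz with the right spinor index — since $I_{\alpha'}$ is literally a linear combination of the $\lambda^{\beta}\Gamma_{\bm{s}}$'s, expressing $S(\Gamma^{\bm{s}},\Gamma^{\bm{s}'})$ through the $I_{\alpha'}$ exhibits it as $\sum c_{\alpha',{\bm{r}}}\lambda^{\alpha'}\Gamma^{\bm{r}}$. The point is that in $I_{\alpha'}$ the "multiplier" spinor index $\alpha'$ ranges over weights that, by the structure of the Hasse diagram $\rE$ and the minuscule structure (Proposition \ref{E:transitiveM}), are always $\leq$ the meet or $\geq$ the join involved — so the cubic monomials $\lambda^{\alpha'}\lambda^{\mu}\lambda^{\nu}$ that occur, with $(\mu,\nu)$ a clutter, never have $\alpha'$ forming a second clutter with $\mu$ or $\nu$. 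That is exactly the "no obstructions among monomials in $\lambda^{\alpha}\Gamma^{\bm{r}}$'' condition.

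\textbf{Reduction to a finite check; symmetry.} Rather than verify this for all pairs individually, I would invoke the $W(D_5)$-action: by the Corollary after Proposition \ref{E:degreenull} the quadrics $\Gamma^{\bm{s}}$, the clutters, and hence the noncommutative obstructions, are permuted by $W(D_5)$, and the Fierz relations $I_\alpha$ transform equivariantly (they come from the differential of a $\Spin(10)$-invariant). So it suffices to check the claim on one representative per $W(D_5)$-orbit of overlapping pairs. Consulting the explicit list (\ref{E:obstructionsord}) and the explicit quadrics (\ref{E:equations}), one picks a representative such as the overlap of $\Gamma^{1}$ (tip a clutter $w_{2,5}w_{3,4}$) and $\Gamma^{2^*}$ (tip involving $p_5w_{2,5}$) on the obstruction $p_5w_{2,5}w_{3,4}$, and writes out $S(\Gamma^{1},\Gamma^{2^*})$ explicitly; it is a combination of monomials $\lambda^{\alpha}\lambda^{\beta}\lambda^{\gamma}$ none of which is an obstruction, and one checks it coincides with the relevant $I_{\alpha'}$. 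A handful of orbit representatives then finishes the proof.

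\textbf{Main obstacle.} The technically delicate point is the bookkeeping in the middle step — verifying that when one expands the Fierz/$S$-polynomial combination, every cubic monomial that survives really avoids being an obstruction, i.e. that the "multiplier'' weight $\alpha'$ is comparable (below the meet or above the join) to the entries of each clutter it multiplies. This is a statement purely about the combinatorics of the lattice $\rE$ and the support of the $\Gamma$-tensor, and I expect it to require a careful case analysis guided by the diagram (\ref{P:kxccvgw1}) and the obstruction list in Appendix \ref{A:obstr}; the representation-theoretic triality argument used in Proposition \ref{E:degreenull} to pin down clutters does not by itself settle it, so some explicit computation with (\ref{E:equations}) seems unavoidable.
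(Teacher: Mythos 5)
Your proposal takes essentially the same route as the paper: the sixteen Fierz identities $h_{\alpha}$ of (\ref{E:fierzksimple}), viewed via the substitution (\ref{E:subst}) as elements of $P\otimes I$, furnish exactly the rewriting (\ref{E:bush}) of each $S$-polynomial, and the proof reduces to checking that in each $h_{\alpha}$ only the two underbracketed terms carry an obstruction monomial and that they do so with opposite signs (with one small correction: each $S$-polynomial is matched by a single $h_{\alpha}$, not a linear combination of several). The only substantive difference is how the finite check is organized: you propose a $W(D_5)$-orbit reduction, whereas the paper simply tabulates all sixteen $h_{\alpha}$, marks the unique underbracketed pair in each, and observes that there are precisely sixteen cubic obstructions (Appendix \ref{A:obstr}), each appearing in exactly one underbracketed pair — so the non-underbracketed terms are obstruction-free by counting. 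Your symmetry shortcut is sound in principle because clutters, obstructions, and Fierz identities are all $W(D_5)$-equivariant (Propositions \ref{E:transitiveM} and \ref{P:Fierz}), but one should keep in mind that the total refinement (\ref{E:order}) of the partial order is not $W(D_5)$-invariant, so what you actually transport along the group action is the clutter/obstruction combinatorics rather than the tips themselves; the paper's explicit tabulation sidesteps this subtlety.
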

\begin{proof}
The algebra $P$ coincides with symmetric algebra $\Sym\ S^*$. Collection of  Fierz identities describes a basis in the  kernel of the multiplication map $S^*\otimes V^*\rightarrow  \Sym^3 S^* $: 
\begin{equation}\label{E:subst}
 \lambda^{\alpha}x_{\s}\rightarrow \lambda^{\alpha}\Gamma^{\s}
 \end{equation} 
 The kernel as a representation of $\so_{10}$  is isomorphic to $S$ (e.g. \cite{BVinbergALOnishchik}). Its weights are opposite to weights of $S^*$. We use these weights to label individual Fierz identities:
 
{\tiny
\begin{equation}\label{E:fierzksimple}
\begin{split}
&h_{(0)}= p_{1} x_{1^*}-p_{2} x_{2^*}+p_{3} x_{3^*}-\underbracket{p_{4} x_{4^*}}+\underbracket{p_{5} x_{5^*}}\\
&h_{(12)}= \underbracket{w_{4,5} x_{3^*}}-\underbracket{w_{3,5} x_{4^*}}+w_{3,4} x_{5^*}-p_{2} x_{1}-p_{1} x_{2}\\
&h_{(13)}=  -\underbracket{w_{4, 5} x_{2^*}} + \underbracket{w_{2, 5} x_{4^*}} - w_{2, 4} x_{5^*} + p_{3} x_{1} - p_{1} x_{3}\\
&h_{(23)}= \underbracket{w_{4, 5} x_{1^*}} - \underbracket{w_{1, 5} x_{4^*}} + w_{1, 4} x_{5^*} + p_{3} x_{2} + p_{2} x_{3}\\
&h_{(14)}=\ \underbracket{w_{3, 5} x_{2^*}} - \underbracket{w_{2, 5} x_{3^*}} + w_{2, 3} x_{5^*} - p_{4} x_{1} - p_{1} x_{4}\\
&h_{(24)}=  -\underbracket{w_{3, 5} x_{1^*}} + \underbracket{w_{1, 5} x_{3^*}} - w_{1, 3} x_{5^*} - p_{4} x_{2} + p_{2} x_{4}\\
&h_{(15)}= -\underbracket{w_{3, 4} x_{2^*}} + w_{2, 4} x_{3^*} - w_{2, 3} x_{4^*} + \underbracket{p_{5} x_{1}} - p_{1} x_{5}\\
&h_{(34)}= \underbracket{w_{2, 5} x_{1^*}} - \underbracket{w_{1, 5} x_{2^*}} + w_{1, 2} x_{5^*} - p_{4} x_{3} - p_{3} x_{4}\\
&h_{(25)}= \underbracket{w_{3, 4} x_{1^*}} - w_{1, 4} x_{3^*} + w_{1, 3} x_{4^*} + \underbracket{p_{5} x_{2}} + p_{2} x_{5}\\
&h_{(5)}= \lambda  x_{5^*}+\underbracket{w_{1,5} x_{1}}+\underbracket{w_{2,5} x_{2}}+w_{3,5} x_{3}+w_{4,5} x_{4}\\
&h_{(35)}= -\underbracket{w_{2, 4} x_{1^*}} + w_{1, 4} x_{2^*} - w_{1, 2} x_{4^*} + \underbracket{p_{5} x_{3}} - p_{3} x_{5}\\
&h_{(4)}= -\lambda x_{4^*} - w_{1, 4} x_{1} - \underbracket{w_{2, 4} x_{2}} - \underbracket{w_{3, 4} x_{3}} +  w_{4, 5} x_{5}\\
&h_{(45)}= \underbracket{w_{2, 3} x_{1^*}} - w_{1, 3} x_{2^*} + w_{1, 2} x_{3^*} + \underbracket{p_{5} x_{4}} + p_{4} x_{5}\\
&h_{(3)}= \lambda  x_{3^*}+w_{1,3} x_{1}+\underbracket{w_{2,3} x_{2}}-\underbracket{w_{3,4} x_{4}}-w_{3,5} x_{5}\\
&h_{(2)}= -\lambda x_{2^*} - w_{1, 2} x_{1} + \underbracket{w_{2, 3} x_{3}} + \underbracket{w_{2, 4} x_{4}} +  w_{2, 5} x_{5}\\
&h_{(1)}= \lambda x_{1^*} - w_{1, 2} x_{2} - w_{1, 3} x_{3} -\underbracket{ w_{1, 4} x_{4}} - \underbracket{w_{1, 5} x_{5}}
\end{split}
\end{equation}
}
Monomials $\rT(\Gamma^{\bm{s}},\Gamma^{\bm{s}'})$ such that $\rT(\Gamma^{\bm{s}},\Gamma^{\bm{s}'})\neq \rT(\Gamma^{\bm{s}})\rT(\Gamma^{\bm{s}'})$ are  degree three obstructions.
Each $h_{\alpha}$ upon substitution (\ref{E:subst}) contains precisely a one pair of relations $\Gamma^{\bm{s}},\Gamma^{\bm{s}'}$  (the corresponding terms are underlined) with $\rT(\Gamma^{\bm{s}},\Gamma^{\bm{s}'})\neq \rT(\Gamma^{\bm{s}})\rT(\Gamma^{\bm{s}'})$. 
The reader can check  that underscored terms  contain the same $h_{\alpha}$-dependent obstruction monomial. For example, $p_{4} \Gamma^{4^*}$ and $p_{5} \Gamma^{5^*}$ in the first identity contain $p_4p_5w_{45}$. These monomials have  opposite signs and  cancel each other because of that.  No other term  in $h_{\alpha}$ contains an obstruction. A straightforward analysis of the lattice $\rE$ reveals that there are  sixteen obstructions of degree three (see Table \ref{E:obstructionsord}) and each of them appears as  monomial in one of the underlined terms. 

We use these identities to rewrite $S$-polynomials in the  form (\ref{E:bush})

\end{proof}
\begin{corollary}\label{P:straight}(cf.\cite{SMT})$A$ is an algebra with straightened law. In particular standard monomials form a $\T^5$ weight basis.
\end{corollary}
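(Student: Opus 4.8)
The plan is to deduce Corollary \ref{P:straight} from the three ingredients assembled just above: Proposition \ref{E:degreenull} (the defining relations have the ``straightening'' shape (\ref{E:reluniv}), with a unique clutter $\lambda^\alpha\lambda^{\alpha'}$ as leading term and all right-hand monomials $\lambda^\gamma\lambda^{\gamma'}$ with $\gamma<\alpha\wedge\alpha'\le\alpha\vee\alpha'<\gamma'$), Proposition \ref{P:quadbasis} (the relations (\ref{E:equations}) form a Gr\"obner basis for the order (\ref{E:order}), i.e.\ all degree-three obstructions resolve), and the Diamond/Buchberger mechanism of Proposition \ref{P:Buchbergersimp}. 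First I would invoke Proposition \ref{P:quadbasis} together with Proposition \ref{P:Buchbergersimp} to conclude that for every $k$ the set $X_k$ of standard monomials of degree $k$ spans $A_k$; since no standard monomial contains a clutter, no further reduction is possible, so $X_k$ is actually a basis of $A_k$ (linear independence is immediate: the leading-term argument shows distinct standard monomials cannot be rewritten into one another, because reductions strictly decrease the tip and the standard monomials are precisely the normal forms). This establishes axiom (1) of a straightened-law algebra: standard monomials $\tau^{\alpha_1}\cdots\tau^{\alpha_n}$ with $\alpha_1\le\cdots\le\alpha_n$ form a basis, where $\tau^\alpha=\lambda^\alpha$.

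Next I would verify axiom (2). Take incomparable $\alpha,\beta$; by the Corollary identifying clutters with quadrics, $(\alpha,\beta)\in M$ is a clutter and corresponds to exactly one relation $\Gamma^{\bm s}$. By Proposition \ref{E:degreenull}(ii), the unique expansion of $\lambda^\alpha\lambda^\beta$ in standard monomials reads
\[
\lambda^\alpha\lambda^\beta=\mp\,\lambda^{\alpha\vee\beta}\lambda^{\alpha\wedge\beta}+\sum_{\gamma<\alpha\wedge\beta,\ \gamma'>\alpha\vee\beta}\pm\,\lambda^\gamma\lambda^{\gamma'}.
\]
Every monomial on the right has one factor $\le\alpha\wedge\beta$, and $\alpha\wedge\beta\le\alpha$ and $\alpha\wedge\beta\le\beta$. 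Hence the smaller index $\gamma_{i,1}$ of each term satisfies $\gamma_{i,1}\le\alpha\wedge\beta\le\alpha,\beta$, which is exactly condition (2). (I should also note that the right-hand side is genuinely already in standard form, since in each pair the two indices are comparable by construction; the lattice structure of $\rE$ from Proposition \ref{E:transitiveM} and the surrounding discussion guarantees $\alpha\wedge\beta$ and $\alpha\vee\beta$ exist and are unique.) Finally, the torus statement: the generators $\lambda^\alpha$ are $\h$-weight vectors by (\ref{E:weights}), the relations $\Gamma^{\bm s}$ are weight vectors (they span a copy of $V$), and reductions preserve weight, so each $A_k$ decomposes into $\T^5$-weight spaces with the standard monomials as a weight basis.

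The step I expect to require the most care is not any single computation but the logical glue: making sure that ``$X_k$ spans and is a normal-form set'' upgrades cleanly to ``$X_k$ is a basis'' \emph{and} that the normal-form expansion coming out of the Gr\"obner machinery coincides with the combinatorial expansion (\ref{E:reluniv}) term-by-term, so that the $\gamma_{i,1}\le\alpha,\beta$ bound is actually about the \emph{reduced} expression and not some intermediate one. The resolution is that, by Proposition \ref{P:quadbasis}, the relations (\ref{E:equations}) are a reduced Gr\"obner basis whose normal form is unique, and (\ref{E:reluniv}) is already written in standard monomials, so the two expansions must agree. Everything else is bookkeeping, and one can simply cite \cite{SMT} for the classical case as a sanity check, since the present argument specializes to it.
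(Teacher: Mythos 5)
Your proof is correct and matches the approach the paper implicitly takes: Corollary \ref{P:straight} follows by combining Proposition \ref{P:quadbasis} (which, via Proposition \ref{P:Buchbergersimp}, yields that the standard monomials $X_k$ form a basis of each $A_k$, giving axiom (1)) with Proposition \ref{E:degreenull}(ii) (which shows that the straightening relation (\ref{E:reluniv}) already has all right-hand-side pairs comparable with smaller index $\leq \alpha\wedge\beta \leq \alpha,\beta$, giving axiom (2)), and the $\T^5$-weight statement follows since the generators and relations are weight vectors. The only minor quibble is a citation slip: the lattice structure of $\rE$ (existence and uniqueness of meets/joins) is asserted in the text just before Proposition \ref{E:transitiveM}, not in that proposition itself, which is about the $W(D_5)$-orbit of $M$.
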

\begin{corollary}\label{P:quadbasisc}(cf. \cite{Ravi} \cite{Bogvad} \cite{Bezr})Fix the order  (\ref{E:order}) on generators of $P$.
Then relations (\ref{E:equations}) form a quadratic Gr\"{o}bner basis(see \cite{PP} for details) in the ideal $I$. $A$ is a Koszul algebra
\end{corollary}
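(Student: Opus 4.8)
The plan is to assemble the statement from two ingredients that are already in place: the $S$-pair verification of Proposition~\ref{P:quadbasis} (which itself rests on the Fierz identities, Proposition~\ref{P:Fierz}), together with the classical implication ``quadratic Gr\"{o}bner basis $\Rightarrow$ Koszul''.

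First I would pin down the leading monomials. By Proposition~\ref{E:degreenull} each relation $\Gamma^{\bm{s}}$ has the shape (\ref{E:reluniv}): it contains a single clutter monomial $\lambda^{\alpha}\lambda^{\alpha'}$, while every other monomial is either $\lambda^{\alpha\vee\alpha'}\lambda^{\alpha\wedge\alpha'}$ or $\lambda^{\gamma}\lambda^{\gamma'}$ with $\gamma<\alpha\wedge\alpha'<\alpha\vee\alpha'<\gamma'$. As recorded in Section~\ref{S:str}, the total order (\ref{E:order}), extended to the degree-lexicographic order on the monomials of $P$, was chosen precisely so that each such rewriting strictly lowers the order; equivalently $\rT(\Gamma^{\bm{s}})$ is the clutter $\lambda^{\alpha}\lambda^{\alpha'}$. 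This is a finite check over the ten relations (\ref{E:equations}) and the ten clutters of $M$ in (\ref{E:Enoncomp}). In particular all tips are quadratic monomials, so the ten clutter monomials lie in $\mathrm{in}(I)$.

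Next, Buchberger's criterion. By Proposition~\ref{P:quadbasis}, for every pair $\Gamma^{\bm{s}},\Gamma^{\bm{s}'}$ whose tips overlap (i.e.\ $\rT(\Gamma^{\bm{s}},\Gamma^{\bm{s}'})\neq\rT(\Gamma^{\bm{s}})\rT(\Gamma^{\bm{s}'})$) the $S$-polynomial admits the representation (\ref{E:bush}), $S(\Gamma^{\bm{s}},\Gamma^{\bm{s}'})=\sum_{\alpha,\bm{r}}c_{\alpha,\bm{r}}\lambda^{\alpha}\Gamma^{\bm{r}}$, with no obstruction among the monomials of the $\lambda^{\alpha}\Gamma^{\bm{r}}$; hence every such $S$-polynomial reduces to $0$ modulo $\{\Gamma^{\bm{t}}\}_{\bm{t}\in\rG}$. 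Buchberger's $S$-pair criterion then yields that $\{\Gamma^{\bm{s}}\}_{\bm{s}\in\rG}$ is a Gr\"{o}bner basis of $I$ for the order (\ref{E:order}); together with the previous paragraph it is a \emph{quadratic} Gr\"{o}bner basis, and $\mathrm{in}(I)=(\lambda^{\alpha}\lambda^{\alpha'}:(\alpha,\alpha')\in M)$.

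Finally, Koszulity. Viewing $A$ as a quadratic quotient of the tensor algebra $T(S^{*})$ (relations: the commutators $\lambda^{\beta}\lambda^{\beta'}-\lambda^{\beta'}\lambda^{\beta}$ together with lifts of the $\Gamma^{\bm{s}}$), the commutative quadratic Gr\"{o}bner basis above, combined with the commutation relations, gives a quadratic Gr\"{o}bner basis for this presentation, so $A$ is G-quadratic. The associated monomial algebra is then quadratic, hence Koszul, and since passing to the initial algebra can only increase $\dim_{\C}\Tor^{\bullet}_{i,j}(\C,\C)$ (Fr\"{o}berg; see \cite{PP}), the off-diagonal vanishing propagates back to $A$; thus $A$ is Koszul. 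Everything of substance here is carried by Propositions~\ref{P:quadbasis} and \ref{P:Fierz}; the one point demanding care is the bookkeeping of the first paragraph --- confirming that the clutter really is $\rT(\Gamma^{\bm{s}})$ under the order (\ref{E:order}) --- which is exactly where the choice of (\ref{E:order}) is used, and where a sign or ordering slip would break the argument.
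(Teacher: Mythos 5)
Your proposal is correct and follows the same route as the paper: the Gr\"{o}bner basis claim is read off from Proposition~\ref{P:quadbasis} (Buchberger's criterion with $S$-polynomials reduced via the Fierz identities), and Koszulity is then the standard consequence of having a quadratic Gr\"{o}bner basis, citing \cite{PP}. The paper's own proof is essentially a one-line invocation of this implication; your write-up merely spells out the tip identification and the Fr\"{o}berg-style passage through the initial monomial algebra, which the paper leaves implicit.
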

\begin{proof}
Koszul property is a  corollary (see \cite{PP}) of existence of quadratic Gr\"{o}bner basis . 
\end{proof}

We define algebras $A_{\delta}^{\delta'}$ as follows. An ideal $J_{\delta}^{\delta'}\subset A$ is generated by $\{\lambda^{\alpha}| \alpha\in \rE \backslash [\delta,\delta']\}$, then  \[A_{\delta}^{\delta'}=A/J_{\delta}^{\delta'}.\]
Projective spectrum $Proj(A_{\delta}^{\delta'})$ is called a Richardson variety \cite{LakshmibaiLittelmann}. In particular  $Proj(A_{(0)}^{\delta})$ a Schubert variety; $Proj(A_{\delta}^{(1)})$ is the opposite Schubert variety. 
\begin{remark}Relations in $A_{\delta}^{\delta'}$ are one-to-one correspondence with clutters in $[{\delta},{\delta'}]$.
Statements of Corollaries \ref{P:straight} \ref{P:quadbasisc} are valid for algebras $A_{\delta}^{\delta'}$ (\cite{SMT},\cite{Ravi} \cite{Bogvad} \cite{Bezr}). Standard monomials $\lambda^{\alpha_1}\cdots \lambda^{\alpha_n}$ $\alpha_1 \leq\cdots \leq \alpha_n\in [\delta, \delta']$ form a basis in  $A_{\delta}^{\delta'}$.
\end{remark}
\begin{example}\label{E:proj}Algebra $A_{(0)}^{(15)}$ coincides with $\C[\lambda^{0},\lambda^{12},\lambda^{13},\lambda^{14},\lambda^{15}]$ because $[{(0)},{(15)}]$ contains no clutters (cf. \ref{P:kxccvgw1}). The map of algebras $A\rightarrow A_{(0)}^{(15)}$ encodes an embedding $\P^4\subset \OGr^{+}(5,10)$.
\end{example}
\begin{example}\label{E:quad}Algebra $A_{(0)}^{(5)}$ coincides with \[\C[\lambda^{0},\lambda^{12},\lambda^{13},\lambda^{14},\lambda^{24},\lambda^{34},\lambda^{5}]/(\lambda^{0}  \lambda^{5}+{\lambda^{14} \lambda^{23}}-{\lambda^{13} \lambda^{2,4}}+\lambda^{12} \lambda^{34})\] because $[{(0)},{(5)}]$ contains one clutter (cf. \ref{P:kxccvgw1}). This clutter is associated with relation  $\Gamma^5$ (\ref{E:equations}).  The polynomial algebra is isomorphic to symmetric algebra of eight-dimensional spinors $S^+$. Relation defines quadratic form on  $S^+$, which appears in the proof of Proposition \ref{E:degreenull}.  Geometrically projection $A\rightarrow A_{(0)}^{(5)}$ defines an embedding  of a quadric $Q^6$ into $\OGr^{+}(5,10)$.
\end{example}

\section{The Poincar\'{e} series of algebra $A$}\label{S:Poincare}
Poincar\'{e} series $A(t)$ of algebra $A$ depends only on the combinatorics of the poset $\rE$. We choose to define generating function $C(F)(t)$ of a finite poset $F$, which  in case of $\rE$ coincides with $A(t)$.  In this section we also establish   relations between  $C(F)(t)$ and $C(F')(t)$ for some simple subposets $F'\subset F$. These relations define recursive relations between $A_{\delta}^{\delta'}(t)$ for different $\delta$ and  $\delta'$.

To define a generating function function of a poset $F$ we introduce a  polynomial algebra $\C[e_{\alpha}]$, whose generators are labelled by elements of $F$. 
The generating function $C(F)\in \C[e_{\alpha}][[t]]$ does a weighted counting of chains
\[C_k(F)=\{\alpha_1\leq \cdots \leq \alpha_k|\alpha_i\in F\}\] by the formula

\begin{equation}\label{E:Fgenfun}
C(F)(t)=\sum_{k\geq 0}\sum_{\{\alpha_1\leq \cdots\leq \alpha_k\}\in C_k(F)}e_{\alpha_1}\cdots e_{\alpha_k} t^k
\end{equation}
The chains in $\rE$ are one-to-one correspondence with standard monomials, which by Corollary \ref{P:straight} form a $\T^5$-basis in $A$.
Poincar\'{e} series  $A(z,t)$ (\ref{E:poincare}) is some specialization $C(\rE)(t)$. 
The $\T^5$-action on $\lambda^{\alpha}$ is  multiplication on the weight $e_{\alpha}=e_{\alpha}(z)=e_{\alpha}(z_1,\dots,z_5)$ (\ref{E:weights}). Under this specialization  the coefficient of $t^k$ in (\ref{E:Fgenfun}) is the scaling factor of $\T^5$-action  on $\lambda^{\alpha_1}\cdots\lambda^{\alpha_k}$.

More generally $A_{\delta}^{\delta'}(z,t)$ is equal to $C([\delta, \delta'])(t)$ under this  specialization of $e_{\alpha}$.

A bit of terminology:  $F$ is a {\it convenient} lattice if 
 $[\delta, \delta')$ is a union of at most two intervals 
\begin{equation}\label{E:decomp}
[\delta, \delta')=[\delta, \alpha]\cup[\delta, \alpha']
\end{equation}

If $[\delta, \delta')=[\delta, \alpha]$ for some $\alpha$, then $\delta'$ is called a {\it tail} of  $[\delta, \delta']$.
We say that a convenient lattice $F$ is {\it narrow} if for any $\delta'\in F$, as in (\ref{E:decomp}), such that $\alpha\neq \alpha'$  either $\alpha$ or $\alpha'$ is a  tail. 

Direct inspection  of the diagram (\ref{P:kxccvgw1}) shows that lattice $\rE$ is  convenient and  narrow.
\begin{proposition}\label{P:convnar}
Let $F$ be convenient and  narrow finite lattice. Then for any interval $[\delta,\delta']\subset F$ we have a dichotomy:
\begin{equation}\label{E:tail}
C([\delta,\delta'])(t)=C([\delta,\alpha])(t)\frac{1}{(1-e_{\delta'}t)}
\end{equation}
or
\begin{equation}\label{E:reduction}
C([\delta,\delta'])(t)=\frac{1}{(1-e_{\delta'}t)}\left(e_{\alpha}tC([\delta,\alpha])(t)+C([\delta,\alpha'])(t)\right)
\end{equation}
depending on the number of intervals in decomposition (\ref{E:decomp}). The factor $e_{\alpha}t$ corresponds to an interval $[\delta,\alpha]$ in which $\alpha$ is a tail.

For lower bounds we have
\begin{equation}\label{E:lowbound}
\begin{split}
&C([\delta,\delta'])(t)=\frac{1}{(1-e_{\delta}t)}C([\alpha,\delta])(t)
\quad \mbox{ or }\\
&C([\delta,\delta'])(t)=\frac{1}{(1-e_{\delta}t)}\left(e_{\alpha}tC([\alpha,\delta])(t)+C([\alpha',\delta])(t)\right)
\end{split}
\end{equation}
\end{proposition}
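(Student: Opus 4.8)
The plan is to prove the statement by induction on the size of the interval $[\delta,\delta']$, exploiting the combinatorial structure of a convenient narrow lattice and the chain-counting definition \eqref{E:Fgenfun} directly. The key observation is that every chain in $[\delta,\delta']$ ending at the top is determined by (i) its final ``run'' of repeated copies of $\delta'$, which contributes a geometric factor $1/(1-e_{\delta'}t)$, and (ii) the remaining chain, which lives in $[\delta,\delta')$. So as a first step I would write
\[
C([\delta,\delta'])(t)=\frac{1}{1-e_{\delta'}t}\,C\big([\delta,\delta')\big)(t),
\]
where $C([\delta,\delta'))(t)$ counts chains in the order complex of the half-open interval $[\delta,\delta')$. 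This is immediate from \eqref{E:Fgenfun}: partition each chain $\alpha_1\le\cdots\le\alpha_k$ according to the largest index $j$ with $\alpha_j\ne\delta'$; the tail $\alpha_{j+1}=\cdots=\alpha_k=\delta'$ produces $\sum_{m\ge0}(e_{\delta'}t)^m=1/(1-e_{\delta'}t)$, and the head $\alpha_1\le\cdots\le\alpha_j$ is an arbitrary chain in $[\delta,\delta')$.

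Next I would use the convenient/narrow hypothesis to resolve $C([\delta,\delta'))(t)$. By definition of convenient, $[\delta,\delta')$ is a union of at most two intervals $[\delta,\alpha]\cup[\delta,\alpha']$. If there is only one, i.e.\ $[\delta,\delta')=[\delta,\alpha]$, then substituting into the displayed identity gives \eqref{E:tail} directly. If there are two, narrowness forces (say) $\alpha$ to be a tail of $[\delta,\delta')$; more precisely $[\delta,\alpha]$ is the subinterval in which $\alpha$ is a tail, meaning the only element of $[\delta,\delta')$ above $\alpha$ that is comparable to $\alpha$ would already be $\delta'$. The point is then to show
\[
C\big([\delta,\delta')\big)(t)=e_\alpha t\,C([\delta,\alpha])(t)+C([\delta,\alpha'])(t).
\]
To see this, classify a chain $\alpha_1\le\cdots\le\alpha_j$ in $[\delta,\delta')$ by whether its top element $\alpha_j$ lies in $[\delta,\alpha']$ or not. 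If $\alpha_j\in[\delta,\alpha']$ the chain is counted by $C([\delta,\alpha'])(t)$. Otherwise $\alpha_j\in[\delta,\alpha]\setminus[\delta,\alpha']$; since $\alpha$ is a tail of that part, any element $\ge\alpha_j$ within $[\delta,\delta')$ must equal $\alpha$ (anything strictly larger would be forced out of $[\delta,\delta')$, i.e.\ would be $\delta'$), so the chain continues by some number $\ge0$ of copies of $\alpha$ — wait, more carefully: the elements of $[\delta,\alpha]$ not in $[\delta,\alpha']$ are exactly those whose only way to grow inside $[\delta,\delta')$ is to reach $\alpha$; such a chain, after truncating its $\alpha$-run, is a chain in $[\delta,\alpha]$ whose top is not $\alpha$, together with a nonempty $\alpha$-run — but I should instead simply split off the maximal $\alpha$-run, getting a factor $e_\alpha t$ times a chain in $[\delta,\alpha]$ ending below $\alpha$, plus the chains with no $\alpha$ at all which already landed in $[\delta,\alpha']$. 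Reconciling this bookkeeping is exactly where care is needed: I expect to recognize that $C([\delta,\alpha])(t)=\frac{1}{1-e_\alpha t}C([\delta,\alpha))(t)$ and $[\delta,\alpha)\subseteq[\delta,\alpha']$ (again by narrowness), and from these two facts the identity \eqref{E:reduction} drops out after substituting back into the first display.

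Finally, the lower-bound formulas \eqref{E:lowbound} follow by the order-reversing duality: the opposite poset $F^{\mathrm{op}}$ of a convenient narrow lattice is again convenient and narrow (meets and joins swap, tails become ``cotails''), intervals $[\delta,\delta']\subset F$ correspond to intervals $[\delta',\delta]\subset F^{\mathrm{op}}$, and $C$ is manifestly invariant under reversing all chains; applying \eqref{E:tail}--\eqref{E:reduction} in $F^{\mathrm{op}}$ and translating back yields \eqref{E:lowbound}. The main obstacle, as indicated above, is the second step: verifying that narrowness really does pin down the structure of $[\delta,\delta')$ tightly enough that the chain count splits as a product of a geometric series with a two-term sum, and in particular checking the containment $[\delta,\alpha)\subseteq[\delta,\alpha']$ (and its dual) that makes the recursion close up. Everything else is a routine manipulation of geometric series in $\C[e_\alpha][[t]]$.
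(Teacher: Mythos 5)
Your proposal is correct and takes essentially the same route as the paper: both factor out the geometric series $1/(1-e_{\delta'}t)$ at the top, then reduce $C([\delta,\delta'))$ to $C([\delta,\alpha])$ and $C([\delta,\alpha'])$ using narrowness via the key identity $[\delta,\alpha)=[\delta,\alpha\wedge\alpha']$ (the paper phrases the reduction as inclusion--exclusion $C([\delta,\alpha])+C([\delta,\alpha'])-C([\delta,\alpha\wedge\alpha'])$, whereas you partition chains by whether the top lies in $[\delta,\alpha']$, but this is the same bookkeeping). The hesitation in your middle paragraph dissolves if you state the inclusion--exclusion explicitly, as the paper does; once you have $C([\delta,\alpha])=\frac{1}{1-e_\alpha t}C([\delta,\alpha\wedge\alpha'])$ the result follows by substitution, exactly as you anticipated.
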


\begin{proof}
For any $\delta \leq \delta'$ in $ F$ we have an identification
\[C_k([\delta, \delta'])=\coprod_{k=l+l'} C_l([\delta, \delta'))\times C_{l'}([\delta', \delta'])\]
 Equalities 
\begin{equation}\label{E:factor}
C([\delta,\delta'])(t)=\frac{1}{(1-e_{\delta'}t)}C([\delta,\delta'))(t)=\frac{1}{(1-e_{\delta'}t)}C([\delta, \alpha]\cup[\delta, \alpha'])(t)
\end{equation}
are valid because $F$ is convenient.

$F$ is a lattice therefore
\begin{equation}\label{E:union}
C_{l}([\delta, \alpha]\cup[\delta, \alpha'])=C_{l}([\delta, \alpha])\cup C_{l}([\delta, \alpha'])
\end{equation} 
and 
\begin{equation}\label{E:intersection}
C_{l}([\delta, \alpha])\cap C_{l}([\delta, \alpha'])=C_{l}([\delta, \alpha\wedge \alpha'])
\end{equation} 

If  $\alpha=\alpha'$ then equation (\ref{E:factor}) implies (\ref{E:tail}). If  $\alpha\neq \alpha'$ in (\ref{E:decomp})  then equation (\ref{E:factor},\ref{E:union},\ref{E:intersection}) implies
\begin{equation}\label{E:reductiongen}
C([\delta,\delta'])(t)=\frac{1}{(1-e_{\delta'}t)}\left(C([\delta,\alpha](t)+C([\delta,\alpha'](t)-C([\delta,\alpha\wedge \alpha'])(t)\right).
\end{equation}

Poset $F$ is narrow. 
Let suppose for certainty $\alpha$ in decomposition (\ref{E:decomp}) is a tail of $[\delta, \alpha]$. Then $[\delta, \alpha)=[\delta, \alpha\wedge \alpha']$ and 
$C([\delta,\alpha])(t)=\frac{1}{(1-e_{\alpha}t)}C([\delta,\alpha\wedge \alpha'])(t)$ together with (\ref{E:reductiongen}) imply (\ref{E:reduction}). 

For lower bounds the proof is similar and is omitted.
\end{proof}
\begin{definition}\label{D:rht}
We define an integer-valued function $\rht$ on $\rE$ by the rule: if there is an arrow ${\alpha}\rightarrow {\beta}$, then 
\begin{equation}\label{E:ht}
\rht({\alpha})+1=\rht({\beta}).
\end{equation}
 The function is normalized by condition $\rht((0))=0$. 
 \end{definition}

Note that equations (\ref{E:tail},\ref{E:reduction}) and an obvious normalization
\begin{equation}\label{E:norm}
C([\delta,\delta])(t)=\frac{1}{1-e_{\delta}t}
\end{equation}
can be used for inductive computations of $A_{\delta}^{\delta'}(t)$ because  $\rht(\alpha),\rht(\alpha')= \rht(\delta')-1$.
For example $A_{(0)}^{(1)}(t)$ can be computed as follows:

\[\begin{split}
&A_{(0)}^{(1)}(t)=A_{(0)}^{(2)}(t)\frac{1}{1-e_{(1)}t}=A_{(0)}^{(3)}(t)\frac{1}{1-e_{(2)}t}\frac{1}{1-e_{(1)}t}=\\
&=\row{A_{(0)}^{(45)}(t)}{A_{(0)}^{(4)}(t)}\col{te_{(45)}}{1}\frac{1}{1-e_{(3)}t} \frac{1}{1-e_{(2)}t}\frac{1}{1-e_{(1)}t}=\\
&=\row{A_{(0)}^{(35)}(t)}{A_{(0)}^{(5)}(t)}
\mat{\frac{1}{1-e_{(45)}t}}{\frac{1}{1-e_{(4)}t}}{0}{\frac{te_{(5)}}{1-e_{(4)}t}}
\col{te_{(45)}}{1}\frac{1}{1-e_{(3)}t} \frac{1}{1-e_{(2)}t}\frac{1}{1-e_{(1)}t}=\\
&=\row{A_{(0)}^{(25)}(t)}{A_{(0)}^{(34)}(t)}
\mat{\frac{1}{1-e_{(35)}t}}{0}{\frac{te_{(34)}}{1-e_{(35)}t}}{\frac{1}{1-e_{(5)}t}}
\mat{\frac{1}{1-e_{(45)}t}}{\frac{1}{1-e_{(4)}t}}{0}{\frac{te_{(5)}}{1-e_{(4)}t}}
\col{te_{(45)}}{1}\frac{1}{1-e_{(3)}t} \frac{1}{1-e_{(2)}t}\frac{1}{1-e_{(1)}t}\\
&\cdots\\
&=
\frac{1}{1-e_{(0)}t} \frac{1}{1-e_{(12)}t}\frac{1}{1-e_{(13)}t}
\row{\frac{1}{1-e_{(14)}t}}{\frac{1}{1-e_{(25)}t}}\times\\
&\times\mat{\frac{1}{1-e_{(15)}t}}{\frac{1}{1-e_{(24)}t}}{0}{\frac{te_{(23)}}{1-e_{(24)}t}}
\mat{\frac{te_{(15)}}{1-e_{(25)}t}}{0}{\frac{1}{1-e_{(25)}t}}{\frac{1}{1-e_{(34)}t}}\times\\
&\times\mat{\frac{1}{1-e_{(35)}t}}{0}{\frac{te_{(34)}}{1-e_{(35)}t}}{\frac{1}{1-e_{(5)}t}}
\mat{\frac{1}{1-e_{(45)}t}}{\frac{1}{1-e_{(4)}t}}{0}{\frac{te_{(5)}}{1-e_{(4)}t}}\times\\
&\times\col{te_{(45)}}{1}\frac{1}{1-e_{(3)}t} \frac{1}{1-e_{(2)}t}\frac{1}{1-e_{(1)}t}\\
\end{split}\]

The product upon substitution $e_{\alpha}=1$ becomes $\frac{1+5t+5t^2+t^3}{(1-t)^{11}}$ (cf. \cite{CortiReid},\cite{BerNek}).

\section{Algebras of Richardson varieties  }\label{S:Richardson}
We start this section with an accurate description of algebra $A_{\hdelta}^{\hdelta'}$, whose projective spectrum by definition is a semi-infinite Richardson variety. In the end  we define the inverse limit of  $A_{\hdelta}^{\hdelta'}$.
\begin{definition}\label{D:algAdelta}
The algebra $A_{\hdelta}^{\hdelta'}$ is  a quotient of $\mathbb{C}[\lambda^{\hat{\alpha}}], \hat{\alpha}\in [\hat{\delta},\hat{\delta}']\subset \hat{\rE}$. 
 The ideal of relations is generated by the coefficients $\Gamma^{\hat{\s}}$ of the generating function (\ref{E:relgenfunction})
where 
\begin{equation} \label{E:genfunl}
\lambda^{\alpha}(z)=\sum_{\alpha^l\in [\hat{\delta},\hat{\delta}'] }\lambda^{\alpha^l}z^l 
\end{equation}
\end{definition}

The  set of equations (\ref{E:genuuu},\ref{E:relgenfunction}) define an affine variety that parametrizes  the curves of degree $\leq N'-N$ is the affine cone (\ref{E:pure}). We shall refer to equations  (\ref{E:relgenfunction}) as the {\it affinization} of equations (\ref{E:pure}).

 There is a  surjective homomorphisms $A_{\hdelta}^{\hdelta'}\rightarrow A_{\hgamma}^{\hgamma'}$ $ \hdelta\leq \hgamma \leq \hgamma' \leq \hdelta'$ ,  whose kernel is  generated by $\lambda_{\hat \alpha}\in A_{\hdelta}^{\hdelta'}, \hat \alpha\in [\hdelta,\hdelta']\backslash [\hgamma,\hgamma']$. The homomorphism acts as an identity on the remaining  generators . This   makes collection $ A_{\hdelta}^{\hdelta'}$ an inverse system.

Let $\hat{P}'$ be a polynomial algebra $\mathbb{C}[\lambda_{\hat\alpha}]$, $\hat\alpha\in \hat \rE$. We define a topology  on  $\hat \rE$ by declaring a basis of open sets to be $\{T^l[(0),\infty)|l\in  \mathbb{Z}\}$ and  $\{T^l(-\infty,(0)]|l\in  \mathbb{Z}\}$. 
We denote completion of $\hat{P}'$ with respect to this topology by $\hat P$. It contains a linear space of homogenous elements $\hat{P}_k$ of degree $k$. An ideal $I$ in $\hat P$ topologically generated by expression (\ref{E:relgenfunction}) with no  restrictions on $l,l'$ and $k$. We denote by $\hat A$ the quotient $\hat P/I$. There is a homomorphism $\hat A\rightarrow  A_{\hdelta}^{\hdelta'}$, that acts as an identity on $\lambda_{\hat \alpha}, \hat \alpha\in [\hdelta,\hdelta']\subset \hat \rE$  and by zero on $\hat \rE\backslash [\hdelta,\hdelta']$
\begin{proposition}\label{P:invlim}
There is an isomorphism $e:\hat A\rightarrow \underset{\longleftarrow}\lim A_{\hdelta}^{\hdelta'}$
\end{proposition}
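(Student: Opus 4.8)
The plan is to construct the isomorphism $e$ degree by degree, using the fact that both $\hat A$ and the inverse limit $\varprojlim A_{\hdelta}^{\hdelta'}$ decompose into a product over homogeneous degrees $k\geq 0$, with each graded piece a completed vector space. First I would exhibit the canonical map: the homomorphisms $\hat A\to A_{\hdelta}^{\hdelta'}$ (acting as identity on $\lambda_{\hat\alpha}$ for $\hat\alpha\in[\hdelta,\hdelta']$ and by zero outside) are compatible with the surjections $A_{\hdelta}^{\hdelta'}\to A_{\hgamma}^{\hgamma'}$ of the inverse system, so by the universal property they assemble into a single homomorphism $e:\hat A\to\varprojlim A_{\hdelta}^{\hdelta'}$. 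It remains to check that $e$ is bijective, which I would do one graded component $\hat A_k$ at a time since $e$ is graded.

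For surjectivity, take a coherent family $(a_{\hdelta}^{\hdelta'})$ with $a_{\hdelta}^{\hdelta'}\in (A_{\hdelta}^{\hdelta'})_k$. By the straightened-law results carried over to $A_{\hdelta}^{\hdelta'}$ (the Remark following Corollary \ref{P:quadbasisc}), each $(A_{\hdelta}^{\hdelta'})_k$ has a basis of standard monomials $\lambda^{\halpha_1}\cdots\lambda^{\halpha_k}$ with $\halpha_1\leq\cdots\leq\halpha_k$ in $[\hdelta,\hdelta']$, and the transition maps send a standard monomial to itself if all its factors survive and to zero otherwise. Hence the coefficient of a fixed standard monomial $\mu$ in $a_{\hdelta}^{\hdelta'}$ stabilizes once $[\hdelta,\hdelta']$ contains the support of $\mu$; write $c_\mu$ for this common value. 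The element $\sum_\mu c_\mu\,\mu\in\hat P_k$ is a well-defined element of the completion $\hat P$ precisely because the topology on $\hat\rE$ is the one generated by the half-infinite intervals $T^l[(0),\infty)$ and $T^l(-\infty,(0)]$: for each open $[\hdelta,\hdelta']$ only finitely many standard monomials of degree $k$ have support inside it, so the sum converges. Its image in $\hat A_k$ maps under $e$ to the given family, giving surjectivity. (Here one uses that the relations generating $I$ are themselves carried to the relations of $A_{\hdelta}^{\hdelta'}$, so passing to the quotient is compatible.)

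For injectivity, suppose $x\in\hat A_k$ maps to zero in every $A_{\hdelta}^{\hdelta'}$. Lift $x$ to $\hat P_k$ and rewrite it modulo $I$ as a (possibly infinite, but locally finite) combination of standard monomials; this is legitimate because the straightening relations (\ref{E:reluniv}), in the form transported to $\hat\rE$, only move the smallest index strictly downward and the largest strictly upward, so the rewriting terminates inside any fixed interval and is compatible with the completion. If $x$ had a nonzero standard-monomial coefficient $c_\mu$, choose $[\hdelta,\hdelta']$ containing $\supp(\mu)$; then the image of $x$ in $(A_{\hdelta}^{\hdelta'})_k$ has $\mu$-coefficient $c_\mu\neq 0$, contradicting linear independence of standard monomials. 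Hence $x=0$ in $\hat A_k$.

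The main obstacle is the bookkeeping for the completion: one must verify that the standard-monomial expansion, which is an infinite sum, genuinely lies in $\hat P$ (equivalently, that for each basic open set only finitely many degree-$k$ standard monomials are supported inside it) and that the straightening procedure is continuous, so that rewriting an element of $\hat P$ modulo the topologically generated ideal $I$ again yields a well-defined element of $\hat P$ expressed in standard monomials. Both points reduce to the observation that the half-infinite intervals defining the topology are themselves convenient narrow sub-lattices in the sense of Section \ref{S:Poincare}, so the counting of standard monomials of fixed degree is finite on each; once that is in place, injectivity and surjectivity are purely the standard-monomial-basis argument applied levelwise.
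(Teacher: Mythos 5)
Your proposal takes a genuinely different route from the paper. The paper's argument is a short homological one: it introduces the polynomial algebras $P_{\hdelta}^{\hdelta'}$ with their ideals $I_{\hdelta}^{\hdelta'}$, observes that the transition maps are surjective both on the $P$'s and on the $I$'s (the latter is checked on the degree-two generators), and concludes that $\varprojlim^1 I_{\hdelta}^{\hdelta'}=0$, so the short exact sequences $0\to I_{\hdelta}^{\hdelta'}\to P_{\hdelta}^{\hdelta'}\to A_{\hdelta}^{\hdelta'}\to 0$ pass to the limit; identifying $\hat P=\varprojlim P_{\hdelta}^{\hdelta'}$ and $I=\varprojlim I_{\hdelta}^{\hdelta'}$ then gives $\hat A\cong\varprojlim A_{\hdelta}^{\hdelta'}$. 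You instead argue degreewise using the standard-monomial basis: coefficients of a fixed standard monomial stabilize along the inverse system (surjectivity), and a nonzero standard-monomial coefficient survives into some finite quotient (injectivity). Both arguments work in principle, but they rest on very different inputs: the paper needs only the Mittag-Leffler property, which is visible from the degree-two relations; you need the full straightened law for $A_{\hdelta}^{\hdelta'}$.

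The concrete issue with your proposal is one of logical ordering. You cite ``the Remark following Corollary \ref{P:quadbasisc},'' but that Remark concerns the \emph{finite} algebras $A_{\delta}^{\delta'}$ built from the finite poset $\rE$, not the affine $A_{\hdelta}^{\hdelta'}$. The standard-monomial basis for $A_{\hdelta}^{\hdelta'}$ is Corollary \ref{P:genstrlaw}, which is proved only in Section \ref{S:alghat}, two sections after Proposition \ref{P:invlim}. There is no actual circularity (the proof of \ref{P:genstrlaw} does not depend on \ref{P:invlim}), but as written your argument invokes a result not yet available; if you want to go this route you must either prove the affine straightened law first or explicitly flag the forward dependence. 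The paper's $\varprojlim^1$ argument avoids this because it needs nothing beyond surjectivity of the transition maps on degree-two relations, which is immediate from Definition \ref{D:algAdelta}. One further small caution: your convergence argument asserts that only finitely many degree-$k$ standard monomials are supported in each basic open set, but the basic opens in the paper's topology on $\hat\rE$ are half-infinite intervals $T^l[(0),\infty)$ and $T^l(-\infty,(0)]$, not the finite intervals $[\hdelta,\hdelta']$; the finiteness you want is really for the complements of pairs of half-infinite intervals, and this should be stated more carefully to match the completion actually used to define $\hat P$.
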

\begin{proof}
We define  $P_{\hdelta}^{\hdelta'}$ to be a polynomial algebra $\mathbb{C}[\lambda_{\hat\alpha}]$, $\hat\alpha\in [\hdelta,\hdelta']$. Then $P_{\hdelta}^{\hdelta'}=\bigoplus_{k\geq 0} P_{\hdelta\ k}^{\hdelta'}$ is a decomposition into graded components. Let $I_{\hdelta}^{\hdelta'}$ be the kernel of the homomorphism $P_{\hdelta}^{\hdelta'}\rightarrow A_{\hdelta}^{\hdelta'}$. The map $r_{\hdelta,\hdelta',\hgamma,\hgamma'}:P_{\hdelta}^{\hdelta'}\rightarrow P_{\hgamma}^{\hgamma'}$ is defined on the generators by the same formulas as the map  $A_{\hdelta}^{\hdelta'}\rightarrow A_{\hgamma}^{\hgamma'}$. The maps $r_{\hdelta,\hdelta',\hgamma,\hgamma'}$ are surjective. Moreover the maps $r_{\hdelta,\hdelta',\hgamma,\hgamma'}:I_{\hdelta}^{\hdelta'}\rightarrow I_{\hgamma}^{\hgamma'}$ are surjective (this can be checked on generators of degree two). The vanishing of $\underset{\longleftarrow}\lim^1 I_{\hdelta}^{\hdelta'}$  follows (see \cite{McCleary} Section 3.2) from this condition.   We have a short exact sequence of limits
\[0\rightarrow \underset{\longleftarrow}\lim I_{\hdelta}^{\hdelta'}\rightarrow \underset{\longleftarrow}\lim P_{\hdelta}^{\hdelta'}\rightarrow \underset{\longleftarrow}\lim A_{\hdelta}^{\hdelta'}\rightarrow 0\]
For this reason  $\hat P= \underset{\longleftarrow}\lim P_{\hdelta}^{\hdelta'}$, $I= \underset{\longleftarrow}\lim I_{\hdelta}^{\hdelta'}$ (see \cite{McCleary} about completions). This together with exact sequence implies the isomorphism property of $e$.

\end{proof}

\section{ Properties of $\hat{\rE}$ as partially ordered set}\label{S:posetaf}
To proceed with our study of algebras $A_{\hdelta}^{\hdelta'}$ we need to set a terminology and to establish some simple properties of $\hat{\rE}$ that will be used later. This what we shall do presently.

The set of vertices of the graph $\hat{\rE}$ (\ref{P:kxccvgw13}) is a  union
\begin{equation}\label{E:disjoint}
\hat{\rE}=\coprod_{r\in \mathbb{Z}} \rE_r\quad \rE_r=\{\alpha^r|\alpha\in \rE\}
\end{equation} 
$\rE_r$   define full subgraphs in $\hat{\rE}$. 
We extend function $\rht$ (Definition \ref{D:rht}) to $\hat{\rE}$.
It is a matter of simple inspection  of (\ref{P:kxccvgw13}) to arrive at  the following  table of values of $\rht$:
\begin{equation}\label{E:htvalues}
\begin{array}{ll|c}
\hdelta'&\hdelta''&\rht\\\hline
(0)^r&(3)^{r-1}&8r\\
(12)^r&(2)^{r-1}&8r+1\\
(13)^r&(1)^{r-1}&8r+2\\
(14)^r&(23)^{r}&8r+3\\
(15)^r&(24)^{r}&8r+4\\
(25)^r&(34)^{r}&8r+5\\
(35)^r&(5)^{r}&8r+6\\
(45)^r&(4)^{r}&8r+7\\
\end{array}
\end{equation}

A shift operator $T$ acts on 
$\hat{\rE}$: 
\begin{equation}\label{E:shift}T \alpha^r=\alpha^{r+1}\quad \alpha^r\in \hat \rE
\end{equation}
It is an automorphism of the lattice $\hat{\rE}$ and $T^r\rE=\rE_r$.

The poset $\rE$ coincides with the interval $[(0),(1)]\subset \hat \rE$. 
Visual presentation of the subinterval $[(0),(1)^1]$ is given below:
\bigskip
\bigskip
\bigskip
\bigskip
\bigskip
\bigskip
\bigskip
\bigskip

\bigskip

{\tiny\begin{equation}\label{P:kxccvgw15}
\mbox{
\setlength{\unitlength}{3000sp}
\begin{picture}(5025,-460)(-201,-1600)
\thinlines
\put(2326,-3886){\makebox(0,0)[lb]{\tiny{$(3)^{}$}}}
\put(2888,-1373){\vector(-1,-1){300}}
\put(2326,-736){\vector(-1,-1){300}}
\put(1801,-1261){\vector(-1,-1){300}}
\put(2025,-1335){\vector( 1,-1){300}}
\put(1537,-1897){\vector( 1,-1){300}}
\put(2063,-2423){\vector( 1,-1){300}}
\put(2851,-2386){\vector(-1,-1){300}}
\put(2363,-1898){\vector(-1,-1){300}}
\put(1951,-211){\vector( 1,-1){300}}
\put(2513,-773){\vector( 1,-1){300}}
\put(2588,-1898){\vector( 1,-1){300}}
\put(3188,-2498){\vector( 1,-1){300}}
\put(2626,-2986){\vector( 1,-1){300}}
\put(2700,-3960){\vector( 1,-1){300}}
\put(1501,-3886){\vector( 1,-1){300}}
\put(1876,-5461){\vector(-1,-1){300}}
\put(1612,-6097){\vector( 1,-1){300}}
\put(2138,-6623){\vector( 1,-1){300}}
\put(2926,-6586){\vector(-1,-1){300}}
\put(3488,-7148){\vector(-1,-1){300}}
\put(2663,-6098){\vector( 1,-1){300}}
\put(3263,-6698){\vector( 1,-1){300}}
\put(2701,-7186){\vector( 1,-1){300}}
\put(2401,-6136){\vector(-1,-1){300}}
\put(3226,-4411){\vector( 1,-1){300}}
\put(2926,-5611){\vector(-1,-1){300}}
\put(3526,-5011){\vector(-1,-1){300}}
\put(3451,-2986){\vector(-1,-1){300}}
\put(2101,-8536){\vector(-1,-1){300}}
\put(2401,-7111){\vector(-1,-1){300}}
\put(2926,-4411){\vector(-1,-1){300}}
\put(1801,-3361){\vector(-1,-1){300}}
\put(2926,-3436){\vector(-1,-1){300}}
\put(2251,-2911){\vector(-1,-1){300}}
\put(2026,-3436){\vector( 1,-1){300}}
\put(2401,-3886){\vector(-1,-1){300}}
\put(2626,-5011){\vector( 1,-1){300}}
\put(2101,-5536){\vector( 1,-1){300}}
\put(2251,-7636){\vector( 1,-1){300}}
\put(2551,-8086){\vector(-1,-1){300}}
\put(3001,-7636){\vector(-1,-1){300}}
\put(2326,-5011){\vector(-1,-1){300}}
\put(2101,-4486){\vector( 1,-1){300}}
\put(1726,-136){\makebox(0,0)[lb]{\tiny{$(12)^{}$}}}
\put(1201,389){\makebox(0,0)[lb]{\tiny{$(0)^{}$}}}
\put(2326,-1861){\makebox(0,0)[lb]{\tiny{$(24)^{}$}}}
\put(1801,-2386){\makebox(0,0)[lb]{\tiny{$(25)^{}$}}}
\put(1726,-1261){\makebox(0,0)[lb]{\tiny{$(14)^{}$}}}
\put(3451,-2911){\makebox(0,0)[lb]{\tiny{$(5)^{}$}}}
\put(2851,-2386){\makebox(0,0)[lb]{\tiny{$(34)^{}$}}}
\put(2851,-1261){\makebox(0,0)[lb]{\tiny{$(23)^{}$}}}
\put(2251,-661){\makebox(0,0)[lb]{\tiny{$(13)^{}$}}}
\put(1201,-1786){\makebox(0,0)[lb]{\tiny{$(15)^{}$}}}
\put(2926,-3436){\makebox(0,0)[lb]{\tiny{$(4)^{}$}}}
\put(2326,-2836){\makebox(0,0)[lb]{\tiny{$(35)^{}$}}}
\put(3001,-4336){\makebox(0,0)[lb]{\tiny{$(2)^{}$}}}
\put(1276,-3811){\makebox(0,0)[lb]{\tiny{$(0)^{1}$}}}
\put(1801,-4336){\makebox(0,0)[lb]{\tiny{$(12)^{1}$}}}
\put(2326,-4861){\makebox(0,0)[lb]{\tiny{$(13)^{1}$}}}
\put(2926,-5461){\makebox(0,0)[lb]{\tiny{$(23)^{1}$}}}
\put(1801,-5461){\makebox(0,0)[lb]{\tiny{$(14)^{1}$}}}
\put(2401,-6061){\makebox(0,0)[lb]{\tiny{$(24)^{1}$}}}
\put(2926,-6586){\makebox(0,0)[lb]{\tiny{$(34)^{1}$}}}
\put(3526,-7111){\makebox(0,0)[lb]{\tiny{$(5)^{1}$}}}
\put(2401,-7036){\makebox(0,0)[lb]{\tiny{$(35)^{1}$}}}
\put(1276,-5986){\makebox(0,0)[lb]{\tiny{$(15)^{1}$}}}
\put(1876,-6586){\makebox(0,0)[lb]{\tiny{$(25)^{1}$}}}
\put(2551,-8086){\makebox(0,0)[lb]{\tiny{$(3)^{1}$}}}
\put(3001,-7636){\makebox(0,0)[lb]{\tiny{$(4)^{1}$}}}
\put(3601,-4861){\makebox(0,0)[lb]{\tiny{$(1)^{}$}}}
\put(1651,-8986){\makebox(0,0)[lb]{\tiny{$(1)^{1}$}}}
\put(2101,-8536){\makebox(0,0)[lb]{\tiny{$(2)^{1}$}}}
\put(1876,-7561){\makebox(0,0)[lb]{\tiny{$(45)^{1}$}}}
\put(1726,-3436){\makebox(0,0)[lb]{\tiny{$(45)^{}$}}}
\put(1426,314){\vector( 1,-1){300}}
\end{picture}
}
\end{equation}} 
\bigskip
\bigskip
\bigskip
\bigskip
\bigskip
\bigskip
\bigskip
\bigskip
\bigskip
\bigskip
\bigskip
\bigskip
\bigskip
\bigskip
\bigskip
\bigskip
\bigskip
\bigskip
\bigskip
\bigskip
\bigskip
\bigskip
\bigskip
\bigskip
\bigskip
\bigskip
\bigskip
\bigskip
\bigskip

Let $A,B$ be two subsets of $\hat{\rE}$. We write $A\leq B$ if $a\leq b$ for all $a\in A$, $b\in B$. 

\begin{lemma}\label{E:dom}
The sets $\rE_r$ satisfy \[\rE_{r+k}\geq \rE_r, \rE_{r-k}\leq \rE_r \text{ {\rm for} }k\geq 2\]
\end{lemma}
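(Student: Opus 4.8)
The plan is to read off the claim directly from the structure of the diagram (\ref{P:kxccvgw13}) together with the height function $\rht$ tabulated in (\ref{E:htvalues}). Recall that by the final sentence of Section \ref{S:finite}, we identify $\hat{\rE}$ with the poset determined by its Hasse diagram, so $\alpha^r\leq\beta^s$ means there is a directed path from $\alpha^r$ to $\beta^s$. The statement $\rE_{r+k}\geq\rE_r$ for $k\geq 2$ unpacks to: for every $\alpha,\beta\in\rE$ and every $k\geq 2$, there is a directed path $\alpha^r\rightarrow\cdots\rightarrow\beta^{r+k}$ in $\hat{\rE}$; the second inequality is the same statement with the roles reversed (or, equivalently, follows by applying the shift automorphism $T$ from (\ref{E:shift}) and relabeling).

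First I would reduce, using the shift automorphism $T$, to the single case $r=0$, $k=2$: indeed $\rE_{r+k}\geq\rE_r$ for all $k\geq 2$ follows from the case $k=2$ by composing paths $\rE_r\leq\rE_{r+2}\leq\rE_{r+4}\leq\cdots$ and, for odd $k$, by observing $\rE_r\leq\rE_{r+2}\leq\rE_{r+3}$ once we also know $\rE_{r+2}\leq\rE_{r+3}$; so it in fact suffices to prove $\rE_0\leq\rE_2$ and $\rE_0\leq\rE_1$ is \emph{not} needed — but $\rE_r\leq\rE_{r+1}$ does fail in general (e.g. $(0)^{r+1}\not\geq(1)^r$ is false, but incomparability does occur), which is exactly why the hypothesis is $k\geq2$. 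Concretely, I would show: (a) the unique minimal element of $\rE_1$, namely $(0)^1$, dominates the unique maximal element of $\rE_0$, namely $(1)^0$; equivalently there is a directed path $(1)^0\rightarrow(0)^1$ in $\hat{\rE}$. From the table (\ref{E:htvalues}), $\rht((1)^0)=\rht((13)^1)-1=8\cdot1+2-1$ reading the row $(13)^r,(1)^{r-1}$ with $r=1$, while $\rht((0)^1)=8$; so $\rht((1)^0)=9>8=\rht((0)^1)$? That cannot be right for a path $(1)^0\rightarrow(0)^1$, so instead the correct reading is that $(0)^1$ covers $(1)^0$ is false and one needs $k=2$. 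Let me restate cleanly: from (\ref{E:htvalues}), $\rht((1)^{r-1})=8r+2$ and $\rht((0)^r)=8r$, so $(1)^{r-1}$ has height \emph{two more} than $(0)^r$; hence $(1)^{r-1}$ and $(0)^r$ are incomparable (one cannot lie above the other), which is precisely the obstruction at distance $k=1$. For $k=2$: $\rht((0)^{r})=8r$ while $\rht((1)^{r-2})=8(r-1)+2=8r-6<8r$, so there is room, and I would exhibit an explicit chain $(1)^{r-2}\rightarrow\cdots\rightarrow(0)^{r}$ of length $6$ by tracing arrows in (\ref{P:kxccvgw15}) (which displays exactly $[(0)^0,(1)^1]$).

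The key computation, then, is to produce one explicit directed path from $(1)^{r-1}$'s analogue — sorry, from the top of $\rE_{r}$ to the bottom of $\rE_{r+2}$. It suffices by minuscule/lattice structure to connect the unique maximum $\max\rE_r=(1)^r$ to the unique minimum $\min\rE_{r+2}=(0)^{r+2}$: then for arbitrary $\alpha,\beta\in\rE$ we chain $\alpha^r\leq(1)^r\rightarrow\cdots\rightarrow(0)^{r+2}\leq\beta^{r+2}$, the outer two segments being paths within $\rE_r$ and $\rE_{r+2}$ respectively (which exist since each $\rE_s\cong\rE$ is connected with unique max and min). From the diagram (\ref{P:kxccvgw15}) one reads off the descending path inside $\hat{\rE}$: $(1)^r\rightarrow(2)^r\rightarrow(3)^r\rightarrow(4)^r\rightarrow(5)^r$ — wait, that is ascending in $\rE$; rather I trace the genuine arrows of $\hat{\rE}$ leaving $(1)^r$ downward-right into $\rE_{r+1}$ and onward into $\rE_{r+2}$, landing at $(0)^{r+2}$, and I verify the length matches $\rht((0)^{r+2})-\rht((1)^r)=8(r+2)-(8r+?)$; using the table, $\rht$ of the top of $\rE_r$ relative to $\rht((0)^r)=8r$ is $8r+?$ — the maximum $(1)^r$ sits at the very bottom-right of the $\rE_r$ block, and inspecting (\ref{P:kxccvgw13}) its height is $8r+?$; I would simply record the path explicitly and count. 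The only genuine content is this one path-tracing through the pictures; everything else is formal (transitivity of $\leq$, connectedness of each $\rE_s$, and the shift automorphism).

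The main obstacle, such as it is, will be notational bookkeeping in the diagrams: making sure the chosen chain from $(1)^r$ down to $(0)^{r+2}$ uses only arrows actually present in $\hat{\rE}$ (not spurious ones suggested by the planar layout), and confirming via (\ref{E:htvalues}) that consecutive heights increase by exactly one so that the chain is a genuine directed path. I expect no real difficulty, only care: I would present the explicit chain, cite (\ref{E:htvalues}) for the height increments, invoke the shift $T$ to handle general $r$, and invoke transitivity plus connectedness of $\rE_r$ and $\rE_{r+k}$ (with unique maximal and minimal elements $(1)$ and $(0)$) to upgrade from the single path between extremes to the full inequality $\rE_{r+k}\geq\rE_r$; the statement $\rE_{r-k}\leq\rE_r$ is then immediate by symmetry.
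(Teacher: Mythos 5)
Your overall strategy—reduce to showing the unique maximal element of $\rE_r$ lies below the unique minimal element of $\rE_{r+k}$, exhibit one explicit chain, and invoke the shift $T$ plus transitivity—is the right way to make precise what the paper's one-line proof (``examine the diagram'') leaves implicit. And the essential content (a length-6 chain $(1)^r \to (23)^{r+1} \to (24)^{r+1} \to (34)^{r+1} \to (35)^{r+1} \to (45)^{r+1} \to (0)^{r+2}$, using the $s_6$-edges $(1)^r\to(23)^{r+1}$ and $(45)^{r+1}\to(0)^{r+2}$) does exist, so the $k=2$ case works as you planned.

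However, there is a genuine gap in passing from $k=2$ to general $k\geq 2$, and you flag it yourself without closing it. Your sketch for odd $k$ uses the chain $\rE_r\leq\rE_{r+2}\leq\rE_{r+3}$, then immediately observes that $\rE_s\leq\rE_{s+1}$ fails (it does: $\rht((1)^s)=8s+10>8s+8=\rht((0)^{s+1})$, so $(1)^s\not\leq(0)^{s+1}$). Having noted the needed step is false, you still conclude that ``it suffices to prove $\rE_0\leq\rE_2$,'' which does not follow. The correct repair is short: instead of the false $\rE_{s}\leq\rE_{s+1}$, show the much weaker $(0)^s\leq(0)^{s+1}$ for all $s$ — a length-$8$ chain such as $(0)^s\to(12)^s\to(13)^s\to(14)^s\to(15)^s\to(25)^s\to(35)^s\to(45)^s\to(0)^{s+1}$ does it — and then for any $k\geq 2$ compose $(1)^r\leq(0)^{r+2}\leq(0)^{r+3}\leq\cdots\leq(0)^{r+k}$, finally sandwiching $\alpha^r\leq(1)^r\leq(0)^{r+k}\leq\beta^{r+k}$. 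Separately, your middle paragraph contains an arithmetic slip ($\rht((1)^0)=10$, not $9$: from (\ref{E:htvalues}) $(13)^r$ and $(1)^{r-1}$ share height $8r+2$) and an unsound inference — a height difference of $2$ does \emph{not} by itself imply incomparability ($(0)$ and $(13)$ have height difference $2$ yet $(0)<(13)$); you need the height bound for one direction and the diagram for the other. These local errors do not wreck the plan, but the unhandled odd-$k$ step does, so the argument as written is incomplete.
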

\begin{proof}
The reader can convince himself by examining  the diagram (\ref{P:kxccvgw13}).
\end{proof}

As a corollary we obtain that if $(\hat{\alpha},\hat{\beta})$ is a clutter  then  $(\hat{\alpha},\hat{\beta})$ or  $(\hat{\alpha},T\hat{\beta})$ or  $(\hat{\alpha},T^{-1}\hat{\beta})$ are in $ \rE_r$.

\begin{proposition}
Poset $\hat \rE$ and all its subintervals  are latices.
\end{proposition}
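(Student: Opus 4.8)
The plan is to reduce the claim about $\hat\rE$ to the already-established fact that the finite poset $\rE$ is a lattice, using the shift automorphism $T$ and Lemma~\ref{E:dom}. First I would observe that it suffices to show any two elements $\halpha,\hbeta\in\hat\rE$ admit a join; the existence of meets follows by the order-reversing symmetry of the diagram (\ref{P:kxccvgw13}) (equivalently, apply the argument to the opposite poset, which has the same shape), and the claim for subintervals is automatic once $\hat\rE$ itself is a lattice, since an interval $[\hdelta,\hdelta']$ in a lattice is closed under the ambient join and meet whenever the two elements in question both lie in $[\hdelta,\hdelta']$ --- indeed if $\halpha,\hbeta\in[\hdelta,\hdelta']$ then $\hdelta\le \halpha\vee\hbeta\le\hdelta'$ and $\hdelta\le\halpha\wedge\hbeta\le\hdelta'$, so the ambient join/meet serve as join/meet within the subinterval.

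Next I would handle the existence of $\halpha\vee\hbeta$ by a windowing argument. Write $\halpha\in\rE_{r}$, $\hbeta\in\rE_{s}$ with, say, $r\le s$. By Lemma~\ref{E:dom}, $\rE_{s+2}\ge\rE_s\ge\rE_r$, so every element of $\rE_{s+2}$ is an upper bound for both $\halpha$ and $\hbeta$; hence the set of common upper bounds is nonempty. I would then localize: the poset $\hat\rE$ restricted to the finite ``window'' $\bigcup_{r\le k\le s+2}\rE_k$ is a finite interval of $\hat\rE$, and I claim it is isomorphic (via the appropriate powers of $T$, which is an automorphism of $\hat\rE$ by the remarks following (\ref{E:shift})) to a finite interval in some sufficiently large truncation $A_{(0)^{M}}^{(1)^{M'}}$-type poset whose lattice property is inherited from $\rE$ glued to itself along the overlaps visible in (\ref{P:kxccvgw15}). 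Concretely, one checks from (\ref{P:kxccvgw15}) that the union $\rE\cup T\rE=[(0),(1)^{1}]$ is a lattice --- this is a finite check on the explicit diagram, exactly parallel to the inspection that established $\rE$ itself is a lattice --- and then that $[(0)^{a},(1)^{b}]$ for any $a\le b$ is a lattice by the same kind of inspection, using $T$-invariance to reduce to finitely many cases of the ``shape'' of the interval.

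The cleanest route, which I would actually carry out, is: (i) show $\rE\cup T\rE$ is a lattice by direct inspection of (\ref{P:kxccvgw15}); (ii) deduce by $T$-translation that $\rE_k\cup\rE_{k+1}$ is a lattice for every $k$; (iii) show that for any finite set of consecutive blocks $\rE_{k},\rE_{k+1},\dots,\rE_{k+n}$ the union $\bigcup_{j=0}^{n}\rE_{k+j}$ is a lattice, by induction on $n$ using (ii) together with Lemma~\ref{E:dom} to control how joins/meets computed in adjacent two-block windows patch together (Lemma~\ref{E:dom} guarantees that a join of an element of $\rE_k$ and an element of $\rE_{k+n}$, if it exists at all, already lies in $\rE_{k+n-1}\cup\rE_{k+n}$ or below, so no information is lost by restricting to a window); (iv) finally, for arbitrary $\halpha,\hbeta\in\hat\rE$ choose a window large enough to contain both and an upper bound (resp. lower bound) as above, and read off $\halpha\vee\hbeta$ (resp. $\halpha\wedge\hbeta$) inside that window, checking it is independent of the window by the patching in (iii).

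The main obstacle I anticipate is step (iii): making precise and correct the claim that joins computed in a small window agree with those in a larger one, i.e. that enlarging the window does not create new, smaller upper bounds straddling block boundaries. This is exactly where Lemma~\ref{E:dom} (with its sharp constant $k\ge 2$, not $k\ge 1$) must be used carefully --- the ``gap of one block'' where $\rE_{r+1}$ is \emph{not} comparable to $\rE_r$ is what makes the two-block windows the right size, and one has to verify that a common upper bound of $\halpha\in\rE_r$ and $\hbeta\in\rE_{r}$ (same block) or in adjacent blocks never needs to reach more than one block above the higher of the two. I expect this to come down to a finite, if slightly tedious, case analysis on the diagram (\ref{P:kxccvgw15}) together with the bookkeeping provided by the height function $\rht$ from (\ref{E:htvalues}); everything else is routine once that localization principle is nailed down.
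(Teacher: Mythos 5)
Your overall plan --- use Lemma~\ref{E:dom}, the shift $T$, and a finite check on $[(0),(1)^1]$ --- is exactly the paper's, and your initial reduction to the ambient poset is fine. But you have missed the one observation that makes the argument short, and as a result you set yourself a harder problem than you need to (and then correctly identify it as a gap you cannot quite close).

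Lemma~\ref{E:dom} does not just say that elements of $\rE_{s+2}$ are \emph{upper bounds} for elements of $\rE_r$ with $r\le s$; it says that whenever $|k|\ge 2$ the blocks $\rE_{r}$ and $\rE_{r+k}$ are \emph{totally} related ($\rE_{r+k}\ge\rE_r$ pointwise). Hence if $\halpha\in\rE_r$ and $\hbeta\in\rE_s$ with $|r-s|\ge 2$, the two elements are already comparable, so $\halpha\vee\hbeta$ and $\halpha\wedge\hbeta$ are simply the larger and smaller of the two --- there is nothing to compute and nothing to patch. The only pairs for which a join or meet is not automatic are non-comparable pairs, and those are forced to satisfy $|r-s|\le 1$. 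After translating by $T$, such a pair lands in $[(0),(1)^1]$, and a single finite inspection of the diagram (\ref{P:kxccvgw15}) finishes the job. That is the entire proof in the paper.

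Because you phrase Lemma~\ref{E:dom} only as ``common upper bounds exist,'' you are led to your step~(iii): an induction over arbitrarily long runs of blocks, with a patching claim that enlarging the window does not create new, smaller, boundary-straddling upper bounds. You flag this, rightly, as the hard part. But it is a problem you should not be having: for any pair separated by $\ge 2$ blocks there are no non-trivial bounds to patch, since the pair is a chain. Your step~(iii) as stated is therefore both unnecessary and, in the form you give it, not actually established --- the parenthetical claim that a join of elements of $\rE_k$ and $\rE_{k+n}$ ``already lies in $\rE_{k+n-1}\cup\rE_{k+n}$ or below'' does not follow from Lemma~\ref{E:dom} alone and would require its own argument. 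Replace steps~(ii)--(iv) by the observation above and your proof collapses to the paper's. (Two smaller points: the chain of inequalities ``$\rE_{s+2}\ge\rE_s\ge\rE_r$'' is wrong in the middle --- $\rE_s\ge\rE_r$ needs $s-r\ge 2$, which you have not assumed; and the window $\bigcup_{r\le k\le s+2}\rE_k$ is not an interval of $\hat\rE$, it has many minimal elements. Neither breaks the corrected argument, but both signal that the windowing machinery is not doing what you want. Your appeal to an order-reversing symmetry to get meets from joins is legitimate --- it is the anti-automorphism $\hat u$ of Appendix~\ref{S:Aut} --- though the paper simply notes the dual argument goes through.)
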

\begin{proof}
It suffices to verify the statement only for $\hat \rE$.
We need to check that 
lower bound $\inf [\hat \alpha,\infty)\cup[\hat\beta,\infty)$  consists of one element; the same should  hold for  $\sup(-\infty,\hat \alpha]\cup(-\infty,\hat\beta]$. 
We know (Lemma \ref{E:dom}) that if $\hat \alpha \in \rE_{r+k}$ and $\hat \beta\in \rE_r , |k|\geq 2$ are comparable, $\#\inf{[\hat \alpha,\infty)\cup[\hat\beta,\infty)}=1$ and  $\hat \alpha\vee\hat \beta$ is defined. This argument also works for $\hat \alpha\wedge\hat \beta$.

By virtue of Lemma \ref{E:dom} and automorphism $T$ we may assume without loss of generality  that a pair of non comparable  elements $\hat \alpha,\hat \beta$ belong to $[(0),(1)^1]$, for which lattice property can be verified by direct inspection (see Picture (\ref{P:kxccvgw15})). 
\end{proof}
\subsection{Description of the affine Weyl group $W(\hat{D}_5)$}
We shall occupy ourself in this section with  a description of the affine Weyl group $W(\hat{D}_5)$ as a semidirect product and through generators and relations. 

 The group $W(\hat{D}_5)$ is a semidirect product $W(D_5)\ltimes X_*$ (cf. \cite{Humphreys}), where $X_*$ is a coroot lattice  of $\SO(10)$. The coroot lattice $X_*$ $\SO(10)$ coincides with the kernel of the homomorphism $\Z^5\rightarrow \Z_2$ defined by the formula \[\mm=[m_1,\dots,m_5]\rightarrow \sum_{i=1}^{5}m_i\mod 2\]This homomorphism is nothing else but a homomorphism  of fundamental groups $\pi_1(\T^5)\rightarrow \pi_1(\SO(10))$ induced by the inclusion of the maximal torus.

Semidirect product $S_5\ltimes (\Z_2)^5$ acts on $\Z^5$. The $ (\Z_2)^5$-factor act coordinatewise on $\mm=[m_1,\dots,m_5]\in \Z^5$ by changing  signs. Symmetric group shuffles components of $\mm$.

It is easy to see that  the product $s_{6}=\sigma_{4,5}(0,0,0,1,1)[0,0,0,1,1]\in S_5NX_*=W(\hat{D}_5)$ is an involution. The involutive  generators of $W(\hat{D}_5)$ can be arranged into the following Coxeter diagram, which makes $W(\hat{D}_5)$ a Coxeter group:

\[
 \renewcommand{\arraycolsep}{0.1cm}
 \renewcommand{\arraystretch}{1.4}
 \begin{matrix}
 s_{1} &\myline& s_{3} &\myline& s_{4} &\myline& s_{5} \\
&&\big|&&\big|  \\
 && \kern-1cm s_{2} \kern-1cm&&\kern-1cm s_{6}  \kern-1cm
 \end{matrix}
 \]
 
 We define the action of $W(\hat{D}_5)$ on the set 
 \begin{equation}\label{E:affineweights}
 \widehat{N}=N\times \Z
 \end{equation}
  by the formula:
 \begin{equation}\label{E:actionaffine}
 (\sigma,\epsilon,\mm)(\eta,n)=(\sigma(\epsilon+\eta),-\frac{1}{2}\sum_{i=1}^5(-1)^{\eta_i}m_i+n)
 \end{equation}
 The group $W(\hat{D}_5)$ acts transitively on $\widehat{N}$.
 
\begin{proposition}
There is an isomorphism of non-directed graphs \[Q(\widehat{N}, s_1,\dots,s_6)\cong \hat{\rE}.\]
\end{proposition}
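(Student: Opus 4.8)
The plan is to exhibit an explicit $W(\hat D_5)$-equivariant bijection between $\widehat N = N\times\Z$ and the vertex set $\hat\rE = \coprod_{r\in\Z}\rE_r$, and then check that it carries the generating reflections $s_1,\dots,s_6$ to the elementary moves that define the edges of $\hat\rE$. First I would recall that in the finite case we already have a $W(D_5)$-equivariant identification $\psi\colon\rE\cong N$ (used in Proposition~\ref{E:transitiveM}), under which $(0),(ij),(k)$ correspond to the elements $(0,\dots,0)$, $(\dots,\overset{i}{1},\dots,\overset{j}{1},\dots)$, $(1,\dots,\overset{k}{0},\dots,1)$ of $N\cong(\Z_2)^4\subset(\Z_2)^5$. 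I would then define $\hat\psi\colon\hat\rE\to\widehat N$ on the component $\rE_r$ by $\hat\psi(\alpha^r) = (\psi(\alpha), n(\alpha,r))$ for a suitable integer $n(\alpha,r)$ affine-linear in $r$; the correct normalization is forced by requiring $T$ on $\hat\rE$ (the shift $\alpha^r\mapsto\alpha^{r+1}$, formula~(\ref{E:shift})) to match the central translation $(\eta,m)\mapsto(\eta,m+1)$ inside $\widehat N$, which is exactly the image of the coroot-lattice generator acting in~(\ref{E:actionaffine}). Concretely one reads off from the $\rht$-table~(\ref{E:htvalues}) and from $T^r\rE=\rE_r$ that $\hat\psi(\alpha^r)$ should have second coordinate $r$ plus a bounded correction depending only on which ``level'' $\alpha$ occupies in $\rE$, so that consecutive arrows in the diagram~(\ref{P:kxccvgw13}) either stay in a fibre of the projection $\widehat N\to N$ mod the obvious identification or move one step.

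Next I would verify equivariance generator by generator. The reflections $s_1=\sigma_{12}$, $s_3=\sigma_{23}$, $s_4=\sigma_{34}$, $s_5=\sigma_{45}$ and $s_2=r_{(1,1,0,0,0)}$ are the finite Coxeter generators of $W(D_5)$; on $\widehat N$ they act through the first factor by~(\ref{E:actionaffine}) with $\mm=0$, hence they permute the fibres of $\widehat N\to N/(\text{finite action})$ in exactly the way $s_1,\dots,s_5$ act on the subgraph $\rE=[(0),(1)]$ of~(\ref{P:kxccvgw1}) — this is already the content of the finite-case proposition $Q(S_5\ltimes N/S_5, s_1,\dots,s_5)\cong\rE$, transported by $T$ to every $\rE_r$. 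The only genuinely new generator is $s_6 = \sigma_{45}(0,0,0,1,1)[0,0,0,1,1]$. I would compute its action on a vertex $(\eta,n)$ directly from~(\ref{E:actionaffine}): the translation part $[0,0,0,1,1]$ contributes $-\tfrac12\big((-1)^{\eta_4}+(-1)^{\eta_5}\big)+n$ to the second coordinate while $\sigma_{45}(0,0,0,1,1)$ acts on $\eta$. One checks that this sends each vertex of $\hat\rE$ to a vertex differing by one elementary step — precisely the ``wrap-around'' edges visible in~(\ref{P:kxccvgw15}) that connect the bottom of $\rE_r$ (near $(1)^r$) back up to the region near $(0)^{r+1}$ — and that it fixes a vertex or moves it by a single edge in all cases, giving an edge of $Q(\widehat N,s_1,\dots,s_6)$ exactly where $\hat\rE$ has one.

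Finally I would assemble the two directions: $\hat\psi$ is a bijection of sets because $W(\hat D_5)$ acts transitively on $\widehat N$ (stated just above) and, via $\hat\psi$, transitively on $\hat\rE$ with the matching stabilizer; and it is a graph isomorphism because, by Definition~\ref{D:graph}, edges of either graph are recorded by ``$b=s_i a$ for some $i$,'' and the generator-by-generator check shows $\hat\psi(s_i\cdot x) = s_i\cdot\hat\psi(x)$. I expect the main obstacle to be purely bookkeeping: pinning down the integer $n(\alpha,r)$ so that all six generators act correctly at once — in particular making the $s_6$-edges land on the correct inter-level arrows of~(\ref{P:kxccvgw13}) — since an off-by-one in the normalization breaks equivariance of $s_6$ even though $s_1,\dots,s_5$ are insensitive to it. Because everything is finite modulo the $T$-action, this reduces to a finite inspection of the fundamental window $[(0),(1)^1]$ shown in~(\ref{P:kxccvgw15}), analogous to the proof of the preceding lattice proposition, so no deep input is needed beyond the explicit description of $W(\hat D_5)$ already set up.
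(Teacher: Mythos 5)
Your proposal is correct and takes essentially the same route as the paper: reduce to the finite case for $s_1,\dots,s_5$, which already gives the disjoint union $\coprod_r \rE_r$, and then compute the action of the one new generator $s_6$ directly from the affine-action formula to see that it contributes precisely the inter-level edges connecting $\rE_r$ to $\rE_{r+1}$.

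One small remark in your favor: your phrasing that $s_6$ ``fixes a vertex or moves it by a single edge'' is actually the precise statement, whereas the paper writes that $s_6$ and $s_5$ coincide on $(14)^r,(24)^r,(34)^r,(5)^r$. A direct computation from formula~(\ref{E:actionaffine}) shows that for $\eta$ with $\eta_4\neq\eta_5$ (i.e.\ exactly the eight weights $(14),(15),(24),(25),(34),(35),(4),(5)$), $s_6$ \emph{fixes} the corresponding vertex, while $s_5=\sigma_{45}$ moves it. Since a fixed vertex contributes no edge under Definition~\ref{D:graph}, the conclusion the paper wants --- that $s_6$ introduces no new intra-level edges beyond those already coming from $s_1,\dots,s_5$, and only the wrap-around edges $s_6(45)^r=(0)^{r+1}$, $s_6(3)^r=(12)^{r+1}$, $s_6(2)^r=(13)^{r+1}$, $s_6(1)^r=(23)^{r+1}$ --- still holds, but your formulation states the mechanism correctly.
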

\begin{proof}
If we utilize only $s_1,\dots,s_5$  we would get a disjoint union (\ref{E:disjoint}) (cf. discussion in Section \ref{S:finite}). 
Note that on $(14)^r,(24)^r,(34)^r,(5)^r$ the actions of $s_6$ and $s_5$ coincide. We conclude from formula (\ref{E:actionaffine}) that $s_6(45)^r=(0)^{r+1}, s_6(3)^r=(12)^{r+1},s_6(2)^r=(13)^{r+1},s_6(1)^r=(23)^{r+1}$. This verifies the claim.
\end{proof}
\subsection{Interpretation of $\hat{\rE}$ in terms of the spinor representation $S[z,z^-1]$}
The Lie-theoretic construction of $\hat{\rE}$ parallels the one of $\rE$.
We extend    operators $R^+_1,\dots,R^+_5$ (\ref{E:R1},\ref{E:R25})   to $S[z,z^{-1}]$ by $\mathbb{C}[z,z^{-1}]$-linearity.  Together with 
\[R^+_6=v_{4} v_{5}z \]  $R^+_1,\dots,R^+_5$ correspond to  simple root vectors  in the affine  $\hat{\so}_{10}$.
\begin{proposition}
We identify vertices of the graph $\hat{\rE}$ with weights of representation $S[z,z^{-1}]$. The operators $R^+_0,\dots,R^+_5$ intertwine the weight spaces. The corresponding Hasse diagram  coincides with $\hat{\rE}$.
\end{proposition}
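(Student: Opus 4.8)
The plan is to verify the assertion by direct inspection, exploiting the fact that the affine diagram $\hat{\rE}$ has already been identified combinatorially with the graph $Q(\widehat{N},s_1,\dots,s_6)$ in the preceding proposition, so that the only thing left is to match the Lie-theoretic labels coming from the root operators $R^+_0,\dots,R^+_5$ acting on $S[z,z^{-1}]$ with the vertex labels of $\hat{\rE}$. First I would recall that each $\theta_\alpha$, $\alpha\in\rE$, is a weight vector of $\so_{10}$ (as established in Section \ref{S:finite}) and hence $\theta_\alpha z^r$ is a weight vector of the loop algebra; the extra grading by the loop variable, together with the contribution of the central element, packages exactly into the extra coordinate of $\widehat{N}=N\times\Z$ via the identification $\rE\cong N$. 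Thus the underlying set of weights of $S[z,z^{-1}]$ is in canonical bijection with $\hat{\rE}$, with $\rE_r$ corresponding to the weights of $S\,z^r$.

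The second step is to check that the operators $R^+_1,\dots,R^+_5$, extended $\C[z,z^{-1}]$-linearly, act on this weight basis exactly as in the finite case on each graded piece $S\,z^r$. This is immediate from (\ref{E:R1}), (\ref{E:R25}) and the corresponding proposition in Section \ref{S:finite}: these operators do not touch the loop variable, so they reproduce on each $\rE_r$ the intra-level arrows of the diagram (\ref{P:kxccvgw1}), which are precisely the intra-level arrows of (\ref{P:kxccvgw13}). The only genuinely new computation is the effect of $R^+_6 = v_4 v_5 z$: acting on $\theta_\alpha z^r$ it produces $(v_4 v_5 \theta_\alpha) z^{r+1}$, which is nonzero (and a new basis vector, up to scalar) exactly when $\theta_\alpha\in\{\theta_{(45)},\theta_{(3)},\theta_{(2)},\theta_{(1)}\}$, giving $R^+_6\theta_{(45)}^r\sim\theta_{(0)}^{r+1}$, $R^+_6\theta_{(3)}^r\sim\theta_{(12)}^{r+1}$, $R^+_6\theta_{(2)}^r\sim\theta_{(13)}^{r+1}$, $R^+_6\theta_{(1)}^r\sim\theta_{(23)}^{r+1}$. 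One verifies $v_4 v_5 (v_1\wedge v_2\wedge v_3) = v_1\wedge v_2\wedge v_3\wedge v_4\wedge v_5$ and the analogous contractions against $\theta_{(3)}=v_1\wedge v_2\wedge v_4\wedge v_5$ etc., using the Fock-space action of Section \ref{S:finite}; these are the four inter-level arrows that, under the identification $\rE\cong N$, match exactly the arrows $s_6(45)^r=(0)^{r+1}$, $s_6(3)^r=(12)^{r+1}$, $s_6(2)^r=(13)^{r+1}$, $s_6(1)^r=(23)^{r+1}$ recorded in the proof of the previous proposition. Finally I would note that $R^+_6$ annihilates every other basis vector (because $v_5$ contracted into a $\theta$ not containing index $5$ gives zero, or the result already contains $v_4$), so no spurious arrows appear, and conversely every arrow of $\hat{\rE}$ is produced, since the intra-level ones come from $R^+_1,\dots,R^+_5$ and the inter-level ones from $R^+_6$.

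The main obstacle is purely bookkeeping: getting the signs, the normalization of the loop grading, and the central-charge shift right so that the four $R^+_6$-arrows land on precisely the vertices $(0)^{r+1},(12)^{r+1},(13)^{r+1},(23)^{r+1}$ rather than on some shifted or permuted targets, and confirming that no other matrix element of $R^+_6$ is nonzero. Since the compatible diagram $Q(\widehat{N},s_1,\dots,s_6)\cong\hat{\rE}$ is already in hand and the action of $s_6$ there has been computed, this amounts to matching two explicit finite lists, and the proof is completed by direct inspection, exactly as in the finite-dimensional analogue.
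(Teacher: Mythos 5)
Your overall architecture is the same as the paper's (the paper's proof is only two sentences: the $\rE_r$ blocks reproduce the finite Hasse diagram, and then one uses $R^+_6$), so the comparison is not about strategy but about the one step you actually carry out. Unfortunately that step has an internal inconsistency. The paper stipulates that $v_i$ acts on $\Lambda W$ by \emph{left multiplication} and $v^*_j$ by contraction, so if you take $R^+_6 = v_4 v_5 z$ at face value you get $R^+_6(\theta_{(45)}z^r) = (v_4\wedge v_5\wedge v_4\wedge v_5)z^{r+1} = 0$, not $\theta_{(0)}^{r+1}$; the same happens with $\theta_{(3)},\theta_{(2)},\theta_{(1)}$, each of which already contains both $v_4$ and $v_5$. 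The operator that actually produces the covering relations $(45)^r\to(0)^{r+1}$, $(3)^r\to(12)^{r+1}$, $(2)^r\to(13)^{r+1}$, $(1)^r\to(23)^{r+1}$ is the pair of contractions $v^*_4 v^*_5 z$ — which is exactly what one expects, since the affine node $s_6$ corresponds to the negative of the highest root $\epsilon_4+\epsilon_5$ for the chosen system of simple roots, and the associated positive root vector of $\hat{\so}_{10}$ is $f_{\epsilon_4+\epsilon_5}\otimes z = v^*_4 v^*_5 z$. (The formula $R^+_6 = v_4 v_5 z$ in the paper appears to be a typo.)

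You half-notice this yourself by writing ``contractions against $\theta_{(3)}$'' and ``$v_5$ contracted into a $\theta$,'' but your displayed sample computation $v_4 v_5 (v_1\wedge v_2\wedge v_3) = v_1\wedge v_2\wedge v_3\wedge v_4\wedge v_5$ is both a multiplication rather than a contraction and involves $v_1\wedge v_2\wedge v_3\in\Lambda^3 W$, which does not lie in $S=\bigoplus_{i\ge 0}\Lambda^{2i}W$ at all; it is not among the $\theta_\alpha$ and is not relevant to the verification. What is needed is $v^*_4 v^*_5\,\theta_{(45)} = \pm\theta_{(0)}$, $v^*_4 v^*_5\,\theta_{(3)} = \pm\theta_{(12)}$, $v^*_4 v^*_5\,\theta_{(2)} = \pm\theta_{(13)}$, $v^*_4 v^*_5\,\theta_{(1)} = \pm\theta_{(23)}$, with $v^*_4 v^*_5$ vanishing on the remaining twelve basis vectors (those $\theta_\alpha$ that do not contain both $v_4$ and $v_5$). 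Once this is stated consistently — double contraction, not double multiplication — the arrows you list are exactly right and match both the $\rht$-table and the action of $s_6$ on $\widehat{N}$, and the rest of the argument (intra-level arrows from $R^+_1,\dots,R^+_5$, no spurious arrows) goes through as you describe.
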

\begin{proof}
By construction, the Hasse diagram of $S[z,z^{-1}]$ is a disjoint union (\ref{E:disjoint}) with $\rE_r$ identified with the diagram of $S\otimes z^r$. It remains to make use of $R^+_6$, which is left as an exercise for the reader.
\end{proof}

In our applications we shall need to know the structure of the set of  clutters $\hat{K}$ in $\hat{\rE}$.
The set $\hat{K}$ is invariant with respect to translations. This observation enables us to translate any clutter into 
$[(0),(1)^1]$ and carry its analysis on the case-by-case basis. We conclude that $\hat{K}$ besides shifts of clutters  from $M$ (\ref{E:Enoncomp}) contains shifts of 
\begin{equation}\label{E:noncomparableloop}
\begin{split}
&\{((0)^1,(5)),    ((0)^1,(4)),((0)^1,(3)),((0)^1,(2)),((0)^1,(1)), \\
&((12)^1,(2)),((12)^1,(1)),    ((13)^1,(1)),((14)^1,(1)),((15)^1,(1))\}
\end{split}
\end{equation}
\begin{lemma}\label{L:orbithatm}
Clutters in $\hat{\rE}$ belong to one $W(\hat{D}_5)$ orbit in $\Sym^2\hat{\rE}$.
\end{lemma}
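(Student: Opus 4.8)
The plan is to reduce the statement to the finite case (Proposition \ref{E:transitiveM}) together with the explicit extra relation for the affine generator $s_6$, using translation-invariance of the clutter set $\hat K$ to cut down the bookkeeping. First I would recall that the clutter set $\hat K\subset\Sym^2\hat\rE$ is invariant under the shift $T$ (shown just before the lemma), so it suffices to produce, for each clutter $\hat c$, an element $w\in W(\hat D_5)$ carrying a single fixed reference clutter $c_0$ (say $c_0=((14),(23))$, the one distinguished in the proof of Proposition \ref{E:degreenull}) to $\hat c$. Since $T\in W(\hat D_5)$ (it is realized inside the translation subgroup $X_*$, or directly: $T$ is the concatenation of the loop action coming from the extra node $s_6$), and since every clutter can be shifted into the window $[(0),(1)^1]$, it is enough to handle the finitely many clutters listed in $M$ (\ref{E:Enoncomp}) and in (\ref{E:noncomparableloop}).

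The clutters in $M$ are already a single $W(D_5)$-orbit by Proposition \ref{E:transitiveM}, and $W(D_5)\subset W(\hat D_5)$, so these are taken care of. For the ``loop'' clutters in (\ref{E:noncomparableloop}) I would use the combinatorial model $Q(\widehat N,s_1,\dots,s_6)\cong\hat\rE$ and the explicit formula (\ref{E:actionaffine}) for the action of $W(\hat D_5)$ on $\widehat N=N\times\Z$, together with the $W(\hat D_5)$-equivariant sum map analogous to (\ref{E:mapl}),
\[
l:\Sym^2\hat\rE\longrightarrow \widehat N,\qquad l(\hat\alpha,\hat\beta)=\psi(\hat\alpha)+\psi(\hat\beta),
\]
whose fibres are the orbits of the ``translation + sign-change'' subgroup $N\times\Z$ of $W(\hat D_5)$. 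I would compute $l$ on the ten clutters of (\ref{E:noncomparableloop}); I expect, exactly as in the finite case where $l(M)$ turned out to be a single $S_5$-orbit, that the values $l$ of these clutters form one orbit under the semisimple part $S_5$ of $W(\hat D_5)$ (with a uniform $\Z$-component that is absorbed by shifting), and that moreover a single explicit use of $s_6$ relates this orbit to $l(M)$. Concretely $s_6$ sends $(45)^r\mapsto(0)^{r+1}$, $(3)^r\mapsto(12)^{r+1}$, $(2)^r\mapsto(13)^{r+1}$, $(1)^r\mapsto(23)^{r+1}$, so applying $s_6$ to a clutter in $M$ of the form $((45)^r,\cdot)$ or $((\cdot)^r,(1)^r)$ etc.\ produces precisely the loop clutters; chasing each of the ten cases gives the result.

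After that, the argument closes: every clutter is a $T$-shift of one lying in the window, each window clutter is $W(\hat D_5)$-equivalent either to a clutter in $M$ (hence to $c_0$ by Proposition \ref{E:transitiveM}) or, via one application of $s_6$, to such a clutter; composing these $W(\hat D_5)$-elements exhibits the single orbit.

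\textbf{Main obstacle.} The genuinely non-formal point is verifying that applying $s_6$ to the $M$-clutters really lands on \emph{all} ten clutters of (\ref{E:noncomparableloop}) and nothing spurious, i.e.\ matching up the two lists cleanly; this is a finite but slightly delicate case check, and the cleanest route is to push everything through the sum map $l$ into $\widehat N$ and compare $S_5$-orbits there rather than manipulate the Hasse diagram by hand. A secondary nuisance is making precise that $T$ lies in $W(\hat D_5)$ and that it acts on $\Sym^2\hat\rE$ compatibly with $l$; this follows from the description of $W(\hat D_5)=W(D_5)\ltimes X_*$ and the formula (\ref{E:actionaffine}), but should be stated explicitly so the reduction to the window is legitimate.
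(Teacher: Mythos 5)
Your overall strategy is the same as the paper's: move the ``loop'' clutters (\ref{E:noncomparableloop}) into the finite picture by an explicit element of $W(\hat D_5)$, then invoke the sum map $l$ and Proposition \ref{E:transitiveM}. The paper does this with the translations $\pm\mm_{ij}\in X_*$ rather than with $s_6$; since $s_6$ also sends $M$-clutters such as $((45),(4))$ to loop clutters such as $((0)^1,(4))$, your variant is genuinely workable and lands in the same place.

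However, two of your auxiliary claims are wrong and the first one creates a real gap. You assert that $T\in W(\hat D_5)$, ``realized inside the translation subgroup $X_*$.'' This is false: writing $T(\eta,n)=(\eta,n+1)$ in the coordinates of (\ref{E:actionaffine}), a single $\mm\in X_*$ would have to satisfy $-\tfrac12\sum(-1)^{\eta_i}m_i=1$ simultaneously for every $\eta\in N$, and plugging in $\eta=0$ and $\eta=(1,1,0,0,0)$, $(1,0,1,0,0)$, $(0,1,1,0,0)$ forces $m_1=m_2=m_3=0$ and then $\sum m_i=0\neq -2$, a contradiction. $T$ is a lattice automorphism commuting with $W(\hat D_5)$, not an element of it (nor is it a power of $s_6$, which is an involution). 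Consequently, ``shift every clutter into the window $[(0),(1)^1]$ by $T$ and then handle finitely many cases'' does not by itself produce a $W(\hat D_5)$-element; you still need, for each $T^r$-shift, an honest translation $\mm\in X_*$ that carries both entries of the clutter into $\Sym^2\rE_0$, which is exactly what the paper exhibits (the $\mm_{ij}$ for the $l=1$ loop clutters and, implicitly, their analogues for the others).

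The second inaccuracy is the statement that the affine sum map $l:\Sym^2\hat\rE\to\widehat N$ has ``fibres $\dots$ the orbits of the `translation $+$ sign-change' subgroup $N\times\Z$.'' There is no $\Z$ inside $W(\hat D_5)$ acting that way, and the actual translation subgroup $X_*$ does not preserve the $\Z$-component of $l$: acting diagonally by $\mm$ changes the $\Z$-coordinate of the sum by $-\tfrac12\sum\bigl[(-1)^{\eta_i}+(-1)^{\eta'_i}\bigr]m_i$, which depends on $(\eta,\eta')$. The paper avoids this by never forming an affine $l$; it translates the loop clutters into $\Sym^2\rE_0$ first and only then applies the finite $l$ of (\ref{E:mapl}), whose fibres genuinely are $N$-orbits. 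If you keep the structure of your argument, the clean fix is the same: after using an explicit $\mm\in X_*$ (or a product of $s_6$ with a finite Weyl element) to land in $\Sym^2\rE_0$, use the finite $l$ and Proposition \ref{E:transitiveM}; drop the affine $l$ and the appeal to $T\in W(\hat D_5)$.
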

\begin{proof}
Let $e_i$ be $[0,\dots,0,\overset{i}{1},0,\dots,0]\in X_*$. Denote $e_{i}+e_j,i\neq j$ by $\mm_{ij}$. We can conclude  from formula (\ref{E:actionaffine}) that $\mm_{ij}((0)^1,(i))=((0), (i))$ $(-\mm_{ij})((ij)^1,(j))=((ij), (j))$. We finish the proof as in Proposition \ref{E:transitiveM}.
\end{proof}
\section{Proof of the straightened law}\label{S:alghat}
This is a principal section of the present paper.  Recall that a proof of straightened law should consists of two parts. The easy part establishes the form (\ref{E:strrelgeneral}) of relations. The hard part ensures that standard monomials define a basis. We start with the easy part.

\begin{proposition}\label{P:rela}
Each quadric $\Gamma^{\hat{\s}}$ in the defining  relations 
 of algebras $A_{\hdelta}^{\hdelta'}$ (Definition \ref{D:algAdelta}) contains a unique clutter.
More over, for any clutter  $\lambda^{\halpha}\lambda^{\halpha'}$ straightened relation (\ref{E:reluniv}) holds.

\end{proposition}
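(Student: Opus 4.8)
The plan is to reduce the statement for $A_{\hdelta}^{\hdelta'}$ to the finite case already handled in Proposition \ref{E:degreenull}, using the translation action of $T$ together with the affine Weyl group $W(\hat D_5)$. First I would recall that, by Definition \ref{D:algAdelta}, the relations $\Gamma^{\hat\s}$ are the coefficients of $\Gamma^{i}_{\alpha\beta}\lambda^{\alpha}(z)\lambda^{\beta}(z)$, so each $\Gamma^{\s^l}$ is a sum $\sum_{a+b=l}\Gamma^{i}_{\alpha\beta}\lambda^{\alpha^a}\lambda^{\beta^b}$ over pairs whose ``loop degrees'' add to $l$. The monomials $\lambda^{\alpha^a}\lambda^{\beta^b}$ appearing here are exactly the $T$-shifts of the monomials appearing in the finite relations $\Gamma^{\s}$ of (\ref{E:equations}) — this is because the $\Gamma$-matrix coefficients are $z$-independent, so the ``internal'' combinatorics of each affine relation is a shifted copy of a finite one. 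Hence it suffices to understand, for a monomial $\lambda^{\alpha^a}\lambda^{\beta^b}$ with $(\alpha,\beta)\in\rE\times\rE$ indexing a term of some $\Gamma^{\s}$, whether $(\alpha^a,\beta^b)$ is a clutter in $\hat\rE$, is the pair $(\gamma^{c},\gamma'^{c'})=((\alpha\wedge\beta)^{?},(\alpha\vee\beta)^{?})$, or is a ``lower'' term — and to match this against the finite statement.

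The key technical input is Lemma \ref{E:dom}: if $\hat\alpha\in\rE_{r+k}$, $\hat\beta\in\rE_r$ with $|k|\geq 2$ then the two are comparable, so $(\hat\alpha,\hat\beta)$ is never a clutter; and the observation following it, that every clutter in $\hat\rE$ lies (after a $T$-shift) inside $\rE_r$, i.e.\ comes from $M$, or else straddles two consecutive levels, i.e.\ comes from the list (\ref{E:noncomparableloop}). So I would argue as follows. Fix an affine relation $\Gamma^{\hat\s}$. Because the defining quadratic form on $S$ restricted to the two ``halves'' $S^\pm$ (as in the proof of Proposition \ref{E:degreenull}) contains a unique clutter, and because the affinization just tensors this picture with $\C[z,z^{-1}]$, the monomials of $\Gamma^{\hat\s}$ that are clutters of $\hat\rE$ are precisely those coming from the unique finite clutter, but now there may be \emph{two} of them — one living inside a single level $\rE_r$ and one straddling levels $\rE_{r}$ and $\rE_{r+1}$ — unless the interval $[\hdelta,\hdelta']$ is short enough to kill one. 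Here I would invoke Lemma \ref{L:orbithatm}: all clutters of $\hat\rE$ form a single $W(\hat D_5)$-orbit in $\Sym^2\hat\rE$, and (as in Proposition \ref{E:transitiveM} and the finite proof) the affine Weyl group permutes the affine quadrics $\Gamma^{\hat\s}$; counting shows that each affine relation contains exactly one clutter. The straddling-versus-internal dichotomy is resolved by the explicit identifications $s_6(45)^r=(0)^{r+1}$, etc., which show the ``new'' clutters of (\ref{E:noncomparableloop}) are $W(\hat D_5)$-images of the old ones, so they sit in distinct relations.

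Once uniqueness of the clutter in each $\Gamma^{\hat\s}$ is established, the straightened form (\ref{E:reluniv}) for the affine relation follows by the same argument as in Proposition \ref{E:degreenull}: write $\Gamma^{\hat\s}$ with its unique clutter $\lambda^{\halpha}\lambda^{\halpha'}$ singled out; the $\Gamma$-matrix term $\Gamma^{i}_{\alpha\beta}$ that is nonzero for the incomparable pair $(\halpha,\halpha')$ is, by the triality/$W(\hat D_5)$-transitivity argument, conjugate to the finite one, which by Proposition \ref{E:degreenull} also produces a term $\pm\lambda^{\halpha\vee\halpha'}\lambda^{\halpha\wedge\halpha'}$ (here $\halpha\vee\halpha'$, $\halpha\wedge\halpha'$ exist because $\hat\rE$ and its intervals are lattices) with the remaining monomials $\pm\lambda^{\hat\gamma}\lambda^{\hat\gamma'}$, $\hat\gamma<\halpha\wedge\halpha'<\halpha\vee\halpha'<\hat\gamma'$. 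The one point needing care is that in the affine setting the meet $\halpha\wedge\halpha'$ and join $\halpha\vee\halpha'$ might, a priori, fall outside $[\hdelta,\hdelta']$; but this cannot happen because $[\hdelta,\hdelta']$ is an interval in the lattice $\hat\rE$ and $\hdelta\leq\halpha\wedge\halpha'$, $\halpha\vee\halpha'\leq\hdelta'$ follow from $\hdelta\leq\halpha,\halpha'\leq\hdelta'$. \emph{The main obstacle} I anticipate is the bookkeeping in the previous paragraph: rigorously pinning down, for each of the finitely many ``straddling'' incomparable pairs in (\ref{E:noncomparableloop}), which affine quadric $\Gamma^{\hat\s}$ contains it and verifying that no quadric ends up with two clutters (nor zero). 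This is a finite check, best organized by translating everything into $[(0),(1)^1]$ (Picture (\ref{P:kxccvgw15})) and using the explicit formula (\ref{E:actionaffine}) for the $W(\hat D_5)$-action; the uniqueness then follows from the orbit-counting as in the finite case (ten finite clutters $\leftrightarrow$ ten quadrics $\Gamma^{\bm s}$, now replicated $T$-periodically).
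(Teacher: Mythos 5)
Your proposal takes a genuinely different route from the paper, and it has a real gap beyond the ``bookkeeping'' you flag.

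The paper does not establish Proposition~\ref{P:rela} by analyzing the affine quadrics $\Gamma^{\hat\s}$ directly against the lattice $\hat\rE$ with Weyl-orbit counting. Instead it reduces to a case already proved in \cite{Movq}: it splits $S=S^++S^-$ under $\so_8\subset\so_{10}$, observes that $S[z,z^{-1}]=S^+[z,z^{-1}]\oplus S^-[z,z^{-1}]$ is $\hat\so_8$-equivariant, and uses a triality automorphism of $\so_8$ (compatible with the affine Dynkin diagram) to identify the restricted quadrics $\Gamma^{\pm,r}|_{S^{\pm}[z,z^{-1}]}$ with the affinization of a nondegenerate quadratic form on the vector representation $V^8$. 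The straightened law for \emph{that} situation --- clutter uniqueness and the order constraints $\hgamma<\halpha\wedge\halpha'<\halpha\vee\halpha'<\hgamma'$ on \emph{all} remaining monomials --- is quoted from \cite{Movq}, then carried over along the subposet embedding $\hat\rG_4\hookrightarrow\hat\rE$ and finally pushed down by the projection $q:\hat P\to P_{\hdelta}^{\hdelta'}$, with a short argument that $q$ cannot destroy the clutter while keeping some other monomial alive. Lemma~\ref{L:orbithatm} is used only to confirm that incomparability persists in $\hat\rE$, not as the engine of the proof.

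Your argument has two weak points. First, the assertion that each $\Gamma^{\hat\s}$ is ``a shifted copy of a finite relation'' is misleading: $\Gamma^{\s^l}=\sum_{a+b=l}\Gamma^i_{\alpha\beta}\lambda^{\alpha^a}\lambda^{\beta^b}$ contains one term for every splitting $a+b=l$, so it has far more monomials than the finite $\Gamma^{\s}$ and is not a $T$-translate of it; the ``internal combinatorics'' is genuinely new, which is precisely why a nontrivial reduction (the paper's, to an affine quadric) is needed. Second, and more seriously, you want to transfer the full order-structure of the right-hand side of (\ref{E:reluniv}) by ``$W(\hat D_5)$-transitivity''. But Weyl group elements are not order-preserving on $\hat\rE$, so even if one constructs an action of $W(\hat D_5)$ on the quadrics $\Gamma^{\hat\s}$ (the paper only produces the cyclic subgroup generated by $T$ and the involution $\hat u$ as genuine automorphisms; cf.\ Appendix~\ref{S:Aut}), conjugating a straightened relation by a Weyl element does not yield a straightened relation for the image clutter. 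So your uniqueness counting might be salvageable, but the second half of the proposition --- that \emph{every} non-clutter monomial satisfies the dominance conditions of (\ref{E:reluniv}) --- is not addressed by your argument. Closing that gap requires either an independent verification over the infinitely many terms of each affine quadric, or a reduction of the kind the paper actually makes.
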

\begin{proof}
We deduce statements of the proposition from the analogous statements for algebra $\hat A$. We follow closely the arguments of Proposition \ref{E:degreenull}.

A choice of complementary pair of components of gamma-maps $\Gamma^{+},\Gamma^{-}$ (as in the proof of Proposition \ref{E:degreenull}) and a subordinated choice of  $\so_8$, which leaves $\Gamma^{+},\Gamma^{-}$ invariant, allows us to construct a set of relations $\Gamma^{+,r},\Gamma^{-,r},r\in \mathbb{Z}$. By virtue of (\ref{E:spinorsplit}) we have an $\hat{\so_8}$-equivariant identification.
\begin{equation}\label{E:block}
S[z,z^{-1}]=S^+[z,z^{-1}]\times S^-[z,z^{-1}].
\end{equation} 
Let 
\[p_{\pm}:S[z,z^{-1}]\rightarrow S^{\pm}[z,z^{-1}]\]
be the projections. From results of Proposition  \ref{E:degreenull}    $\Gamma^{+,r}=p^*_{+}\Gamma^{+,r}|_{S^{+}[z,z^{-1}]}, \Gamma^{-,r}=p^*_{-}\Gamma^{-,r}|_{S^{-}[z,z^{-1}]}$. The set of quadrics $\Gamma^{+,r}|_{S^{+}[z,z^{-1}]}$ is an affinization of the quadric  $\Gamma^{+}|_{S^{+}}$.  This situation  has been studied in (\cite{Movq}). 
$S^+$ is equivalent by triality to the fundamental representation $V^8$ of $\so_{8}$. As in Proposition \ref{E:degreenull}, we conclude that Hasse diagrams of $V^8[z,z^{-1}]$ and $S^+[z,z^{-1}]$ coincide. We use the fact that the  lowest root in the adjoint representation of $\so_8$ is invariant with respect to our triality automorphism. This immediately follows from the cross shape of the extended Dynkin diagram corresponding to $\hat \so_8$.   The Hasse diagrams $\hat{\rG}_4$ of $V^8[z,z^{-1}]$

\[\begin{split}
&\cdots  \rightarrow  {(3)}^{r} \rightarrow  {(2)}^{r} \begin{array}{c}\nearrow  \begin{array}{c} {(1)}^{r} \\ \text{ } \end{array} \searrow \\ \searrow   \begin{array}{c} \text{ }\\   {(1^*)}^{r}  \end{array} \nearrow \end{array} {(2^*)}^{r}  \rightarrow    
 {(3^*)}^{r} \begin{array}{c}\nearrow  \begin{array}{c} {(4)}^{r+1}  \\ \text{ } \end{array} \searrow \\ \searrow   \begin{array}{c} \text{ }\\ (4^*)^{r}  \end{array} \nearrow \end{array} {(3)}^{r+1} \rightarrow {(2)}^{r+1} \rightarrow\cdots 
\end{split}
 \]
 has been determined in \cite{Movq}.

By the results of  \cite{Movq} each equation of  affinization of a nondegenerate quadric contains precisely  one clutter and relation (\ref{E:reluniv}) holds.
Arguing as in the proof of Proposition \ref{E:degreenull} and using result of Lemma \ref{L:orbithatm} we conclude that indices of the clutter remains to be incomparable in $\hat{\rE}$.
The poset associated with $\hat{\rG}_4$ is subposet in $\hat{\rE}$. Thus $\Gamma^{+,r}|_{S^{+}[z,z^{-1}]}$  lead to straightened relation  understood in the sense of  $\hat{\rE}$ order.

Let $q$ be the projection  $\hat P\rightarrow P_{\hdelta}^{\hdelta'}$. Suppose $\lambda^{\hgamma}\lambda^{\hgamma'}$ is a monomial in $\Gamma^{\s}$ such that $q(\lambda^{\hgamma}\lambda^{\hgamma'})\neq0$. Then $\hgamma,\hgamma'\in [\delta,\delta']$. By the above results, if $\hgamma,\hgamma'$ is not a clutter itself, the indices of the clutter $\lambda^{\halpha}\lambda^{\halpha'}$ in $\Gamma^{\s}$ satisfy $\hgamma\leq \halpha,\halpha'\leq \hgamma'$, which implies that $\halpha, \halpha'\in [\delta,\delta']$ and $q(\lambda^{\halpha}\lambda^{\halpha'})\neq 0$. We conclude that $q(\Gamma^{\s})$ contains a unique clutter. The claim follows because all relations in $A_{\hdelta}^{\hdelta'}$ have this form.

\end{proof}

The difficult part in the proof of straitened law claims that standard monomials are linearly independent. To prove this, we write relations (\ref{E:relgenfunction}) more explicitly:

{\tiny
\begin{equation}\label{E:equationsafexp}
\begin{split}
&\Gamma^{(1)^{2n}}= \underbracket{w^{n}_{2, 5} w^{n}_{3, 4}} - \overbracket{w^{n}_{2, 4} w^{n}_{3, 5}} +\cdots\sim 0 \quad
\Gamma^{(2)^{2n}}=-\underbracket{w^{n}_{1,5} w^{n}_{3,4}}+\overbracket{w^{n}_{1,4} w^{n}_{3,5}}+\cdots\sim 0\\
&\Gamma^{(3)^{2n}}=\underbracket{w^{n}_{1,5} w^{n}_{2,4}}-\overbracket{w^{n}_{1,4} w^{n}_{2,5}}+\cdots\sim 0 \quad
\Gamma^{(4)^{2n}}=-\underbracket{w^{n}_{1,5} w^{n}_{2,3}}+\overbracket{w^{n}_{1,3} w^{n}_{2,5}}+\cdots\sim 0\\
&\Gamma^{(5)^{2n}}=\underbracket{w^{n}_{1,4} w^{n}_{2,3}}-\overbracket{w^{n}_{1,3} w^{n}_{2,4}}+\cdots\sim 0 \quad
\Gamma^{(1^*)^{2n}}= - \overbracket{p^{n}_{4} w^{n}_{1, 4}} + \underbracket{p^{n}_{5} w^{n}_{1, 5}}+\cdots\sim 0\\
&\Gamma^{(2^*)^{2n}}= - \overbracket{p^{n}_{4} w^{n}_{2, 4}} + \underbracket{p^{n}_{5} w^{n}_{2, 5}}+\cdots\sim 0 \quad
\Gamma^{(3^*)^{2n}}= - \overbracket{p^{n}_{4} w^{n}_{3, 4}} + \underbracket{p^{n}_{5} w^{n}_{3, 5}}+\cdots\sim 0\\
&\Gamma^{(4^*)^{2n}}= - \overbracket{p^{n}_{3} w^{n}_{3, 4}} +\underbracket{p^{n}_{5} w^{n}_{4, 5}}+\cdots\sim 0 \quad
\Gamma^{(5^*)^{2n}}= - \overbracket{p^{n}_{3} w^{n}_{3, 5}} + \underbracket{p^{n}_{4} w^{n}_{4, 5}}+\cdots\sim 0\\
&\Gamma^{(1)^{2n+1}}=\underbracket{\lambda^{n+1} p^{n}_{1}} + \overbracket{w^{n+1}_{2, 3} w^{n}_{4, 5}}+\cdots\sim 0 \quad
\Gamma^{(2)^{2n+1}}=-\underbracket{\lambda^{n+1}  p^{n}_{2}}-\overbracket{w^{n+1}_{1,3} w^{n}_{4,5}}+\cdots\sim 0\\
&\Gamma^{(3)^{2n+1}}=\underbracket{\lambda^{n+1}  p^{n}_{3}}+\overbracket{w^{n+1}_{1,2} w^{n}_{4,5}}+\cdots\sim 0 \quad
\Gamma^{(4)^{2n+1}}=-\underbracket{\lambda^{n+1}  p^{n}_{4}}-\overbracket{w^{n+1}_{1,2} w^{n}_{3,5}}+\cdots\sim 0\\
&\Gamma^{(5)^{2n+1}}=\underbracket{\lambda^{n+1}  p^{n}_{5}}+\overbracket{w^{n+1}_{1,2} w^{n}_{3,4}}+\cdots\sim 0 \quad
\Gamma^{(1^*)^{2n+1}}=-\underbracket{p^{n}_{2} w^{n+1}_{1, 2}} + \overbracket{p^{n}_{3} w^{n+1}_{1, 3}} +\cdots\sim 0\\
&\Gamma^{(2^*)^{2n+1}}=-\underbracket{p^{n}_{1} w^{n+1}_{1, 2}} + \overbracket{p^{n}_{3} w^{n+1}_{2, 3}} +\cdots\sim 0 \quad
\Gamma^{(3^*)^{2n+1}}=-\underbracket{p^{n}_{1} w^{n+1}_{1, 3}} + \overbracket{p^{n}_{2} w^{n+1}_{2, 3}} +\cdots\sim 0\\
&\Gamma^{(4^*)^{2n+1}}=-\underbracket{p^{n}_{1} w^{n+1}_{1, 4}} + \overbracket{p^{n}_{2} w^{n+1}_{2, 4}} +\cdots\sim 0 \quad
\Gamma^{(5^*)^{2n+1}}=-\underbracket{p^{n}_{1} w^{n+1}_{1, 5}} + \overbracket{p^{n}_{2} w^{n+1}_{2, 5}} +\cdots\sim 0
\end{split}
\end{equation}
}
Monomials marked by underbracket $\underbracket{\ \ }$ are formed by incomparable variables. Monomials  $\lambda^{\alpha\wedge\beta}\lambda^{\alpha\vee\beta}$ are marked by overbracket $\overbracket{\ \ }$.

Construction   a basis $B_{\hdelta\ k}^{\hdelta'}$  in $k$-th graded component of $A_{\hdelta}^{\hdelta'}$ follows the line of a similar construction for algebra $A$. 
Monomials $\hat{\lambda}^{\bm{n}}=\prod_{\halpha \in [\hdelta, \hdelta']}(\lambda^{\hat\alpha})^{n_{\hat{\alpha}}}, \deg \lambda^{\bm{n}}=\overline{\bm{n}}=k$ form a basis in $P_{\hdelta\ k}^{\hdelta'}\subset P_{\hdelta}^{\hdelta'}$. 
We construct $B_{\hdelta\ k}^{\hdelta'}\subset\{\hat{\lambda}^{\bm{n}}|\overline{\bm{n}}=k\}$ and then project into $A_{\hdelta\ k}^{\hdelta'}$. We do it  in  several steps. 
The first approximation to $B_{\hdelta\ k}^{\hdelta'}$ is a subset $X_{\hdelta\ k}^{\hdelta'} \subset \{\hat{\lambda}^{\bm{n}}|\overline{\bm{n}}=k\}$ of standard  monomials $\hat\lambda^{\bm{n}}$.
 Comparable indices define a path in $[\hdelta, \hdelta']$, with weights $n_{\hat{\alpha}},\sum n_{\hat{\alpha}}=k $ at the vertices. The general theory tells us that there could be  sequence of sets  \[X_{\hdelta\ k}^{\hdelta'}\supseteq Y_{\hdelta\ k}^{\hdelta'}\supseteq \cdots \supseteq B_{\hdelta\ k}^{\hdelta'}\]
that converges to a basis. We shall prove however that $X_{\hdelta\ k}^{\hdelta'}=B_{\hdelta\ k}^{\hdelta'}$.
We define $\hat{X}_{k}=X^{\infty}_{-\infty\ k}$.

Let $v_{\bm{s}^r}=v_{\bm{s}}z^r, \bm{s}^r\in \hat{\rG}$ be  a basis $V[z,z^{-1}]$  build upon basis (\ref{E:basisvec}), and let $(x_{(\s)^l})=(x_{(1)^l},\dots,x_{(5)^l},x_{(1^*)^l},\dots,x_{(5^*)^l})\quad l\in \mathbb{Z}$ be the dual basis.
We note that Fierz identity (\ref{E:Fierz}) implies 
\begin{equation}\label{E:Fierzaf}
I_{\alpha}(z)=\Gamma^i_{\alpha\beta}\lambda^{\beta}(z)\Gamma^i_{\gamma\delta}\lambda^{\gamma}(z)\lambda^{\delta}(z)=0
\end{equation}
We define multiplication map 
 by the formula
\begin{equation}\label{E:subsaff}
\lambda^{\alpha^{l}}x_{\bm{s}^r}\rightarrow \lambda^{\alpha^{l}}\Gamma^{\bm{s}^r}
\end{equation}

Expressions 
\begin{equation}\label{E:Fierzafmode} 
h_{\alpha^k}=\sum_{l+l'=k}\sum_{\beta\in \rE}\sum_{i=1}^5\Gamma^{i}_{\alpha\beta}\lambda^{\beta^{l'}}x_{(i^*)^l}+\Gamma^{i^*}_{\alpha\beta}\lambda^{\beta^{l'}}x_{(i)^l}
\end{equation}
upon substitution (\ref{E:subsaff}) by virtue (\ref{E:Fierzaf}) become zero.

The action of operator  $T: h_{\alpha^k}\rightarrow  h_{\alpha^{k+3}}$  (\ref{E:shift}) is compatible with its action $T:\Gamma^{\bm{s}^l}\rightarrow \Gamma^{\bm{s}^{l+2}}$ on quadrics  (\ref{E:relgenfunction}) and on generators $\lambda^{\alpha^l}$. The operator $\hat{u}$ (\ref{E:inv}) transforms Fierz identities as follows:\[\hat{u}: h_{\alpha^k}\rightarrow  \pm h_{s(\alpha)^{-k}}.\]
In light of these remarks  the  following lemma is obvious.
\begin{lemma}\label{L:actionts}
Under the group generated by $T, \hat{u}$ any of the expressions (\ref{E:Fierzafmode} ) can be transformed to $h_{\alpha^0}$ or $h_{\alpha^1},\alpha\in \rE$.
\end{lemma}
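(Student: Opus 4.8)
The plan is to reduce everything to bookkeeping on the upper index, exploiting the two transformation rules recorded just above the statement. Up to an overall sign the generators act on the family $\{h_{\alpha^k}\mid \alpha\in\rE,\ k\in\mathbb{Z}\}$ by
\[
T^{\pm 1}\colon h_{\alpha^k}\mapsto h_{\alpha^{k\pm 3}},\qquad
\hat u\colon h_{\alpha^k}\mapsto \pm\, h_{s(\alpha)^{-k}},
\]
so $T$ translates $k$ by multiples of $3$ while $\hat u$ negates $k$ and replaces $\alpha$ by $s(\alpha)\in\rE$. Hence on the index $k$ the group generated by $T$ and $\hat u$ acts through the infinite dihedral group generated by $k\mapsto k+3$ and $k\mapsto -k$, and the lower index simply comes along for the ride, never leaving $\rE$.

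First I would write $k=3m+k'$ with $k'\in\{0,1,2\}$ and apply $T^{-m}$, turning $h_{\alpha^k}$ into $h_{\alpha^{k'}}$. If $k'\in\{0,1\}$ we are already in the asserted form. If $k'=2$, apply $\hat u$ to obtain $\pm h_{s(\alpha)^{-2}}$ and then $T$ once to obtain $\pm h_{s(\alpha)^{1}}$, using $-2+3=1$; since $s(\alpha)\in\rE$ this is again of the form $h_{\beta^1}$. As $\{0,1,2\}$ exhausts the residues modulo $3$, every $h_{\alpha^k}$ is carried to some $h_{\beta^0}$ or $h_{\beta^1}$ with $\beta\in\rE$; in effect $\{0\}$ and $\{1\}$ are the two orbits of $\mathbb{Z}$ under this dihedral action.

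I do not expect a genuine obstacle here: all the substance lies in the two transformation rules for $T$ and $\hat u$ acting on the Fierz expressions, which in turn rest on the compatibility of $T$ with the shifts $\lambda^{\alpha^l}\mapsto \lambda^{\alpha^{l+1}}$ and $\Gamma^{\bm{s}^l}\mapsto \Gamma^{\bm{s}^{l+2}}$ and on the description of $\hat u$ in (\ref{E:inv}). Granting those, what remains is precisely the elementary observation above, which is why the lemma was flagged as obvious.
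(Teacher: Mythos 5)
Your proof is correct, and it fills in precisely the elementary index bookkeeping that the paper dismisses as "obvious" after recording the two transformation rules $T\colon h_{\alpha^k}\mapsto h_{\alpha^{k+3}}$ and $\hat u\colon h_{\alpha^k}\mapsto\pm h_{s(\alpha)^{-k}}$. Reducing $k$ modulo $3$ by $T^{\pm1}$ and handling the residue $2$ via $\hat u$ followed by $T$ (using $-2+3=1$) is exactly the intended argument, and the sign ambiguity is harmless since the $h_{\alpha^k}$ are identities vanishing in $\hat P$.
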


Proposition \ref{P:rela} characterizes quadrics $\Gamma^{\hat{\bm{s}}}\in \hat{P}$ that map to zero in $P_{\hdelta}^{\hdelta'}$ as  those that contains  clutters with indices not  in $[\hdelta, \hdelta']$.
Recall that $\hat{P}$ and its graded components are spaces with topology. By $\hat{\otimes}$ we understand the tensor product completed in the topology.

Monomials $\lambda^{\hat{\gamma}} \otimes \lambda^{\hat{\alpha}} \lambda^{\hat{\beta}}$ with unconstrained indices form a topological  basis in $\hat{P}_1\hat{\otimes} \hat{P}_2$.

The map 
\begin{equation}\label{E:substaffone}
x_{\hat{\bm{s}}}\rightarrow \Gamma^{\hat{\bm{s}}}
\end{equation} identifies the second graded component $\hI_2$ of the ideal $\hI$  with the space dual to $V[z,z^{-1}]$. This allows us to think of $h_{\halpha}$ as elements in $ \hat{P}_1\hat{\otimes} \hI_2\subset \hat{P} \hat{\otimes} \hat{I}$.
\begin{proposition}\label{P:obstr}
An element $h_{\alpha^n}\in  \hat{P} \hat{\otimes} \hat{I}$ contains precisely a pair of noncommutative obstructions  $ \lambda^{\hat{\gamma}} \otimes  \lambda^{\hat{\alpha}} \lambda^{\hat{\beta}}$,  $\lambda^{\hat{\gamma}'}  \otimes \lambda^{\hat{\alpha}'} \lambda^{\hat{\beta}'}$ (see Definition \ref{D:obstruction}), such that their images coincide in $\hat{P}$. Conversely, for any pair of noncommutative obstructions $\lambda^{\hat{\gamma}} \otimes  \lambda^{\hat{\alpha}} \lambda^{\hat{\beta}} \neq \lambda^{\hat{\gamma}'}  \otimes \lambda^{\hat{\alpha}'} \lambda^{\hat{\beta}'}$ with $\lambda^{\hat{\gamma}}\lambda^{\hat{\alpha}} \lambda^{\hat{\beta}} =\lambda^{\hat{\gamma}'}  \lambda^{\hat{\alpha}'} \lambda^{\hat{\beta}'}$ there is $h_{\alpha^n}\in \hat{P} \hat{\otimes} \hat{I}$ that contains these obstructions as monomials.
\end{proposition}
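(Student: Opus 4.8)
The plan is to reduce Proposition \ref{P:obstr} to a purely combinatorial statement about the lattice $\hat{\rE}$ by exploiting the shift operator $T$ and the involution $\hat u$. By Lemma \ref{L:actionts} every expression $h_{\alpha^n}$ is, up to the $\langle T,\hat u\rangle$-action, one of the finitely many $h_{\alpha^0}$ or $h_{\alpha^1}$ with $\alpha\in\rE$; since $T$ and $\hat u$ act compatibly on generators, on quadrics $\Gamma^{\hat{\s}}$, and hence on the monomials of $\hat P\hat\otimes\hat I$, it suffices to verify both assertions of the proposition for this finite list. So the first step is to write out, using the explicit form of the Fierz identities (\ref{E:fierzksimple}) together with the mode expansion (\ref{E:Fierzafmode}) and the substitution (\ref{E:subsaff}), each $h_{\alpha^0},h_{\alpha^1}$ as an element of $\hat P_1\hat\otimes\hat I_2$, i.e. as a sum of terms $\lambda^{\hat\gamma}\otimes\Gamma^{\hat{\s}}$.

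The second step is to inspect which monomials $\lambda^{\hat\gamma}\otimes\lambda^{\hat\alpha}\lambda^{\hat\beta}$ occurring in such an expression are noncommutative obstructions in the sense of Definition \ref{D:obstruction}: this happens exactly when $(\hat\alpha,\hat\beta)$ is a clutter (so the term comes from the unique clutter inside some $\Gamma^{\hat\s}$, which exists by Proposition \ref{P:rela}) and in addition $(\hat\gamma,\hat\alpha)$ or $(\hat\gamma,\hat\beta)$ is a clutter. In the finite setting this was already observed in the proof of Proposition \ref{P:quadbasis}: each $h_\alpha$ contains precisely one pair of terms (the underlined ones) whose images in $P$ coincide and which involve a clutter, and these cancel. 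The present claim is the affine analogue: each $h_{\alpha^n}$ contains exactly two terms $\lambda^{\hat\gamma}\otimes\Gamma^{\hat\s}$ and $\lambda^{\hat\gamma'}\otimes\Gamma^{\hat\s'}$ such that $\lambda^{\hat\gamma}\lambda^{\hat\alpha}\lambda^{\hat\beta}=\lambda^{\hat\gamma'}\lambda^{\hat\alpha'}\lambda^{\hat\beta'}$ as a (cubic, noncommutative-obstruction) monomial in $\hat P$, where $(\hat\alpha,\hat\beta)$ resp. $(\hat\alpha',\hat\beta')$ are the clutters inside $\Gamma^{\hat\s}$ resp. $\Gamma^{\hat\s'}$; no other term of $h_{\alpha^n}$ maps to an obstruction. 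One checks this by going through the list of $h_{\alpha^0},h_{\alpha^1}$ and matching against the list of noncommutative obstructions recorded in Appendix \ref{A:obstr} (the affinized analogue of Table \ref{E:obstructionsord}), using the structure of clutters $\hat K$ in $\hat\rE$ described via (\ref{E:Enoncomp}) and (\ref{E:noncomparableloop}).

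For the converse, I would argue that the list of noncommutative obstructions in $\hat\rE$ is, under $\langle T,\hat u\rangle$, finite, and that the map sending a pair of noncommutative obstructions $\{\lambda^{\hat\gamma}\otimes\lambda^{\hat\alpha}\lambda^{\hat\beta},\ \lambda^{\hat\gamma'}\otimes\lambda^{\hat\alpha'}\lambda^{\hat\beta'}\}$ with equal cubic monomial to the Fierz relation $h_{\alpha^n}$ it appears in is a bijection onto the set of $h_{\alpha^n}$. Concretely, given such a pair one reads off from the common cubic monomial which variable $\theta_{\alpha^n}$ (equivalently, which weight of $S[z,z^{-1}]$) indexes the Fierz identity — namely the weight complementary to the three indices involved — and then verifies that this $h_{\alpha^n}$ indeed contains both obstructions; a dimension/counting check (the number of $\langle T,\hat u\rangle$-orbits of Fierz identities equals the number of orbits of obstruction pairs) upgrades injectivity to bijectivity. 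The main obstacle I expect is purely bookkeeping: keeping the mode indices $l,l'$ straight in (\ref{E:Fierzafmode}) so that the two candidate terms really produce the \emph{same} monomial in $\hat P$ (the signs must be opposite, as in the finite case, which is what makes the cancellation in $\hat P$ work while leaving a genuine identity in $\hat P\hat\otimes\hat I$), and confirming exhaustiveness — that no third term of any $h_{\alpha^n}$ sneaks in as an obstruction. This is exactly the affine counterpart of the sentence ``No other term in $h_\alpha$ contains an obstruction'' from the proof of Proposition \ref{P:quadbasis}, and it is handled the same way, by direct inspection of the lattice $\hat\rE$ restricted to the window $[(0),(1)^1]$ afforded by Lemma \ref{E:dom}.
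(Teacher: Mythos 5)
Your proposal follows essentially the same route as the paper: reduce to $h_{\alpha^0}$, $h_{\alpha^1}$ via Lemma \ref{L:actionts}, tabulate these explicitly, read off the unique complementary pair of obstruction terms in each (using Proposition \ref{P:rela} and Lemma \ref{E:dom} to exclude other terms), and for the converse match against the finite list in Appendix \ref{A:obstr} after translating into the window $[(0),(1)^1]$ by $T,\hat u$. The only divergence is cosmetic: where you suggest supplementing direct verification with a weight-matching plus orbit-counting argument for the converse, the paper simply performs the case-by-case check directly, noting along the way that some obstructions in $[(0),(1)^1]$ land in $h_{\alpha^2}$, which it then folds back to $h_{\alpha^1}$ by $T,\hat u$; your reduction handles this automatically, and the counting check is an unneeded extra (you would have to show the orbit counts actually agree, which you do not carry out, but the direct verification you also propose already suffices).
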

\begin{proof}
The transformations $T,\hat{u}$ act simultaneously on the ideal $\hI$ and on the poset $\hat{\rE}$ and induce transformations on the sets of clutters and obstructions. By the results of Lemma \ref{L:actionts}, it suffices to analyze $h_{\alpha^0}$ and $h_{\alpha^1}$. 

The content of $h_{\alpha^1}$ is tabulated below.
\begin{equation}\label{E:fierzafobstr}
{\tiny
\begin{split}
&h_{(0)^1}= p^{1}_{1} x_{(1^*)^{0}}-p^{1}_{2} x_{(2^*)^{0}}+p^{1}_{3} x_{(3^*)^{0}}-{p^{1}_{4} x_{(4^*)^{0}}}+{p^{1}_{5} x_{(5^*)^{0}}} +  \underbracket{p^{0}_{1} x_{(1^*)^{1}}}-\underbracket{p^{0}_{2} x_{(2^*)^{1}}}+p^{0}_{3} x_{(3^*)^{1}}-{p^{0}_{4} x_{(4^*)^{1}}}+{p^{0}_{5} x_{(5^*)^{1}}} \cdots=0 \\
&h_{(12)^1}={w^{1}_{4,5} x_{(3^*)^{0}}}-{w^{1}_{3,5} x_{(4^*)^{0}}}+w^{1}_{3,4} x_{(5^*)^{0}}-p^{1}_{2} x_{(1)^{0}}-p^{1}_{1} x_{(2)^{0}}+ {w^{0}_{4,5} x_{(3^*)^{1}}}-{w^{0}_{3,5} x_{(4^*)^{1}}}+w^{0}_{3,4} x_{(5^*)^{1}}-\underbracket{p^{0}_{2} x_{(1)^{1}}}-\underbracket{p^{0}_{1} x_{(2)^{1}}}\cdots=0\\
&h_{(13)^1}=-{w^{1}_{4, 5} x_{(2^*)^{0}}} + {w^{1}_{2, 5} x_{(4^*)^{0}}} - w^{1}_{2, 4} x_{(5^*)^{0}} + p^{1}_{3} x_{(1)^{0}} - p^{1}_{1} x_{(3)^{0}} -{w^{0}_{4, 5} x_{(2^*)^{1}}} + {w^{0}_{2, 5} x_{(4^*)^{1}}} - w^{0}_{2, 4} x_{(5^*)^{1}} + \underbracket{p^{0}_{3} x_{(1)^{1}}} - \underbracket{p^{0}_{1} x_{(3)^{1}}}\cdots=0\\
&h_{(23)^1}= {w^{1}_{4, 5} x_{(1^*)^{0}}} - {w^{1}_{1, 5} x_{(4^*)^{0}}} + w^{1}_{1, 4} x_{(5^*)^{0}} + p^{1}_{3} x_{(2)^{0}} + p^{1}_{2} x_{(3)^{0}}  +{w^{0}_{4, 5} x_{(1^*)^{1}}} - {w^{0}_{1, 5} x_{(4^*)^{1}}} + w^{0}_{1, 4} x_{(5^*)^{1}} + \underbracket{p^{0}_{3} x_{(2)^{1}}} + \underbracket{p^{0}_{2} x_{(3)^{1}}}\cdots=0\\
&h_{(14)^1}={w^{1}_{3, 5} x_{(2^*)^{0}}} - {w^{1}_{2, 5} x_{(3^*)^{0}}} + w^{1}_{2, 3} x_{(5^*)^{0}} - p^{1}_{4} x_{(1)^{0}} - p^{1}_{1} x_{(4)^{0}}+ {w^{0}_{3, 5} x_{(2^*)^{1}}} - {w^{0}_{2, 5} x_{(3^*)^{1}}} + w^{0}_{2, 3} x_{(5^*)^{1}} - \underbracket{p^{0}_{4} x_{(1)^{1}}} - \underbracket{p^{0}_{1} x_{(4)^{1}}}\cdots=0\\
&h_{(24)^1}=-{w^{1}_{3, 5} x_{(1^*)^{0}}} + {w^{1}_{1, 5} x_{(3^*)^{0}}} - w^{1}_{1, 3} x_{(5^*)^{0}} - p^{1}_{4} x_{(2)^{0}} + p^{1}_{2} x_{(4)^{0}} -{w^{0}_{3, 5} x_{(1^*)^{1}}} + {w^{0}_{1, 5} x_{(3^*)^{1}}} - w^{0}_{1, 3} x_{(5^*)^{1}} - \underbracket{p^{0}_{4} x_{(2)^{1}}} + \underbracket{p^{0}_{2} x_{(4)^{1}}}\cdots=0\\
&h_{(15)^1}=-{w^{1}_{3, 4} x_{(2^*)^{0}}} + w^{1}_{2, 4} x_{(3^*)^{0}} - w^{1}_{2, 3} x_{(4^*)^{0}} + {p^{1}_{5} x_{(1)^{0}}} - p^{1}_{1} x_{(5)^{0}} -{w^{0}_{3, 4} x_{(2^*)^{1}}} + w^{0}_{2, 4} x_{(3^*)^{1}} - w^{0}_{2, 3} x_{(4^*)^{1}} + \underbracket{p^{0}_{5} x_{(1)^{1}}} - \underbracket{p^{0}_{1} x_{(5)^{1}}}\cdots=0\\
&h_{(34)^1}={w^{1}_{2, 5} x_{(1^*)^{0}}} - {w^{1}_{1, 5} x_{(2^*)^{0}}} + w^{1}_{1, 2} x_{(5^*)^{0}} - p^{1}_{4} x_{(3)^{0}} - p^{1}_{3} x_{(4)^{0}} +{w^{0}_{2, 5} x_{(1^*)^{1}}} - {w^{0}_{1, 5} x_{(2^*)^{1}}} + w^{0}_{1, 2} x_{(5^*)^{1}} - \underbracket{p^{0}_{4} x_{(3)^{1}}} - \underbracket{p^{0}_{3} x_{(4)^{1}}}\cdots=0\\
&h_{(25)^1}={w^{1}_{3, 4} x_{(1^*)^{0}}} - w^{1}_{1, 4} x_{(3^*)^{0}} + w^{1}_{1, 3} x_{(4^*)^{0}} + {p^{1}_{5} x_{(2)^{0}}} + p^{1}_{2} x_{(5)^{0}} +{w^{0}_{3, 4} x_{(1^*)^{1}}} - w^{0}_{1, 4} x_{(3^*)^{1}} + w^{0}_{1, 3} x_{(4^*)^{1}} + \underbracket{p^{0}_{5} x_{(2)^{1}}} + \underbracket{p^{0}_{2} x_{(5)^{1}}}\cdots=0\\
&h_{(5)^1}=\underbracket{\lambda^{1}  x_{(5^*)^{0}}}+{w^{1}_{1,5} x_{(1)^{0}}}+{w^{1}_{2,5} x_{(2)^{0}}}+w^{1}_{3,5} x_{(3)^{0}}+w^{1}_{4,5} x_{(4)^{0}} +\lambda^{0}  x_{(5^*)^{1}}+{w^{0}_{1,5} x_{(1)^{1}}}+{w^{0}_{2,5} x_{(2)^{1}}}+w^{0}_{3,5} x_{(3)^{1}}+\underbracket{w^{0}_{4,5} x_{(4)^{1}}}\cdots=0\\
&h_{(35)^1}=-{w^{1}_{2, 4} x_{(1^*)^{0}}} + w^{1}_{1, 4} x_{(2^*)^{0}} - w^{1}_{1, 2} x_{(4^*)^{0}} + {p^{1}_{5} x_{(3)^{0}}} - p^{1}_{3} x_{(5)^{0}} -{w^{0}_{2, 4} x_{(1^*)^{1}}} + w^{0}_{1, 4} x_{(2^*)^{1}} - w^{0}_{1, 2} x_{(4^*)^{1}} + \underbracket{p^{0}_{5} x_{(3)^{1}}} - \underbracket{p^{0}_{3} x_{(5)^{1}}}\cdots=0\\
&h_{(4)^1}=-\underbracket{\lambda^{1} x_{(4^*)^{0}}} - w^{1}_{1, 4} x_{(1)^{0}} - {w^{1}_{2, 4} x_{(2)^{0}}} - {w^{1}_{3, 4} x_{(3)^{0}}} +  w^{1}_{4, 5} x_{(5)^{0}} -\lambda^{0} x_{(4^*)^{1}} - w^{0}_{1, 4} x_{(1)^{1}} - {w^{0}_{2, 4} x_{(2)^{1}}} - {w^{0}_{3, 4} x_{(3)^{1}}} +  \underbracket{w^{0}_{4, 5} x_{(5)^{1}}}\cdots=0\\
&h_{(45)^1}={w^{1}_{2, 3} x_{(1^*)^{0}}} - w^{1}_{1, 3} x_{(2^*)^{0}} + w^{1}_{1, 2} x_{(3^*)^{0}} + {p^{1}_{5} x_{(4)^{0}}} + p^{1}_{4} x_{(5)^{0}}+ {w^{0}_{2, 3} x_{(1^*)^{1}}} - w^{0}_{1, 3} x_{(2^*)^{1}} + w^{0}_{1, 2} x_{(3^*)^{1}} + \underbracket{p^{0}_{5} x_{(4)^{1}}} + \underbracket{p^{0}_{4} x_{(5)^{1}}}\cdots=0\\
&h_{(3)^1}=\underbracket{\lambda^{1}  x_{(3^*)^{0}}}+w^{1}_{1,3} x_{(1)^{0}}+{w^{1}_{2,3} x_{(2)^{0}}}-{w^{1}_{3,4} x_{(4)^{0}}}-w^{1}_{3,5} x_{(5)^{0}}+ \lambda^{0}  x_{(3^*)^{1}}+w^{0}_{1,3} x_{(1)^{1}}+{w^{0}_{2,3} x_{(2)^{1}}}-{w^{0}_{3,4} x_{(4)^{1}}}-\underbracket{w^{0}_{3,5} x_{(5)^{1}}}\cdots=0\\
&h_{(2)^1}=-\underbracket{\lambda^{1} x_{(2^*)^{0}}} - w^{1}_{1, 2} x_{(1)^{0}} + {w^{1}_{2, 3} x_{(3)^{0}}} + {w^{1}_{2, 4} x_{(4)^{0}}} +  w^{1}_{2, 5} x_{(5)^{0}} -\lambda^{0} x_{(2^*)^{1}} - w^{0}_{1, 2} x_{(1)^{1}} + {w^{0}_{2, 3} x_{(3)^{1}}} + {w^{0}_{2, 4} x_{(4)^{1}}} +  \underbracket{w^{0}_{2, 5} x_{(5)^{1}}}\cdots=0\\
&h_{(1)^1}=\underbracket{\lambda^{1} x_{(1^*)^{0}}} - w^{1}_{1, 2} x_{(2)^{0}} - w^{1}_{1, 3} x_{(3)^{0}} -{ w^{1}_{1, 4} x_{(4)^{0}}} - {w^{1}_{1, 5} x_{(5)^{0}}} +\lambda^{0} x_{(1^*)^{1}} - w^{0}_{1, 2} x_{(2)^{1}} - w^{0}_{1, 3} x_{(3)^{1}} -{ w^{0}_{1, 4} x_{(4)^{1}}} - \underbracket{w^{0}_{1, 5} x_{(5)^{1}}}\cdots=0
\end{split}
}
\end{equation}
The underlined terms after substitution (\ref{E:substaffone})  contain complementary pairs of noncommutative obstructions (\ref{E:ncobstraf}) .
The reader can check using equations (\ref{E:relgenfunction}) that noncommutative obstructions have coefficients equal to $\pm1$ and have opposite signs. There is another way to become convinced in the choice of coefficients of obstructing monomials:  the image of $h_{\hat{\alpha}}$ in $\hat{P}$ is zero and for this coefficients of obstructions must have opposite signs.
The omitted terms in (\ref{E:fierzafobstr}) do not contribute to the set of obstructions because of results of Lemma \ref{E:dom} and explicit description of clutters in $\Gamma^{\hat{\bm{s}}}$ (\ref{E:relgenfunction})

Zero modes  in $h_{\alpha^0}$ coincide with $h_{\alpha}$ (\ref{E:fierzksimple}). Analysis of the remaining modes in  $h_{\alpha^0}$ can be carried out along the lines of $h_{\alpha^1}$ case.

The converse statement of the proposition can be proved as follows. First, take an obstruction and transform it to an obstruction in $[(0),(1)^1]$ by means of  $T$ and $\hat{u}$.   Then, a case-by-case  study (Appendix \ref{A:obstr}) reveals that  obstruction in  $[(0),(1)^1]$ are in one-to-one correspondence  to a noncommutative obstructions in some $h_{\alpha^0}$, $h_{\alpha^1}$ or $h_{\alpha^2}$. The $h_{\alpha^2}$-case as we already know can be reduced to  $h_{\alpha^1}$-case by means of automorphisms $T$ and $\hat{u}$.
\end{proof}

\begin{proposition}\label{P:obstrfin}
For  any pair of complementary noncommutative obstruction $\lambda^{\halpha}\otimes \lambda^{\hbeta}\lambda^{\hgamma}, \lambda^{\halpha'}\otimes \lambda^{\hbeta'}\lambda^{\hgamma'}\in P_{\hdelta}^{\hdelta'}\otimes I_{\hdelta}^{\hdelta'}$ there is $h_{\halpha}$ (\ref{E:Fierzafmode}) such that after the substitution (\ref{E:substaffone}) and projection  $p:P\rightarrow P_{\hdelta}^{\hdelta'}$ expression  $h_{\halpha}$ transforms to an expression that contains  monomials $\lambda^{\halpha}\otimes \lambda^{\hbeta}\lambda^{\hgamma}, \lambda^{\halpha'}\otimes \lambda^{\hbeta'}\lambda^{\hgamma'}$.
\end{proposition}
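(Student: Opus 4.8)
The plan is to deduce the statement from its counterpart Proposition~\ref{P:obstr} over the large algebra $\hat P$, by transporting the relevant Fierz identity along the projection $p$. Since both obstructions $\lambda^{\halpha}\otimes\lambda^{\hbeta}\lambda^{\hgamma}$ and $\lambda^{\halpha'}\otimes\lambda^{\hbeta'}\lambda^{\hgamma'}$ lie in $P_{\hdelta}^{\hdelta'}\otimes I_{\hdelta}^{\hdelta'}$, the clutters $(\hbeta,\hgamma)$ and $(\hbeta',\hgamma')$ are clutters of relations of $A_{\hdelta}^{\hdelta'}$, so all six indices $\halpha,\hbeta,\hgamma,\halpha',\hbeta',\hgamma'$ lie in $[\hdelta,\hdelta']\subset\hat\rE$ (cf.\ the Remark after Corollary~\ref{P:quadbasisc}). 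Hence this same data is a complementary pair of noncommutative obstructions in $\hat P\hat\otimes\hat I$, and the converse part of Proposition~\ref{P:obstr} provides an expression $h_{\hat\mu}$ of the form (\ref{E:Fierzafmode}), for a suitable $\hat\mu\in\hat\rE$, which after the substitution (\ref{E:substaffone}) contains both of these monomials. It remains to show that the image $(p\otimes p)(h_{\hat\mu})$ still contains them.

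The key point is that, after substitution, each obstruction monomial is produced by a \emph{single} summand of $h_{\hat\mu}$, and this summand survives the projection. Indeed $h_{\hat\mu}$ is a linear combination of terms $c\,\lambda^{\hat\nu}\otimes\Gamma^{\hat\s}$ with $c=\pm1$ (the structure constants in (\ref{E:Fierzafmode}) are $\pm1$, as visible in (\ref{E:fierzafobstr})); a term of this shape can contribute $\lambda^{\halpha}\otimes\lambda^{\hbeta}\lambda^{\hgamma}$ only if $\hat\nu=\halpha$ and the relation $\Gamma^{\hat\s}$ contains the monomial $\lambda^{\hbeta}\lambda^{\hgamma}$. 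But $(\hbeta,\hgamma)$ is a clutter, hence its two indices are incomparable, while by the straightened form (\ref{E:reluniv})--(\ref{E:equationsafexp}) established in Proposition~\ref{P:rela} the only monomial with incomparable indices occurring in any relation is that relation's clutter. Therefore $\lambda^{\hbeta}\lambda^{\hgamma}$ occurs in exactly one relation $\Gamma^{\hat\s}$, namely the one it is the clutter of, so $\lambda^{\halpha}\otimes\lambda^{\hbeta}\lambda^{\hgamma}$ occurs in $h_{\hat\mu}$ in exactly one summand, with coefficient $\pm1$; likewise for $\lambda^{\halpha'}\otimes\lambda^{\hbeta'}\lambda^{\hgamma'}$.

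Now apply $p$. Because $\halpha\in[\hdelta,\hdelta']$ we have $p(\lambda^{\halpha})=\lambda^{\halpha}\ne0$, and because the clutter $(\hbeta,\hgamma)$ of $\Gamma^{\hat\s}$ has both indices in $[\hdelta,\hdelta']$, Proposition~\ref{P:rela} gives that $p(\Gamma^{\hat\s})$ is a nonzero relation of $A_{\hdelta}^{\hdelta'}$ still containing the clutter $\lambda^{\hbeta}\lambda^{\hgamma}$. Hence the distinguished summand $c\,\lambda^{\halpha}\otimes\Gamma^{\hat\s}$ is carried by $p\otimes p$ to an expression still containing $c\,\lambda^{\halpha}\otimes\lambda^{\hbeta}\lambda^{\hgamma}$. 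No other summand of $h_{\hat\mu}$ can cancel it: a contribution to this monomial after projection would require another relation $\Gamma^{\hat\s'}$ whose image $p(\Gamma^{\hat\s'})$---whose monomials form a subset of those of $\Gamma^{\hat\s'}$---contained $\lambda^{\hbeta}\lambda^{\hgamma}$, which is impossible by the uniqueness established above. So $\lambda^{\halpha}\otimes\lambda^{\hbeta}\lambda^{\hgamma}$, and symmetrically $\lambda^{\halpha'}\otimes\lambda^{\hbeta'}\lambda^{\hgamma'}$, survives in $(p\otimes p)(h_{\hat\mu})$ with coefficient $\pm1$, as claimed.

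The only real work sits in the last two steps: confirming that the projection neither kills the term carrying an obstruction---safe because the clutter indices remain inside $[\hdelta,\hdelta']$, so Proposition~\ref{P:rela} keeps the clutter alive---nor lets some other term produce a cancelling copy of the same monomial, which is ruled out by the bijection between clutters and relations. The complementarity (opposite signs, equal images in $P_{\hdelta}^{\hdelta'}$) is then inherited automatically, since $p$ is an algebra homomorphism and the two images already coincided in $\hat P$ by Proposition~\ref{P:obstr}.
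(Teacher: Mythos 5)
Your proof is correct and takes essentially the same route as the paper: reduce to Proposition~\ref{P:obstr} over $\hat P\,\hat\otimes\,\hat I$, then transport the Fierz identity along the projection $p$. The paper's own argument is shorter because it leans more directly on the fact that $p$ (hence $p\otimes p$) is a monomial map — each monomial $\lambda^{\hat\nu}\otimes\lambda^{\hat\rho}\lambda^{\hat\sigma}$ in the topological basis of $\hat P_1\hat\otimes\hat P_2$ is carried either to itself or to zero, so the coefficient of the obstruction monomial in $(p\otimes p)(h_{\hat\mu})$ is literally the same $\pm1$ it had in $h_{\hat\mu}$; no cross-term cancellation can occur, and the ``bijection between clutters and relations'' argument you give, while correct, is not strictly necessary. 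Your worry about a second relation $\Gamma^{\hat\s'}$ re-producing the clutter monomial is already resolved at the level of $\hat P$ by Proposition~\ref{P:obstr} itself (that proposition asserts the obstruction monomials appear precisely once in the monomial expansion of $h_{\hat\mu}$), so the extra verification via Proposition~\ref{P:rela} is a redundant re-derivation of that fact rather than a new step. Still, nothing is wrong, and the added care makes the invariance of the complementarity (opposite signs, equal images) explicit, which the paper leaves to the reader.
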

\begin{proof}
The map $p$ acts as an identity on $\lambda^{\halpha},\halpha\in [\hdelta,\hdelta']$ and by zero on $\lambda^{\halpha},\halpha\notin [\hdelta,\hdelta']$. Only monomials $\prod_{i}\lambda^{\halpha_i}$ with $\halpha_i\in  [\hdelta,\hdelta']$ survive under the map $p$. 
A noncommutative obstruction  defines two $[\hdelta,\hdelta']$-clutters.    They are still clutters in $\hat{\rE}$. Hence $\lambda^{\halpha}\otimes \lambda^{\hbeta}\lambda^{\hgamma}, \lambda^{\halpha'}\otimes \lambda^{\hbeta'}\lambda^{\hgamma'}$ is  a complementary pair of noncommutative obstructions  in $\hat{P}\hat{\otimes} \hI$. Hence by Proposition \ref{P:obstr} there is a $h_{\halpha}\in \hat{P}\hat{\otimes} \hI$. Its image in $P_{\hdelta}^{\hdelta'}\otimes I_{\hdelta}^{\hdelta'}$  contains $\lambda^{\halpha}\otimes \lambda^{\hbeta}\lambda^{\hgamma}$. 
\end{proof}

We define a strict total order   on the set of generators of $\hat{P}$, which refines a partial order defined on $\hat{\rE}$. The operator $T$ preserves the order:
\begin{equation}\label{E:orderaf}\begin{split}&\dots <(4)^{r-1}< (0)^r <(3)^{r-1}<(12)^r<(2)^{r-1}<(13)^r<(1)^{r-1}<(14)^r<(23)^r<\\
&<(15)^r<(24)^r<(25)^r<(34)^r<(35)^r<(5)^r<(45)^r<(4)^r<(0)^{r+1}<\dots\end{split}\end{equation}
The total order on the set of generators of $P_{\hdelta}^{\hdelta'}$ is the restriction of the above. The set of relations  (\ref{E:order}) is compatible with (\ref{E:orderaf}).

\begin{proposition}\label{P:Buchberger1}
If for any pair   $\Gamma^{\hat{\bm{s}}},\Gamma^{\hat{\bm{s}}'}\in I_{\hdelta}^{\hdelta'}\subset P_{\hdelta}^{\hdelta'}$ (\ref{E:relgenfunction}) with $\rT(\Gamma^{\hat{\bm{s}}},\Gamma^{\hat{\bm{s}}'})\neq \rT(\Gamma^{\hat{\bm{s}}})\rT(\Gamma^{\hat{\bm{s}}'})$ the $S$-polynomials are  
\begin{equation}\label{E:bush}
S(\Gamma^{\hat{\bm{s}}},\Gamma^{\hat{\bm{s}}'})=\sum_{\hat{\alpha}\in \hat{\rE}}\sum_{\hat{r}\in \hat{\rG}} c_{\hat{\alpha},\hat{r}}\lambda^{\hat{\alpha}}\Gamma^{\hat{r}}
\end{equation} with no obstructions among monomials in $\lambda^{\hat{\alpha}}\Gamma^{\hat{r}}$ , then the set $X_{\hdelta\ k}^{\hdelta'}$ defines a basis in ${A}_{\hdelta\ k}^{\hdelta'}$.
\end{proposition}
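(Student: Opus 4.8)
The plan is to obtain this proposition exactly as Proposition \ref{P:Buchbergersimp} was obtained in the finite case, namely as a direct application of Buchberger's criterion; the passage from $\rE$ to $\hat{\rE}$ costs only two structural remarks. First I would note that, although $\hat{\rE}$ is infinite, the segment $[\hdelta,\hdelta']$ between two genuine elements $\hdelta\leq\hdelta'$ is finite: this is immediate from the monotonicity of the height function recorded in (\ref{E:htvalues}) together with the fact that each height level of $\hat{\rE}$ carries only finitely many vertices. Hence $P_{\hdelta}^{\hdelta'}=\mathbb{C}[\lambda^{\halpha}\mid\halpha\in[\hdelta,\hdelta']]$ is an ordinary polynomial algebra in finitely many commuting variables and the whole commutative Gr\"{o}bner-basis machinery --- equivalently the Diamond Lemma \cite{Bergman}, cf. \cite{Mora} --- is available verbatim. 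I would take the term order to be the restriction of (\ref{E:orderaf}) to $[\hdelta,\hdelta']$, extended to degree-lexicographic order on monomials; by Proposition \ref{P:rela} every defining relation $\Gamma^{\hat{\bm{s}}}\in I_{\hdelta}^{\hdelta'}$ has the straightened shape (\ref{E:reluniv}), so $\rT(\Gamma^{\hat{\bm{s}}})$ is the unique clutter it contains (this is precisely the compatibility of (\ref{E:orderaf}) with reductions noted just after (\ref{E:orderaf})).

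The next step is to recognise the $S$-pairs. The only overlap ambiguities among the tips are the degree-three monomials $\lambda^{\hgamma}\lambda^{\halpha}\lambda^{\halpha'}$ produced by pairs $\Gamma^{\hat{\bm{s}}},\Gamma^{\hat{\bm{s}}'}$ with $\rT(\Gamma^{\hat{\bm{s}}},\Gamma^{\hat{\bm{s}}'})\neq\rT(\Gamma^{\hat{\bm{s}}})\rT(\Gamma^{\hat{\bm{s}}'})$, i.e. exactly the noncommutative obstructions of Definition \ref{D:obstruction}. The hypothesis of the proposition says precisely that each such $S$-polynomial admits a representation $\sum_{\halpha,\hat{r}}c_{\halpha,\hat{r}}\lambda^{\halpha}\Gamma^{\hat{r}}$ in which no summand $\lambda^{\halpha}\Gamma^{\hat{r}}$ itself contains an obstruction; such a summand then reduces completely under the $\Gamma^{\hat{r}}$ without spawning a fresh ambiguity, so every $S$-polynomial reduces to $0$ in a single pass. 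Buchberger's criterion then yields that $\{\Gamma^{\hat{\bm{s}}}\}$ is a Gr\"{o}bner basis of $I_{\hdelta}^{\hdelta'}$; consequently the leading-term ideal of $I_{\hdelta}^{\hdelta'}$ is the monomial ideal generated by the clutters of $[\hdelta,\hdelta']$, and the monomials divisible by no clutter constitute a $\mathbb{C}$-basis of $A_{\hdelta}^{\hdelta'}=P_{\hdelta}^{\hdelta'}/I_{\hdelta}^{\hdelta'}$; since everything is graded, the degree-$k$ such monomials form a basis of $A_{\hdelta\ k}^{\hdelta'}$.

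It then remains to identify these normal monomials with the standard monomials. Here the one structural input I would invoke is that in $\hat{\rE}$ two elements are incomparable if and only if they form a clutter. This follows from Lemma \ref{E:dom}, which confines any incomparable pair to two consecutive copies $\rE_r$, together with the shift-invariance of the clutter set and the case-by-case inspection of $[(0),(1)^1]$ already carried out in Section \ref{S:posetaf} and Appendix \ref{A:obstr} (and, for the orbit bookkeeping, Proposition \ref{E:transitiveM} and Lemma \ref{L:orbithatm}). Granting this, a monomial $\hat{\lambda}^{\bm{n}}$ is divisible by a clutter precisely when its support contains an incomparable pair, i.e. precisely when $\hat{\lambda}^{\bm{n}}$ is not standard; hence the degree-$k$ normal monomials are exactly the degree-$k$ standard monomials, which is by definition $X_{\hdelta\ k}^{\hdelta'}$. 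This completes the identification of $X_{\hdelta\ k}^{\hdelta'}$ with a basis of $A_{\hdelta\ k}^{\hdelta'}$.

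I do not expect a genuine obstacle inside this proposition: like Proposition \ref{P:Buchbergersimp} it is a formal reduction, and the only points needing care --- that the chosen order really makes the clutter the tip of every relation at once, and that ``no obstruction among the $\lambda^{\halpha}\Gamma^{\hat{r}}$'' is literally the termination condition of Buchberger's algorithm at its first iteration --- are settled by the remark after (\ref{E:orderaf}) and by Propositions \ref{P:obstr} and \ref{P:obstrfin}. The real difficulty of the programme, namely actually verifying the hypothesis by computing the $S$-polynomials of all noncommutative obstructions out of the Fierz identities (\ref{E:Fierz}), (\ref{E:Fierzaf}), belongs to the companion statement --- the affine analogue of Proposition \ref{P:quadbasis} --- and not to this proposition itself.
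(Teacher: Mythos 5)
Your proof is correct and takes the route the paper intends: Proposition \ref{P:Buchberger1} is, like Proposition \ref{P:Buchbergersimp}, a direct application of Buchberger's criterion with termination at the first iteration, and you spell this out faithfully, supplying the (correct) structural remarks that the paper leaves implicit --- that $[\hdelta,\hdelta']$ is finite so ordinary commutative Gr\"obner theory applies, that under the order (\ref{E:orderaf}) the clutter is the tip of each straightened relation so that normal monomials are exactly the standard monomials $X_{\hdelta\ k}^{\hdelta'}$, and that Proposition \ref{P:obstrfin} pertains to verifying the hypothesis rather than to this conditional itself. The paper's one-line proof cites Propositions \ref{P:obstrfin} and \ref{P:quadbasis} where the natural reference is Proposition \ref{P:Buchbergersimp}; your reading of the logical roles is the consistent one and does not constitute a deviation from the paper's approach.
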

\begin{proof}
The proof is bases on the result of Proposition \ref{P:obstrfin}. We omit it because it  is similar to the proof of  Proposition \ref{P:quadbasis}.
\end{proof}

\begin{corollary}\label{P:genstrlaw}
The algebra $A_{\hdelta}^{\hdelta'}$ is an algebra with straightened law.
\end{corollary}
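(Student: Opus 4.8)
The plan is to verify, for $A_{\hdelta}^{\hdelta'}$, the two clauses in the definition of an algebra with straightened law from Section \ref{S:Intro}: (i) the standard monomials $\lambda^{\halpha_1}\cdots\lambda^{\halpha_n}$, $\halpha_1\leq\cdots\leq\halpha_n$ in $[\hdelta,\hdelta']$, form a basis of $A_{\hdelta}^{\hdelta'}$; and (ii) for incomparable $\halpha,\hbeta\in[\hdelta,\hdelta']$ the (consequently unique) expansion of $\lambda^{\halpha}\lambda^{\hbeta}$ into standard monomials has all its leading indices $\leq\halpha,\hbeta$. Since $[\hdelta,\hdelta']$ is a finite segment of $\hat{\rE}$, $A_{\hdelta}^{\hdelta'}$ is an ordinary finitely generated graded algebra and no completions enter. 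I will deduce (ii) from Proposition \ref{P:rela} and (i) from Proposition \ref{P:Buchberger1}, whose hypothesis must still be checked.

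Clause (ii) is essentially immediate. Every incomparable pair in $\hat{\rE}$, hence in $[\hdelta,\hdelta']$, is a clutter: the defining relations $\Gamma^{\hat{\s}}$ of $A_{\hdelta}^{\hdelta'}$ correspond bijectively to the clutters of $[\hdelta,\hdelta']$ and exhaust the incomparable pairs, as one reads off the description of $\hat{K}$ (shifts of $M$ from (\ref{E:Enoncomp}) and of (\ref{E:noncomparableloop})) together with Lemma \ref{E:dom}. For such a pair Proposition \ref{P:rela} yields relation (\ref{E:reluniv}), i.e.\ $\lambda^{\halpha}\lambda^{\hbeta}=\mp\lambda^{\halpha\vee\hbeta}\lambda^{\halpha\wedge\hbeta}+\sum_{\hgamma<\halpha\wedge\hbeta,\ \hgamma'>\halpha\vee\hbeta}\pm\lambda^{\hgamma}\lambda^{\hgamma'}$; since $\hat{\rE}$ and each subinterval is a lattice we have $\halpha\wedge\hbeta\leq\halpha\vee\hbeta$ and $\hgamma<\halpha\wedge\hbeta<\halpha\vee\hbeta<\hgamma'$, so the right-hand side is already a combination of standard monomials, and every leading index occurring there, namely $\halpha\wedge\hbeta$ and the various $\hgamma$, is $\leq\halpha\wedge\hbeta\leq\halpha,\hbeta$. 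This is (ii).

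For (i), spanning is clear: reading (\ref{E:reluniv}) as a rewriting rule compatible with the total order (\ref{E:orderaf}) reduces every monomial to a combination of standard ones, so the standard monomials of degree $k$ span $A_{\hdelta\ k}^{\hdelta'}$ and it remains to prove their linear independence, equivalently that $X_{\hdelta\ k}^{\hdelta'}$ is a basis of $A_{\hdelta\ k}^{\hdelta'}$ for every $k$. This is exactly the conclusion of Proposition \ref{P:Buchberger1}, so it suffices to check its hypothesis: for each pair $\Gamma^{\hat{\s}},\Gamma^{\hat{\s}'}$ whose tips overlap nontrivially, $S(\Gamma^{\hat{\s}},\Gamma^{\hat{\s}'})$ rewrites as $\sum c_{\halpha,\hat{r}}\lambda^{\halpha}\Gamma^{\hat{r}}$ with no obstruction among the monomials of the $\lambda^{\halpha}\Gamma^{\hat{r}}$. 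Here the Fierz identities do the work. Such a tip overlap is precisely a degree-three noncommutative obstruction; by Proposition \ref{P:obstrfin} this obstruction and its complementary partner occur, with opposite signs, inside one of the expressions $h_{\halpha^n}$ of (\ref{E:Fierzafmode}), and since the Fierz identity (\ref{E:Fierzaf}) forces $h_{\halpha^n}$ to vanish after the substitution (\ref{E:substaffone}) (a fortiori over $P_{\hdelta}^{\hdelta'}$), the two obstruction monomials cancel and rearranging $h_{\halpha^n}$ presents $S(\Gamma^{\hat{\s}},\Gamma^{\hat{\s}'})$ in the desired form; that no further obstruction occurs among the remaining $\lambda^{\halpha}\Gamma^{\hat{r}}$ follows from the explicit description of the clutters of $\hat{\rE}$ via Lemma \ref{E:dom} and (\ref{E:noncomparableloop}). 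Proposition \ref{P:Buchberger1} then gives (i), and together with (ii) this shows $A_{\hdelta}^{\hdelta'}$ is an algebra with straightened law.

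The one genuinely laborious point is the bookkeeping in the last step: one must know that \emph{every} degree-three obstruction of $[\hdelta,\hdelta']$ is matched by precisely one Fierz relation $h_{\halpha^n}$, and that the auxiliary monomials $\lambda^{\halpha}\Gamma^{\hat{r}}$ so produced are obstruction-free. Both reduce to a finite verification: by the translation $T$ and the involution $\hat{u}$ (Lemma \ref{L:actionts}) one brings $h_{\halpha^n}$ to $h_{\halpha^0}$ or $h_{\halpha^1}$, tabulated in (\ref{E:fierzafobstr}), and any obstruction to one lying in $[(0),(1)^1]$, catalogued in Appendix \ref{A:obstr}; carrying out this matching is the content of Propositions \ref{P:obstr} and \ref{P:obstrfin}, and once it is in hand the corollary follows.
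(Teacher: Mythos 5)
Your proof is correct and follows the same route as the paper, which disposes of this corollary in one line ("The same as of Proposition \ref{P:quadbasis}, though based on Propositions \ref{P:rela} and \ref{P:Buchberger1}"). You have simply made explicit what that citation packs in: clause (ii) of the straightened-law definition comes from the form (\ref{E:reluniv}) of the relations established in Proposition \ref{P:rela} together with the lattice property of $[\hdelta,\hdelta']$, and clause (i) comes from checking the Buchberger hypothesis of Proposition \ref{P:Buchberger1} via the affine Fierz identities and the obstruction bookkeeping of Propositions \ref{P:obstr} and \ref{P:obstrfin}.
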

\begin{proof}
The same as of Proposition \ref{P:quadbasis}, though based on 
 Propositions \ref{P:rela} and \ref{P:Buchberger1}.
\end{proof}

\begin{corollary}\label{P:quadbasiscaf} Let us fix  the order (\ref{E:order}) on generators of $P_{\hdelta}^{\hdelta'}$.
Then the relations (\ref{E:relgenfunction}) form a quadratic Gr\"{o}bner basis  in the ideal $I_{\hdelta}^{\hdelta'}$. In particular algebras $A_{\hdelta}^{\hdelta'}$ are Koszul.
\end{corollary}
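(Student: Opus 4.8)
The plan is to run the affine version of the argument that proves Corollary \ref{P:quadbasisc}, feeding in the statements already established in this section: Proposition \ref{P:rela} for the shape of the defining relations, Propositions \ref{P:obstr} and \ref{P:obstrfin} for the obstruction bookkeeping, and Proposition \ref{P:Buchberger1} for Buchberger's criterion. At the outset I would observe that the interval $[\hdelta,\hdelta']$ is finite, since the height function of (\ref{E:htvalues}) is bounded on it, so $P_{\hdelta}^{\hdelta'}$ has finitely many generators and we are working in ordinary commutative Gr\"obner-basis theory, one graded degree at a time.

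First I would pin down the leading terms. Each relation $\Gamma^{\hat{\bm{s}}}$ is homogeneous of degree two in the generators $\lambda^{\halpha}$, and by Proposition \ref{P:rela} it contains a single clutter $\lambda^{\halpha}\lambda^{\halpha'}$ and has the straightened shape (\ref{E:reluniv}). Because the total order (\ref{E:orderaf}) refines the order on $\hat{\rE}$ and is arranged so that eliminating a clutter strictly lowers the leading monomial, the tip $\rT(\Gamma^{\hat{\bm{s}}})$ is exactly the clutter monomial $\lambda^{\halpha}\lambda^{\halpha'}$, a squarefree quadratic monomial. Consequently the only overlaps between two distinct relations are cubic obstructing monomials $\lambda^{\halpha}\lambda^{\hbeta}\lambda^{\hgamma}$ in the sense of Definition \ref{D:obstruction} (self-overlaps are trivial because the tips are squarefree, and nothing of higher degree arises because the tips are quadratic).

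Next I would verify Buchberger's criterion. Given a pair of relations $\Gamma^{\hat{\bm{s}}},\Gamma^{\hat{\bm{s}}'}$ with $\rT(\Gamma^{\hat{\bm{s}}},\Gamma^{\hat{\bm{s}}'})\neq\rT(\Gamma^{\hat{\bm{s}}})\rT(\Gamma^{\hat{\bm{s}}'})$, their overlap defines a complementary pair of noncommutative obstructions, and Proposition \ref{P:obstrfin} supplies a Fierz element $h_{\halpha}$ whose image under the substitution (\ref{E:substaffone}) and the projection to $P_{\hdelta}^{\hdelta'}$ contains the two obstructing monomials with equal coefficients and opposite signs; subtracting a scalar multiple of $h_{\halpha}$ from $S(\Gamma^{\hat{\bm{s}}},\Gamma^{\hat{\bm{s}}'})$ then rewrites it in the form (\ref{E:bush}) with no obstruction among the monomials $\lambda^{\halpha}\Gamma^{\hat{r}}$. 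This is precisely the hypothesis of Proposition \ref{P:Buchberger1}, so all $S$-polynomials reduce to zero modulo the $\Gamma^{\hat{\bm{s}}}$ and the relations (\ref{E:relgenfunction}) form a Gr\"obner basis of $I_{\hdelta}^{\hdelta'}$; it is quadratic because the relations themselves are. Koszulness of $A_{\hdelta}^{\hdelta'}$ is then a formal consequence of the classical theorem (Priddy; see \cite{PP}) that an algebra presented by a quadratic Gr\"obner basis is Koszul, which gives the last assertion.

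The only point where something genuinely has to be checked is the obstruction count behind Buchberger's criterion: that every cubic overlap of two quadratic leading terms is matched, in a complementary pair with the correct signs, by a monomial of some Fierz identity $h_{\halpha^n}$, with no stray obstruction left on either side and no Fierz monomial producing an uncancelled obstruction. That is exactly the content of Propositions \ref{P:obstr} and \ref{P:obstrfin}, proved by translating any obstruction into the window $[(0),(1)^1]$ via the automorphisms $T$ and $\hat{u}$ and running the case list of Appendix \ref{A:obstr}; so in this Corollary it only has to be invoked, and the remaining bookkeeping — identifying the tips, ruling out higher-degree overlaps, and quoting Priddy — is routine.
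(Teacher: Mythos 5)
Your proof is correct and follows essentially the same route as the paper: verify the hypothesis of Proposition~\ref{P:Buchberger1} via the Fierz-identity bookkeeping of Propositions~\ref{P:obstr} and~\ref{P:obstrfin}, then apply Buchberger and the quadratic-Gr\"obner-basis criterion for Koszulness from~\cite{PP}. The paper leaves the hypothesis check implicit (it defers it to the proof of Corollary~\ref{P:genstrlaw}, which in turn points back to Proposition~\ref{P:quadbasis}); you spell it out and correctly read the order as the affine refinement~(\ref{E:orderaf}) rather than the finite order~(\ref{E:order}) cited in the corollary's statement.
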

\begin{proof}
By virtue of general theory  \cite{Mora} \cite{PP} result  follows from (\ref{P:Buchberger1}).
\end{proof}
\begin{proposition}\label{P:reduce}
\begin{enumerate}
\item $Spec(A_{\hdelta}^{\hdelta'})$ is a reduced affine scheme.
\item $\dim Spec(A_{\hdelta}^{\hdelta'})=\rht(\hdelta')-\rht(\hdelta)$.
\end{enumerate}
\end{proposition}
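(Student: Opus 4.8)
The plan is to deduce both assertions from the quadratic Gr\"obner basis of Corollary~\ref{P:quadbasiscaf}, by degenerating $A_{\hdelta}^{\hdelta'}$ onto the Stanley--Reisner ring of the order complex of the poset $[\hdelta,\hdelta']$. Fix the term order on $P_{\hdelta}^{\hdelta'}$ induced by~(\ref{E:orderaf}). By Proposition~\ref{P:rela} each defining quadric $\Gamma^{\hat{\s}}$ has as its tip the unique clutter $\lambda^{\halpha}\lambda^{\halpha'}$ that it contains, and by Corollary~\ref{P:quadbasiscaf} the quadrics form a Gr\"obner basis of $I_{\hdelta}^{\hdelta'}$; hence the initial ideal $\mathrm{in}(I_{\hdelta}^{\hdelta'})$ is generated by the square-free quadratic monomials $\lambda^{\halpha}\lambda^{\halpha'}$ indexed by the clutters inside $[\hdelta,\hdelta']$. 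On the other hand the straightened law (Corollary~\ref{P:genstrlaw}) says that the monomials \emph{not} lying in $\mathrm{in}(I_{\hdelta}^{\hdelta'})$ are exactly the standard monomials, i.e.\ the products along chains of $[\hdelta,\hdelta']$. Therefore $\mathrm{in}(I_{\hdelta}^{\hdelta'})$ is precisely the Stanley--Reisner ideal of the order complex $\Delta=\Delta([\hdelta,\hdelta'])$ whose faces are the chains of $[\hdelta,\hdelta']$, so $P_{\hdelta}^{\hdelta'}/\mathrm{in}(I_{\hdelta}^{\hdelta'})\cong\C[\Delta]$; this is the discrete degeneration canonically attached to an algebra with straightening law (cf.~\cite{EisenbudStr}).

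For part~(1): a Stanley--Reisner ideal, being square-free monomial, is radical, so $\C[\Delta]$ is reduced. I would then use the standard principle that if $\mathrm{in}_{\prec}(J)$ is radical, so is $J$: for $f\in\sqrt{J}$ one has $\mathrm{in}(f)^{m}=\mathrm{in}(f^{m})\in\mathrm{in}(J)$ for some $m$, hence $\mathrm{in}(f)\in\sqrt{\mathrm{in}(J)}=\mathrm{in}(J)$, so $\mathrm{in}(\sqrt{J})\subseteq\mathrm{in}(J)$; the reverse inclusion is trivial, and two ideals $J\subseteq\sqrt{J}$ with the same initial ideal must coincide. Taking $J=I_{\hdelta}^{\hdelta'}$ gives $I_{\hdelta}^{\hdelta'}=\sqrt{I_{\hdelta}^{\hdelta'}}$, i.e.\ $Spec(A_{\hdelta}^{\hdelta'})$ is a reduced affine scheme.

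For part~(2): a flat Gr\"obner degeneration preserves the Hilbert series, hence the Krull dimension, so $\dim Spec(A_{\hdelta}^{\hdelta'})=\dim\C[\Delta]$, and the latter is controlled by the longest chain of $[\hdelta,\hdelta']$. Since $\hat{\rE}$ is graded by the height function $\rht$ of Definition~\ref{D:rht} (each covering edge raising $\rht$ by one) and $[\hdelta,\hdelta']$ is an interval, a longest chain is a saturated chain $\hdelta=\halpha_0\to\cdots\to\halpha_m=\hdelta'$ with $m=\rht(\hdelta')-\rht(\hdelta)$, which yields the claimed value $\dim Spec(A_{\hdelta}^{\hdelta'})=\rht(\hdelta')-\rht(\hdelta)$. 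The same number can also be read off as the order of the pole at $t=1$ of the Poincar\'e series $A_{\hdelta}^{\hdelta'}(1,1,t)=C([\hdelta,\hdelta'])(t)$, whose closed form follows from the recursions of Sections~\ref{S:Poincare} and~\ref{S:genfunaff}.

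As to where the difficulty lies: given Corollary~\ref{P:quadbasiscaf} and Proposition~\ref{P:rela}, this proposition is a formal corollary of the straightening law, so I do not expect a genuine obstacle. The one step deserving an explicit word is the identification of the square-free initial ideal with the Stanley--Reisner ideal of $\Delta([\hdelta,\hdelta'])$ --- equivalently, that no monomial carrying an incomparable pair survives the degeneration --- which is exactly the basis half of the straightened law; and the descent of radicality along the degeneration, which is the elementary lemma on initial ideals recalled above. The remainder is standard Stanley--Reisner theory.
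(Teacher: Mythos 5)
Your argument is correct in spirit and takes essentially the same route as the paper, which simply cites Corollary~3.6 of \cite{EisenbudStr}: the mechanism behind that result is the flat degeneration of a graded algebra with straightening law onto the discrete ASL, i.e.\ the Stanley--Reisner ring of the order complex, which is exactly what you make explicit via the quadratic Gr\"obner basis of Corollary~\ref{P:quadbasiscaf}. Your write-up is somewhat more self-contained: the paper leaves both the reducedness argument and the dimension count to the citation, whereas you spell out the radicality-descent lemma for initial ideals and the Hilbert-series invariance of the Gr\"obner degeneration.

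There is, however, a concrete off-by-one in your final deduction for part~(2), which you have inherited from the proposition as stated. Your maximal chain $\hdelta=\halpha_0\to\cdots\to\halpha_m=\hdelta'$ has $m+1$ vertices, and the Krull dimension of the Stanley--Reisner ring $\C[\Delta]$ of an order complex is the maximal number of \emph{vertices} on a chain, not the number $m$ of covering edges. So the degeneration argument, carried out correctly, gives $\dim Spec(A_{\hdelta}^{\hdelta'})=m+1=\rht(\hdelta')-\rht(\hdelta)+1$, and the step ``which yields the claimed value $\rht(\hdelta')-\rht(\hdelta)$'' does not follow from the chain count as you set it up. Example~\ref{E:proj} confirms the discrepancy: $A_{(0)}^{(15)}$ is a polynomial ring in five variables, so $\dim Spec(A_{(0)}^{(15)})=5$, while $\rht((15))-\rht((0))=4$. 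The same issue shows up in Proposition~\ref{P:depth}, whose explicit regular sequence $y_{\rht(\hdelta)},\dots,y_{\rht(\hdelta')}$ has $\rht(\hdelta')-\rht(\hdelta)+1$ terms, contradicting the stated depth formula. Your argument is sound up to this last step, but the final deduction should be corrected rather than made to agree with the formula as printed.
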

\begin{proof}
The proof follows from Corollary 3.6 in \cite{EisenbudStr}. Corollary reduces computation of dimension to finding the length of the maximal chain $\alpha_1<\cdots<\alpha_d$ in $[{\hdelta},{\hdelta'}]$. For $[{\hdelta},{\hdelta'}]\subset\hat{\rE}$ it is expressible in terms of $\rht$ (\ref{E:ht},\ref{E:htvalues}) and coincides with $\rht(\hdelta')-\rht(\hdelta)$. 
\end{proof}

Depth $\depth(A_{\hdelta}^{\hdelta'})$ is defined as a length of the maximal regular sequence  in the ideal in $A_{\hdelta}^{\hdelta'}$ generated by $\lambda^{\hat{\alpha}}$. 
\begin{proposition}\label{P:depth}
\[\depth(A_{\hdelta}^{\hdelta'})=\dim Spec(A_{\hdelta}^{\hdelta'})=\rht(\hdelta')-\rht(\hdelta)\]
We set  \[y_i=\sum_{\rht(\hat{\alpha})=i} \lambda^{\hat{\alpha}},\] 
with $\rht$ defined in (\ref{E:ht}).  Then $y_{\rht(\hdelta)},\dots,y_{\rht(\hdelta')}$ form a regular sequence in $A_{\hdelta}^{\hdelta'}$.
\end{proposition}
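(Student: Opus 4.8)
The plan is to obtain the numerical identity from the Cohen--Macaulay property already in hand, and then to recognise $y_{\rht(\hdelta)},\dots,y_{\rht(\hdelta')}$ as a homogeneous system of parameters by degenerating $A_{\hdelta}^{\hdelta'}$ to the order complex of $[\hdelta,\hdelta']$.

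First I would note that, by Corollary \ref{P:genstrlaw}, $A_{\hdelta}^{\hdelta'}$ carries a straightened law on $[\hdelta,\hdelta']$, hence is Cohen--Macaulay (\cite{EisenbudStr}; Corollary \ref{C:CohenMacaulay}). For a finitely generated graded $\C$-algebra the Cohen--Macaulay property forces the depth with respect to the irrelevant ideal $(\lambda^{\halpha})$ to equal the Krull dimension, which by Proposition \ref{P:reduce} is $\rht(\hdelta')-\rht(\hdelta)$; this gives the displayed equality. It then remains to show that the $\dim A_{\hdelta}^{\hdelta'}$ homogeneous elements $y_{\rht(\hdelta)},\dots,y_{\rht(\hdelta')}$ form a regular sequence, and, since in a Cohen--Macaulay graded algebra every homogeneous system of parameters is a regular sequence, it is enough to verify that $A_{\hdelta}^{\hdelta'}/(y_{\rht(\hdelta)},\dots,y_{\rht(\hdelta')})$ is finite dimensional over $\C$.

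I would check this finiteness on the flat Gröbner degeneration of Corollary \ref{P:quadbasiscaf}: since the relations (\ref{E:relgenfunction}) form a quadratic Gröbner basis with leading terms the clutters, the standard monomials are precisely the products over chains, and $A_{\hdelta}^{\hdelta'}$ degenerates over $\mathbb{A}^1$ to the Stanley--Reisner ring $\C[\Delta]$ of the order complex $\Delta$ of $[\hdelta,\hdelta']$. Each $y_i$ is a linear form, hence passes through the degeneration unchanged, specialising to $\bar y_i=\sum_{\rht(\halpha)=i}\bar\lambda^{\halpha}\in\C[\Delta]$; so by upper semicontinuity of fibre dimension in the family obtained by imposing the $y_i$, it suffices to prove $\C[\Delta]/(\bar y_{\rht(\hdelta)},\dots,\bar y_{\rht(\hdelta')})$ finite dimensional. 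This is the balanced-complex step: $[\hdelta,\hdelta']$ is graded by $\rht$ (Definition \ref{D:rht}) and distinct elements of a chain have distinct $\rht$, so $\Delta$ is balanced with colour classes $V_i=\{\halpha\in[\hdelta,\hdelta']:\rht(\halpha)=i\}$, every facet being a saturated chain from $\hdelta$ to $\hdelta'$ meeting each $V_i$ in exactly one vertex. Hence a point of $Spec(\C[\Delta])$ supported on a face $F$ can annihilate every $\bar y_i$ only if $F$ meets no $V_i$, i.e. $F=\emptyset$, so the common zero locus of the $\bar y_i$ is the origin and the quotient is Artinian; transporting this back along the flat family yields the regular sequence.

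The step I expect to be the main obstacle is not the balanced-complex computation, which is routine, but making the degeneration fully rigorous in the semi-infinite setting: checking that Corollary \ref{P:quadbasiscaf} yields an honest flat family over $\mathbb{A}^1$ to which the semicontinuity statement applies, and that the linear forms $y_i$ behave as claimed under it. An alternative that sidesteps the degeneration is to argue directly from the straightening rules (\ref{E:reluniv}): each defining relation is homogeneous for the auxiliary grading by $\rht$ (being a $\T^5\times\C^\times$-weight vector, with $\rht$ a cocharacter), so working modulo $(y_{\rht(\hdelta)},\dots,y_{\rht(\hdelta')})$ one could use the straightening moves together with this extra grading to bound the number of surviving standard monomials in each degree; proving that bound is then the technical heart of that route.
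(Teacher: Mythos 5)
Your argument is correct, but it follows a genuinely different route than the paper's. The paper's proof is short and mostly a citation: it introduces the notion of a \emph{wonderful} poset from Eisenbud's ``Introduction to algebras with straightening laws,'' observes by inspection that $\hat{\rE}$ and every $[\hdelta,\hdelta']$ are wonderful, and then invokes Theorem~4.1 of that reference, which asserts directly that for a graded algebra with straightening law over a wonderful poset the rank-level sums $y_i$ form a regular sequence (and that the algebra is Cohen--Macaulay of the stated depth). You instead split the statement into ``depth $=$ dimension'' (from Cohen--Macaulayness, already Corollary~\ref{C:CohenMacaulay}) plus ``the $y_i$ form a system of parameters,'' and prove the latter by flatly degenerating $A_{\hdelta}^{\hdelta'}$ via the quadratic Gr\"{o}bner basis of Corollary~\ref{P:quadbasiscaf} to the Stanley--Reisner ring of the order complex of $[\hdelta,\hdelta']$, where the balanced (rank-colored) structure makes the common zero locus of the $\bar y_i$ visibly the origin, and then pulling back the finiteness by semicontinuity of fiber dimension along the one-parameter family. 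This is a valid and in some ways more transparent proof: it isolates exactly which combinatorial feature of the poset is being used (the existence of the rank function $\rht$ so that chains are rainbow sets), while the paper's ``wonderful'' hypothesis bundles this together with lattice-theoretic conditions inside a black-box theorem. What the paper's route buys is brevity and a single citation that simultaneously yields depth, dimension, the regular sequence, and Cohen--Macaulayness; what your route buys is an elementary verification with no appeal to the wonderfulness axiom.

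Two small remarks. First, the ``main obstacle'' you flag is not actually an obstacle: every interval $[\hdelta,\hdelta']$ in $\hat{\rE}$ is finite, so $A_{\hdelta}^{\hdelta'}$ is a finitely generated graded $\C$-algebra and the Gr\"{o}bner degeneration and the semicontinuity statement are entirely standard (the ``semi-infinite'' object $\hat A$ of Proposition~\ref{P:invlim} is the inverse limit, not any individual $A_{\hdelta}^{\hdelta'}$). In fact Proposition~\ref{P:rela} together with the order (\ref{E:orderaf}) shows the initial ideal is generated precisely by the clutter monomials, i.e.\ by the non-faces of the order complex, so the special fiber is exactly $\C[\Delta]$ as you assert. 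Second, you should be aware that there is an off-by-one tension in the paper's own statement: $y_{\rht(\hdelta)},\dots,y_{\rht(\hdelta')}$ is a list of $\rht(\hdelta')-\rht(\hdelta)+1$ elements, one per rank level, and this is the correct count for a system of parameters; your phrase ``the $\dim A_{\hdelta}^{\hdelta'}$ homogeneous elements'' matches that count. This does not affect the validity of either your argument or the paper's (the cited Eisenbud theorem gives the correct count), but it is worth noticing when reconciling the displayed equality with Proposition~\ref{P:reduce}.
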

\begin{proof}
An element $\beta$ in $F$ is a {\it cover} of $\alpha$ in $F$ if $\beta>{\alpha}$ and no elements lies strictly in between ${\alpha}$ and ${\beta}$.

The poset $F$ is called {\it wonderful} \cite{EisenbudStr} if in the poset $H\cup \{\infty\}\cup\{-\infty\}$ obtained by adjoining greatest and the least elements (if they are not already present), the following condition holds: if $\beta_1,\beta_2<\gamma$ are covers of an element $\alpha$, then there is an element $\beta<\gamma$, which covers both $\beta_1$ and $\beta_2$.

An easy inspection shows that $\hat{\rE}$ and $[{\hdelta},{\hdelta'}]$ are wonderful. The proposition follows Theorem 4.1 in \cite{EisenbudStr}, which establish this in a greater generality for algebras with straightened law over a wonderful poset.
\end{proof}

\begin{corollary}\label{C:CohenMacaulay}
$A_{\hdelta}^{\hdelta'}$ is a Cohen-Macaulay algebra.
\end{corollary}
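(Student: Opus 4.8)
The plan is to deduce the statement directly from the two preceding propositions, using the standard characterization of the Cohen--Macaulay property for a finitely generated graded algebra $A=\bigoplus_{k\ge 0}A_k$ over a field with $A_0=\C$: such an algebra is Cohen--Macaulay if and only if $\depth_{A_+}(A)=\dim A$, where $A_+$ is the irrelevant ideal. Since $[\hdelta,\hdelta']$ is finite, $A_{\hdelta}^{\hdelta'}$ is a quotient of a polynomial ring in finitely many variables, hence Noetherian and graded with $A_+$ equal to the ideal generated by the $\lambda^{\halpha}$, $\halpha\in[\hdelta,\hdelta']$; consequently the quantity $\depth(A_{\hdelta}^{\hdelta'})$ introduced before Proposition \ref{P:depth} is exactly $\depth_{A_+}(A_{\hdelta}^{\hdelta'})$, and one also uses here that for such graded algebras Cohen--Macaulayness at the homogeneous maximal ideal implies Cohen--Macaulayness at every prime.

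First I would record, from part (2) of Proposition \ref{P:reduce}, that the Krull dimension $\dim A_{\hdelta}^{\hdelta'}=\dim Spec(A_{\hdelta}^{\hdelta'})$ equals $\rht(\hdelta')-\rht(\hdelta)$. Then, from Proposition \ref{P:depth}, $\depth(A_{\hdelta}^{\hdelta'})=\rht(\hdelta')-\rht(\hdelta)$ as well, an explicit maximal regular sequence being furnished by $y_{\rht(\hdelta)},\dots,y_{\rht(\hdelta')}$. Comparing the two expressions yields $\depth_{A_+}(A_{\hdelta}^{\hdelta'})=\dim A_{\hdelta}^{\hdelta'}$, which is the Cohen--Macaulay property.

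An alternative and more structural route, which I would mention, is to appeal directly to Theorem 4.1 of \cite{EisenbudStr}: an algebra with straightened law over a wonderful poset is Cohen--Macaulay. Corollary \ref{P:genstrlaw} provides the straightened law, and the verification that $[\hdelta,\hdelta']$ (and $\hat{\rE}$ itself) is wonderful was already carried out inside the proof of Proposition \ref{P:depth}, so the conclusion is immediate. Either way there is no genuine obstacle at this stage: all the substantive work was done in Propositions \ref{P:reduce} and \ref{P:depth}, and the only points requiring a moment's care are the identification of the combinatorially defined depth with the ring-theoretic depth at the irrelevant ideal, and the standard reduction of the Cohen--Macaulay condition for a finitely generated graded algebra to that one ideal.
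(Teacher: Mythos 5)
Your proposal is correct and matches the paper's approach: the paper's proof is simply a citation of Corollary 4.2 of \cite{EisenbudStr}, which is precisely the ``$\depth = \dim$ for a straightened-law algebra over a wonderful poset implies Cohen--Macaulay'' statement that your first route spells out from Propositions \ref{P:reduce} and \ref{P:depth}, and your second route is that same citation stated directly. Your only addition is to make explicit the standard reductions (depth of the graded ring equals depth at the irrelevant ideal; Cohen--Macaulayness at the irrelevant ideal suffices), which the paper leaves implicit inside the reference.
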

\begin{proof}
Follows from Corollary 4.2 \cite{EisenbudStr}.
\end{proof}

\section{Computation of Poincar\'{e} series $A_{\hdelta}^{\hdelta'}(z,q,t)$}\label{S:genfunaff}
Chain counting method that we developed for the purpose of computation of character of pure spinors is fairly general and relies on combinatorics of Hasse diagram. This is why we can carry over  most of the results of Section \ref{S:Poincare} to  setting of algebras $A_{\hdelta}^{\hdelta'}$.

To be precise by the result of Corollary \ref{P:genstrlaw}  standard monomials form a basis in $A_{\hdelta}^{\hdelta'}$. Standard monomials are in one-to-one correspondence with chains in $[\hdelta,{\hdelta'}]\subset \hat{\rE}$. 
A generating function $A_{\hdelta}^{\hdelta'}(z,q,t)$ that encodes Lie-algebraic information about $A_{\hdelta}^{\hdelta'}$ is a specialization of  $C([\hdelta,{\hdelta'}])(t)$ (\ref{E:Fgenfun}).

To be precise, graded components of $A_{\hdelta}^{\hdelta'}$ are representations of a maximal torus $H\subset \Spin(10)$; $\mathbb{C}^{\times}$ acts on coefficients of $\lambda^{\alpha}(z)$ (\ref{E:genfunl}) via reparametrization: \[\lambda^{\alpha}(z)\rightarrow \lambda^{\alpha}(qz).\] 
Let $a_n(z,q)$ be the character of the $ H\times \mathbb{C}^{\times}$ represented in $n$-th graded component of $A_{\hdelta}^{\hdelta'}$. We define \[A_{\hdelta}^{\hdelta'}(z,q,t)=\sum_{n\geq 0}a_n(z,q)t^n.\]
We conclude, as in Section \ref{S:Poincare}, that $C([\hdelta,{\hdelta'}])(t)$, upon substitution  
\begin{equation}\label{E:affsubst}
e_{\alpha^r}=q^re_{\alpha}(z_1,\dots,z_5),
\end{equation} becomes  $A_{\hdelta}^{\hdelta'}(t,z,q)$.

The isomorphism $T^N:A_{\hdelta}^{\hdelta'}\rightarrow A_{T^N\hdelta}^{T^N\hdelta'}$  shifts $\mathbb{C}^{\times}$-weights of all generators on $N$ and
\begin{equation}\label{E:chshift}
A_{\alpha^N}^{\beta^{N'}}(z,q,t)=A_{\alpha}^{\beta^{N'-N}}(z,q,q^{N}t)\quad \alpha,\beta\in \rE
\end{equation}

It is convenient to group generating functions in arrays \[\row{C([\hdelta,{\hdelta'}])(t)}{C([\hdelta,{\hdelta''}])(t)} \quad \hdelta'\neq \hdelta'' \quad \rht(\hdelta')= \rht(\hdelta'')=l\]
(cf. \ref{E:htvalues}).  The ordered pair $(\hdelta',\hdelta'')$ shall be called a {\it complementary} pair of weights. Lattice $\hat{\rE}$ is convenient and narrow (see Section \ref{S:Poincare}). The formulas (\ref{E:tail} \ref{E:reduction}) can be written in a matrix form
\begin{equation}\label{E:matrixreq}
\begin{split}
&\row{C([\hdelta,{\hdelta'}])(t)}{C([\hdelta,{\hdelta''}])(t)}=\row{C([\hdelta,{\hepsilon'}])(t)}{C([\hdelta,{\hepsilon''}])(t)}U_l  \\
& \rht(\hepsilon')+1= \rht(\hepsilon')+1=\rht(\hdelta')= \rht(\hdelta'')=l
\end{split}
\end{equation}
Upon substitution (\ref{E:affsubst}), $U_{l}(z,q,t)$ satisfies  \[U_{l+8}(z,q,t)=U_l(z,q,qt)\]
This enables to effectively recover all $U_{l}(z,q,t)$ from the first eight matrices:
\begin{equation}
\begin{split}
&U_{1}(z,q,t)=\mat{\frac{e_{(0)}t}{1-e_{(12)}t}}{0}{\frac{1}{1-e_{(12)}t}}{\frac{1}{1-e_{(2)}q^{-1}t}}\quad U_{2}(z,q,t)=\mat{\frac{1}{1-e_{(13)}t}}{0}{\frac{e_{(2)}q^{-1}t}{1-e_{(13)}t}}{\frac{1}{1-e_{(1)}q^{-1}t}}\\
&U_{3}(z,q,t)=\mat{\frac{1}{1-e_{(14)}t}}{\frac{1}{1-e_{(23)}t}}{0}{\frac{e_{(1)}q^{-1}t}{1-e_{(23)}t}}
\quad U_{4}(z,q,t)=\mat{\frac{1}{1-e_{(15)}t}}{\frac{e_{(14)}t}{1-e_{(24)}t}}{0}{\frac{1}{1-e_{(24)}t}}\\
&U_{5}(z,q,t)=\mat{\frac{e_{(15)}t}{1-e_{(25)}t}}{0}{\frac{1}{1-e_{(25)}t}}{\frac{1}{1-e_{(34)}t}}
\quad U_{6}(z,q,t)=\mat{\frac{1}{1-e_{(35)}t}}{0}{\frac{e_{(34)}t}{1-e_{(35)}t}}{\frac{1}{1-e_{(5)}t}}\\
&U_{7}(z,q,t)=\mat{\frac{1}{1-e_{(45)}t}}{\frac{1}{1-e_{(4)}t}}{0}{\frac{e_{(5)}t}{1-e_{(4)}t}} \quad U_{8}(z,q,t)=\mat{\frac{1}{1-e_{(0)}qt}}{\frac{e_{(45)}t}{1-e_{(3)}t}}{0}{\frac{1}{1-e_{(3)}t}}\\
\end{split}
\end{equation}
Let us write a formula for $A_{\halpha}^{\hdelta}(z,q,t)$. 

Suppose $\halpha,\hdelta$ are from the first column in the table (\ref{E:htvalues}).
We choose $\hdelta'$  to be complementary to $\hdelta$.
If we make a substitution (\ref{E:affsubst}) into  iterated relation (\ref{E:matrixreq}) we obtain
\[\begin{split}
&A_{\halpha}^{\hdelta}(z,q,t)=\row{A_{\halpha}^{\hdelta}(z,q,t)}{A_{\halpha}^{\hdelta'}(z,q,t)}\col{1}{0}=\\
&=\row{A_{\halpha}^{\halpha}(z,q,t)}{0}U_{\rht(\halpha)} \cdots U_{\rht(\hdelta)}\col{1}{0} 
\end{split}\]
We define a function 
\[k(\halpha)=
\begin{cases}
0&\mbox{ if } \halpha \mbox{ is in the first column of table (\ref{E:htvalues}) } \\
1& \mbox{ otherwise }
\end{cases}
\]
Suppose $\halpha=\alpha^r$. Here is a general formula for the character: 
\begin{equation}\label{E:genchar}\begin{split}
&A_{\halpha}^{\hdelta}(z,q,t)=\row{\frac{1}{1-e_{\alpha}q^rt}}{0}\mat{0}{1}{1}{0}^{k(\halpha)}U_{\rht(\halpha)} \cdots U_{\rht(\hdelta)}\mat{0}{1}{1}{0}^{k(\hdelta)}\col{1}{0} 
\end{split}\end{equation}
In this equation normalization formula (\ref{E:norm}) was taken into account.

In full analogy with equation (\ref{E:matrixreq}) we obtain a set of equations for varying  lower indices.
\begin{equation}
\begin{split}
&\row{A^{\hdelta}_{(0)}(t)}{A^{\hdelta}_{(3)^{-1}}(t)}=\row{A^{\hdelta}_{(12)}(t)}{A^{\hdelta}_{(2)^{-1}}(t)} 
\mat{\frac{1}{1-e_{(0)}t}}{\frac{e_{(12)}t}{1-e_{(3)}q^{-1}t}}{0}{\frac{1}{1-e_{(3)}q^{-1}t}}\quad l=1\\
&\row{A^{\hdelta}_{(12)}(t)}{A^{\hdelta}_{(2)^{-1}}(t)} =\row{A^{\hdelta}_{(13)}(t)}{A^{\hdelta}_{(1)^{-1}}(t)} 
\mat{\frac{1}{1-e_{(12)}t}}{\frac{1}{1-e_{(2)}q^{-1}t}}{0}{\frac{e_{(1)}q^{-1}t}{1-e_{(2)}t}}\quad l=2\\
&\row{A^{\hdelta}_{(13)}(t)}{A^{\hdelta}_{(1)^{-1}}(t)} =\row{A^{\hdelta}_{(14)}(t)}{A^{\hdelta}_{(23)}(t)} 
\mat{\frac{1}{1-e_{(13)}t}}{0}{\frac{e_{(23)}t}{1-e_{(13)}t}}{\frac{1}{1-e_{(1)}q^{-1}t}}\quad l=3\\
&\row{A^{\hdelta}_{(14)}(t)}{A^{\hdelta}_{(23)}(t)} =\row{A^{\hdelta}_{(15)}(t)}{A^{\hdelta}_{(24)}(t)} 
\mat{\frac{e_{(15)}t}{1-e_{(14)}t}}{0}{\frac{1}{1-e_{(14)}t}}{\frac{1}{1-e_{(23)}t}}\quad l=4\\
&\row{A^{\hdelta}_{(15)}(t)}{A^{\hdelta}_{(24)}(t)} =\row{A^{\hdelta}_{(25)}(t)}{A^{\hdelta}_{(34)}(t)} 
\mat{\frac{1}{1-e_{(15)}t}}{\frac{e_{(25)}t}{1-e_{(24)}t}}{0}{\frac{1}{1-e_{(24)}t}}\quad l=5\\
&\row{A^{\hdelta}_{(25)}(t)}{A^{\hdelta}_{(34)}(t)} =\row{A^{\hdelta}_{(35)}(t)}{A^{\hdelta}_{(5)}(t)} 
\mat{\frac{1}{1-e_{(25)}t}}{\frac{1}{1-e_{(34)}t}}{0}{\frac{e_{(5)}t}{1-e_{(34)}t}}\quad l=6\\
&\row{A^{\hdelta}_{(35)}(t)}{A^{\hdelta}_{(5)}(t)} =\row{A^{\hdelta}_{(45)}(t)}{A^{\hdelta}_{(4)}(t)} 
\mat{\frac{1}{1-e_{(35)}t}}{0}{\frac{e_{(4)}t}{1-e_{(35)}t}}{\frac{1}{1-e_{(5)}t}}\quad l=7\\
&\row{A^{\hdelta}_{(45)}(t)}{A^{\hdelta}_{(4)}(t)} =\row{A^{\hdelta}_{(0)^1}(t)}{A^{\hdelta}_{(3)}(t)} 
\mat{\frac{e_{(0)}qt}{1-e_{(45)}t}}{0}{\frac{1}{1-e_{(45)}t}}{\frac{1}{1-e_{(4)}t}} \quad l=8
\end{split}
\end{equation}

\subsection{Special cases}\label{S:specialcases}
The formulas for characters $A_{\hdelta}^{\hdelta'}(z,q,t)$ undergo significant simplification after specialization $z=1,q=1$. The best result can be obtained when we further restrict the range of $\hdelta'$. We shall explore this presently. 

The set of weights $J$ (\ref{E:J}) 
is of interest because 
it is possible to write a vary simple recursive relation between corresponding Poincar\'{e} series.

\begin{proposition}\label{E:recursionDelgen}
The following recursions hold:
\[\begin{split}
&A_{\hgamma}^{{(5)^{r}}}(t) ={\frac { \left( 1-e_{{15}}{q}^{r}t-e_{{14}}e_{{23}}{q}^{2\,r}{t}^{2}
+e_{{15}}e_{{14}}e_{{23}}{q}^{3\,r}{t}^{3} \right) }{
 \left( 1-e_{{13}}{q}^{r}t \right)  \left( 1-e_{{34}}{q}^{r}t
 \right)  \left( 1-e_{{5}}{q}^{r}t \right)  \left( 1-e_{{23}}{q}^{r}
t \right) }}A_{\hgamma}^{{(15)^{r}}}(t) +\\
&{\frac {e_{{1}}{q}^{r-1}t}{ \left( 1-e_{{13}}
{q}^{r}t \right)  \left( 1-e_{{34}}{q}^{r}t \right)  \left( 1-e_{{5}
}{q}^{r}t \right)  \left( 1-e_{{23}}{q}^{r}t \right) }}A_{\hgamma}^{{(1)^{r-1}}}(t)\quad \hgamma\leq (1)^{r-1}
\end{split}\]

\[\begin{split}&A_{\hgamma}^{{(15)^{r}}}(t) ={\frac { \left( 1-e_{{1}}{q}^{r-1}t-e_{{2}}e_{{12}}{q}^{2\,r-1}{t}^{2}
+e_{{1}}e_{{2}}e_{{12}}{q}^{3\,r-2}{t}^{3} \right) }{
 \left( 1-e_{{13}}{q}^{r}t \right)  \left( 1-e_{{14}}{q}^{r}t
 \right)  \left( 1-e_{{15}}{q}^{r}t \right)  \left( 1-e_{{12}}{q}^{r
}t \right) }}A_{\hgamma}^{{(1)^{r-1}}}(t)+\\
&+{\frac {e_{{0}}{q}^{r}t}{ \left( 1-e_{{13}}{q}
^{r}t \right)  \left( 1-e_{{14}}{q}^{r}t \right)  \left( 1-e_{{15}}{
q}^{r}t \right)  \left( 1-e_{{12}}{q}^{r}t \right) }}A_{\hgamma}^{(0)^r}(t) \quad \hgamma\leq (0)^{r}
\end{split}\]

\[\begin{split}
&A_{\hgamma}^{{(1)^{r}}}(t) ={\frac { \left( 1-e_{{0}}{q}^{r+1}t-e_{{45}}e_{{4}}{q}^{2\,r}{t}^{2}+e
_{{0}}e_{{45}}e_{{4}}{q}^{3\,r+1}{t}^{3} \right) }{ \left( 
1-e_{{3}}{q}^{r}t \right)  \left( 1-e_{{2}}{q}^{r}t \right)  \left( 
1-e_{{1}}{q}^{r}t \right)  \left( 1-e_{{4}}{q}^{r}t \right) }}A_{\hgamma}^{{(0)^{r+1}}}(t)+\\
&+{\frac {e_{{5}}{q}^{r}t}{ \left( 1-e_{{3}}{q}^{r}t \right) 
 \left( 1-e_{{2}}{q}^{r}t \right)  \left( 1-e_{{1}}{q}^{r}t \right) 
 \left( 1-e_{{4}}{q}^{r}t \right) }}A_{\hgamma}^{{(5)^{r}}}(t)\quad \hgamma\leq (5)^{r}
\end{split}\]

\[\begin{split}
&A_{\hgamma}^{{(0)^{r}}}(t)={\frac { \left( {q}^{3}-{q}^{2+r}e_{{5}}t-{q}^{2\,r+1}e_{{34}}{t}^{2
}e_{{25}}+{q}^{3\,r}e_{{34}}{t}^{3}e_{{5}}e_{{25}} \right) 
}{ \left( q-e_{{35}}{q}^{r}t \right)  \left( q-e_{{45}}{q}^{r}t
 \right)  \left( 1-e_{{0}}{q}^{r}t \right)  \left( q-e_{{25}}{q}^{r}
t \right) }}A_{\hgamma}^{{(5)^{r-1}}}(t)+\\
&+{\frac {{q}^{2+r}e_{{15}}t}{ \left( q-e_{{35
}}{q}^{r}t \right)  \left( q-e_{{45}}{q}^{r}t \right)  \left( 1-e_{{0
}}{q}^{r}t \right)  \left( q-e_{{25}}{q}^{r}t \right) }}A_{\hgamma}^{{(15)^{r-1}}}(t) \quad \hgamma\leq (15)^{r-1}
\end{split}
\]

\end{proposition}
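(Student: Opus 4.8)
The plan is to derive each of the four displayed recursions from the matrix recursion (\ref{E:matrixreq}) together with the explicit matrices $U_1,\dots,U_8$, by composing the appropriate consecutive blocks and reading off the entry that computes $A_{\hgamma}^{\hdelta'}$. Concretely, each $\hdelta'\in J$ sits at a known height by (\ref{E:htvalues}): $(5)^r$ has height $8r+6$, $(15)^r$ has height $8r+4$, $(1)^{r}$ has height $8r+2$ (after the shift bookkeeping that passes through height $8(r+1)$ for $(0)^{r+1}$), and $(0)^r$ has height $8r$. For a fixed upper bound I would assemble the row vector $\row{A_{\hgamma}^{\hdelta'}(t)}{A_{\hgamma}^{\hdelta''}(t)}$ at the chosen height and express it, via (\ref{E:matrixreq}) applied two steps down, in terms of the corresponding row vector at height two lower, which is exactly the row vector associated to the \emph{previous} element of $J$ (again using that $J$ is a chain in $\hat{\rE}$ spaced by two heights, cf. (\ref{E:J})). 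Multiplying out the product of two of the $U_l$ (with the height-shift rule $U_{l+8}(z,q,t)=U_l(z,q,qt)$ used to get the $q^r$ factors) and projecting onto the first coordinate with $\col{1}{0}$ gives a $2\times 2$ matrix whose $(1,1)$ and $(2,1)$ entries are precisely the two scalar coefficients appearing in the stated recursion.

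First I would fix conventions: for the step from $(15)^r$ down to $(1)^{r-1}$ I note that these differ by two in height ($8r+4$ versus $8(r-1)+2 = 8r-6$? — careful: $(1)^{r-1}$ has height $8(r-1)+2=8r-6$, while $(15)^r$ has height $8r+4$, a difference of $10$, so in fact the chain $J$ is spaced by \emph{more} than two heights in general). I would therefore instead treat each recursion as composing all the intermediate $U_l$ blocks between the two consecutive members of $J$: from $(1)^{r-1}$ (height $8r-6$) up through $(14)^{r},(15)^{r}$ one passes heights $8r-5,\dots,8r+4$, i.e.\ the product $U_{8r-5}\cdots U_{8r+4}$, each reduced mod $8$ via $U_{l+8}=U_l|_{t\mapsto qt}$ to one of $U_1,\dots,U_8$ with appropriate powers of $q$. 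Then the claimed rational-function coefficients of $A_{\hgamma}^{(1)^{r-1}}(t)$ and $A_{\hgamma}^{(0)^{r}}(t)$ in the second display are just the two entries of the first column of that matrix product; the numerators $1-e_1q^{r-1}t-e_2e_{12}q^{2r-1}t^2+e_1e_2e_{12}q^{3r-2}t^3$ should factor out after the cancellations in the matrix multiplication. The same scheme applies to the other three displays, each time identifying which complementary pair $(\hdelta',\hdelta'')$ sits at the relevant height (these are read from (\ref{E:htvalues}): $\{(35)^r,(5)^r\}$, $\{(15)^r,(24)^r\}$, $\{(45)^r,(4)^r\}$, $\{(0)^r,(3)^{r-1}\}$), and then multiplying the corresponding $U$-blocks.

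The only genuinely nontrivial point is verifying that the stated compact numerators (the cubic polynomials in $t$ with the $1,\; -(\text{one weight}),\; -(\text{product of two weights}),\; +(\text{product of three weights})$ pattern) really emerge from the matrix products; this is a finite, entirely mechanical computation — expand each $2\times 2$ product of the explicitly given $U_l$, clear the common denominators, and collect. I would carry it out once for, say, the $(15)^r$-recursion in full detail and then remark that the other three are obtained by the identical procedure with the $U$-blocks shifted cyclically; the constraints $\hgamma\leq (1)^{r-1}$ etc.\ are exactly the conditions guaranteeing that $[\hgamma,\hdelta']$ is a genuine interval so that (\ref{E:matrixreq}) and the normalization (\ref{E:norm}) apply. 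The main obstacle is purely bookkeeping: keeping the powers of $q$ straight under the repeated application of $U_{l+8}(z,q,t)=U_l(z,q,qt)$ across the ten-or-so height steps between consecutive elements of $J$, and confirming the sign pattern in the numerators; no new idea beyond Proposition \ref{P:convnar} and the explicit $U_l$ is needed.
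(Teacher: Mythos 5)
Your core idea --- that the four recursions follow mechanically from Proposition \ref{P:convnar} and the explicit $U_l$ matrices --- is correct and matches the paper's strategy in spirit. But as written, your mechanics would not produce the stated formulas, for two reasons.

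First, you misread the height table. The entry $(1)^{r-1}$ in (\ref{E:htvalues}) has height $8r+2$, not $8(r-1)+2=8r-6$; the table assigns $\rht((1)^{r-1})=8r+2$, i.e.\ $\rht((1)^s)=8s+10$. Consequently the four elements $(15)^r,(5)^r,(0)^{r+1},(1)^r$ of $J$ sit at heights $8r+4,\,8r+6,\,8r+8,\,8r+10$, each differing from the next by exactly $2$, so there is no ``ten-or-so height steps'' to traverse. Your proposed product $U_{8r-5}\cdots U_{8r+4}$ is built on this miscount and would not even target the correct pair of indices.

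Second, and more seriously, you claim the two coefficients on the right-hand side of each recursion are ``the two entries of the first column'' of a single matrix product. That cannot be, because the two right-hand targets sit at \emph{different} heights: in the first display $A_{\hgamma}^{(15)^r}$ is at height $8r+4$ while $A_{\hgamma}^{(1)^{r-1}}$ is at $8r+2$. A single application of (\ref{E:matrixreq}) (iterated however many times) expresses $A_{\hgamma}^{(5)^r}$ as a combination of the complementary pair at one common lower height; e.g.\ descending two heights gives a combination of $A_{\hgamma}^{(15)^r}$ and $A_{\hgamma}^{(24)^r}$, and descending four gives a combination of $A_{\hgamma}^{(13)^r}$ and $A_{\hgamma}^{(1)^{r-1}}$. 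Neither of these is the pair $\{(15)^r,(1)^{r-1}\}$. The missing step is the one the paper's proof actually performs: you must descend from $(5)^r$ until the non-$J$ member at the stopping level becomes, say, $(13)^r$, and then \emph{run the tail formula backwards} --- $A_{\hgamma}^{(13)^r}=(1-e_{15}q^rt)(1-e_{14}q^rt)A_{\hgamma}^{(15)^r}$ --- to climb the unwanted term back up to the $J$-element $(15)^r$. It is precisely this backward substitution that produces the cubic numerators of the form $(1-e_{15}q^rt)(1-e_{14}e_{23}q^{2r}t^2)$; without it, no amount of composing $U_l$'s and projecting onto the first coordinate will generate a mixed-height right-hand side. (The paper's own write-up contains typos in (\ref{E:firsteq}) and in the two ``Expressions,'' but the backward-tail step is present and is the ingredient your plan omits.) To repair your proof, correct the heights, descend four levels via $U$-matrices (or equivalently iterate (\ref{E:tail})/(\ref{E:reduction})) to reach the pair $\{(13)^r,(1)^{r-1}\}$, then convert $A_{\hgamma}^{(13)^r}$ to $A_{\hgamma}^{(15)^r}$ by the backward tail relation; the analogous adjustments apply to the other three displays.
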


\begin{proof}
We give details on the proof in case of $A_{\hdelta}^{{(5)^{r}}}(t)$ only. All other cases can be treated similarly so the proofs are omitted.
Iteration of the formula  (\ref{E:tail}) leads to 
\begin{equation}\label{E:firsteq}
A_{\hdelta}^{{(5)^{r}}}(t)={\frac {1}{ \left( 1-e_{{15}}{q}^{r}t \right)  \left( 1-e_{{
14}}{q}^{r}t \right) }}A_{\hdelta}^{{(13)^{r}}}(t)
\end{equation}

Formulas 
\[\begin{split}&A_{\hdelta}^{{(13)^{r}}}(t)={\frac {1}{1-e_{{13}}{q}^{r}t}}A_{\hdelta}^{{(12)^{r}}}(t)+{\frac {e_{{2}}{q}
^{r-1}t}{1-e_{{13}}{q}^{r}t}}A_{\hdelta}^{{(2)^{r-1}}}(t)\\
&A_{\hdelta}^{{(12)^{r}}}(t)={\frac {e_{{0}}{q}^{r}t}{1-e_{{12}}{q}^{r}t}}A_{\hdelta}^{{(0)^{r}}}(t)+{
\frac {1}{1-e_{{12}}{q}^{r}t}}A_{\hdelta}^{{(2)^{r-1}}}(t)
\end{split}\]

are the special cases of (\ref{E:reduction}).
Expressions
\[A_{\hdelta}^{{(2)^{r-1}}}(t)= \left( 1-e_{{1}}{q}^{r-1}t \right)  \left( 1-e_{{2}}{q}^{
r-1}t \right) A_{\hdelta}^{{(1)^{r-1}}}(t)
\]
\[A_{\hdelta}^{{(2)^{r-1}}}(t)= \left( 1-e_{{1}}{q}^{r-1}t \right) A_{\hdelta}^{{(1)^{r-1}}}(t)\]
are iterations of (\ref{E:tail}) applied backwards. We obtain our result by making substitutions of the presented identities in the order they are written into (\ref{E:firsteq}).
\end{proof}
Let us introduce notations: $B_{\hgamma}^r(t)=A_{\hgamma}^{\hdelta_r}(1,1,t)$, $\hdelta_r\in J$.
\begin{corollary}
Under above assumptions  on labeling weights  the following recursions hold:
\[B_{\hgamma}^r(t)
=\frac{1+t}{(1-t)^2}B_{\hgamma}^{r-1}(t)+\frac{t}{(1-t)^4}B_{\hgamma}^{r-2}(t) \quad \delta_{r-2}\geq \hgamma   \]
\end{corollary}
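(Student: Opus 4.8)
The plan is to obtain the claimed recursion for $B_{\hgamma}^r(t)$ by specializing the four identities of Proposition \ref{E:recursionDelgen} at $z=1, q=1$ and composing them. First I would set $e_{\alpha}=1$ and $q=1$ throughout, so that every factor of the form $1-e_{\beta}q^{m}t$ collapses to $1-t$ and every numerator factor $e_{\gamma}q^{m}$ collapses to $1$. Under this specialization the four identities relate $B_{\hgamma}^r$ to $B_{\hgamma}^{r-1}$ and one auxiliary series: writing the members of $J$ in the cyclic pattern $\dots,(15)^{r-1},(5)^{r-1},(0)^r,(1)^r,(15)^r,(5)^r,\dots$ (cf. (\ref{E:J})), the recursion (\ref{E:Delan}) is the $q=z=1$ avatar of these matrix relations restricted to the four weights $(0),(1),(15),(5)$ that form one period.

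Concretely, I would specialize the four displayed formulas in order. The identity for $A_{\hgamma}^{(5)^r}$ becomes
\[
B_{\hgamma}^{(5)^r}(t)=\frac{1-t-t^2+t^3}{(1-t)^4}\,B_{\hgamma}^{(15)^r}(t)+\frac{t}{(1-t)^4}\,B_{\hgamma}^{(1)^{r-1}}(t),
\]
and since $1-t-t^2+t^3=(1-t)^2(1+t)$ this simplifies to $\frac{1+t}{(1-t)^2}B_{\hgamma}^{(15)^r}+\frac{t}{(1-t)^4}B_{\hgamma}^{(1)^{r-1}}$. Doing the same with the $A_{\hgamma}^{(15)^r}$ identity gives $B_{\hgamma}^{(15)^r}=\frac{1+t}{(1-t)^2}B_{\hgamma}^{(1)^{r-1}}+\frac{t}{(1-t)^4}B_{\hgamma}^{(0)^r}$ (again the cubic numerator factors as $(1-t)^2(1+t)$), and likewise for the $A_{\hgamma}^{(1)^r}$ and $A_{\hgamma}^{(0)^r}$ identities. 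So each step of the cycle contributes the same pair of coefficients $\bigl(\tfrac{1+t}{(1-t)^2},\tfrac{t}{(1-t)^4}\bigr)$; this is exactly the $2\times 2$ transfer matrix $M(t)=\bigl(\begin{smallmatrix} \frac{1+t}{(1-t)^2} & 1\\ \frac{t}{(1-t)^4} & 0\end{smallmatrix}\bigr)$ acting on the vector $\bigl(B_{\hgamma}^{r},B_{\hgamma}^{r-1}\bigr)^{\!\top}$, and iterating it four times (one full period of $J$) and reading off the top entry yields precisely $B_{\hgamma}^r=\frac{1+t}{(1-t)^2}B_{\hgamma}^{r-1}+\frac{t}{(1-t)^4}B_{\hgamma}^{r-2}$. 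The condition $\delta_{r-2}\geq \hgamma$ is just the hypothesis needed so that all intermediate algebras $A_{\hgamma}^{\hdelta}$ with $\hdelta$ between $\delta_{r-2}$ and $\delta_r$ are genuine Richardson-variety algebras, which is what Proposition \ref{E:recursionDelgen} requires.

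The only mildly delicate point—and the one I would write out carefully—is the bookkeeping of which of the four identities applies at which weight and in which direction, i.e. checking that the four specialized coefficient pairs are literally equal rather than merely of the same shape, and that the composition closes up correctly after one period so no spurious $q$ or $e_{\alpha}$ factors survive. This is a finite check using the cyclic structure of $J$ and the $q=z=1$ substitution; once the four numerators are seen to factor as $(1-t)^2(1+t)$ and the four denominators as $(1-t)^4$, the rest is a two-line matrix computation. Hence the corollary follows immediately from Proposition \ref{E:recursionDelgen}, and the initial values $B_0(t)=(1-t)^{-5}$, $B_1(t)=(1+t)(1-t)^{-7}$ recorded in (\ref{E:Delan}) come from the normalization (\ref{E:norm}) applied at $\hdelta=\delta_0=(0)$ together with one application of the recursion (using that $[(0),(0)]$ and $[(0),(15)]$ contain no clutters, cf. Example \ref{E:proj}).
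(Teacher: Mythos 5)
Your proof takes the same approach as the paper, whose entire stated argument for this corollary is simply that it is the $z=q=1$ specialization of Proposition \ref{E:recursionDelgen}; you correctly carry out the specialization, check that each of the four numerators collapses to $1-t-t^2+t^3=(1-t)^2(1+t)$ and each denominator to $(1-t)^4$, and observe that each of the four identities therefore yields the same pair of coefficients $\frac{1+t}{(1-t)^2}$, $\frac{t}{(1-t)^4}$ relating $B^{j}_{\hgamma}$ to $B^{j-1}_{\hgamma}$ and $B^{j-2}_{\hgamma}$ for all residues $j\bmod 4$. That is the whole content of the corollary.

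Two small slips in your write-up, neither of which affects the conclusion but which you should fix. First, the ``iterating it four times'' remark is internally inconsistent: the recursion is the top row of the transfer matrix $M$ applied once, not of $M^4$; what the four specializations actually establish is that the same $M$ governs every step, so no iteration is needed. Second, your identification of $J$ is slightly off: reading (\ref{E:J}) together with the displayed initial values $B_{(0)}^{0}=A_{(0)}^{(15)}(1,1,t)$ and $B_{(0)}^{1}=A_{(0)}^{(5)}(1,1,t)$, one has $\hdelta_0=(15)$ (not $(0)$, which does not appear in $J$), and the periodic pattern is $\dots,(15)^{r-1},(5)^{r-1},(0)^{r},(1)^{r-1},(15)^{r},(5)^{r},\dots$, with $(1)^{r-1}$ rather than $(1)^{r}$ between $(0)^r$ and $(15)^r$; this matches the fact that each of the four identities in Proposition \ref{E:recursionDelgen} drops the exponent by one when passing between the $(0)/(15)$ family and the $(1)/(5)$ family.
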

\begin{proof}
This recursion is a specialization of formulas from Proposition \ref{E:recursionDelgen}.
\end{proof}

It follows from results of Examples \ref{E:proj}  and \ref{E:quad}  that \[B_{(0)}^{0}(t)=A_{(0)}^{(15)}(1,1,t)=\frac{1}{(1-t)^5}\] and  \[B_{(0)}^{5}(t)=A_{(0)}^{(5)}(1,1,t)=\frac{1-t^2}{(1-t)^8}\]

\subsection{Delannoy polynomials}\label{S:Delannoyp}
Delannoy polynomials (cf. \cite{WDelannoy}) $D_n(t)=\sum_{k=0}^nD_{k,n-k}t^k, \deg D_n=n$ is a  recursive defined system
\[\begin{split}
&D_n(t)=(1+t)D_{n-1}(t)+tD_{n-2}(t)\\
&D_0=1,\quad D_1(t)=1+t\end{split}\]
Here is a first few polynomials:
\[
\begin{split}
&D_0(t)=1\\
&D_3(t)=1+t\\
&D_2(t)=1+3\,t+{t}^{2}\\
&D_3(t)=1+5\,t+5\,{t}^{2}+{t}^{3}\\
&D_4(t)=1+7\,t+13\,{t}^{2}+7\,{t}^{3}+{t}^{4}\\
&D_5(t)=1+9\,t+25\,{t}^{2}+25\,{t}^{3}+9\,{t}^{4}+{t}^{5}\\
&D_6(t)=1+11\,t+41\,{t}^{2}+63\,{t}^{3}+41\,{t}^{4}+11\,{t}^{5}+{t}^{6}\\
&D_7(t)=1+13\,t+61\,{t}^{2}+129\,{t}^{3}+129\,{t}^{4}+61\,{t}^{5}+13\,{t}^{6}+
{t}^{7}\\
&D_8(t)=1+15\,t+85\,{t}^{2}+231\,{t}^{3}+321\,{t}^{4}+231\,{t}^{5}+85\,{t}^{6}
+15\,{t}^{7}+{t}^{8}\\
&D_9(t)=1+17\,t+113\,{t}^{2}+377\,{t}^{3}+681\,{t}^{4}+681\,{t}^{5}+377\,{t}^{
6}+113\,{t}^{7}+17\,{t}^{8}+{t}^{9}\\
&D_{10}(t)=1+19\,t+145\,{t}^{2}+575\,{t}^{3}+1289\,{t}^{4}+1683\,{t}^{5}+1289\,{t
}^{6}+575\,{t}^{7}+145\,{t}^{8}+19\,{t}^{9}+{t}^{10}\\
&D_{11}(t)=1+21\,t+181\,{t}^{2}+833\,{t}^{3}+2241\,{t}^{4}+3653\,{t}^{5}+3653\,{t
}^{6}+2241\,{t}^{7}+833\,{t}^{8}+181\,{t}^{9}+21\,{t}^{10}+{t}^{11}\\
&\cdots
\end{split}\]
Delannoy numbers form a triangle $\{D_{n,m}\}$.
A number $D_{n,m}$ describes the number of paths from the southwest corner $(0, 0)$ of a rectangular grid to the northeast corner $(m, n)$, using only single steps north, northeast, or east. 

Let \[B_r(t)=\frac{D_r(t)}{(1-t)^{5+2r}}\] The functions $B_r$ satisfy recurrence (\ref{E:Delan})
\begin{corollary}
Numerator of $B_{(0)}^r(t)$ is a Delannoy polynomial $D_r(t), r\geq 0$.
\end{corollary}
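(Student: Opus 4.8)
The plan is to induct on $r\ge 0$ using the recursion established just above, namely
\[
B_{(0)}^r(t)=\frac{1+t}{(1-t)^2}\,B_{(0)}^{r-1}(t)+\frac{t}{(1-t)^4}\,B_{(0)}^{r-2}(t),
\]
which is the specialization $z=1,q=1$ of the third display in Proposition \ref{E:recursionDelgen} (after composing it with the previous recursions in the chain $(5)^{r}\to(15)^{r}\to(0)^{r+1}$, and noting that all $e_\alpha\mapsto 1$ and $q\mapsto 1$). I will show that if we write $B_{(0)}^r(t)=N_r(t)/(1-t)^{5+2r}$, then the numerators $N_r(t)$ satisfy the Delannoy recursion $N_r=(1+t)N_{r-1}+tN_{r-2}$ with the correct initial data, whence $N_r=D_r$ by the very definition of $D_r(t)$ in Section \ref{S:Delannoyp}.

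First I would pin down the base cases. From Example \ref{E:proj} we have $A_{(0)}^{(15)}(1,1,t)=\tfrac{1}{(1-t)^5}$, i.e. $B_{(0)}^{0}(t)=\tfrac{1}{(1-t)^5}$, so $N_0(t)=1=D_0(t)$. For $r=1$ one uses the explicit formula (\ref{E:genchar}) — equivalently one more application of the reduction (\ref{E:tail})/(\ref{E:reduction}) along the short segment from $(15)^0$ up to $\hdelta_1=(5)^1$ in $J$ — to get $B_{(0)}^{1}(t)=\tfrac{1+t}{(1-t)^7}$, matching the stated value $B_1(t)$ in (\ref{E:Delan}); hence $N_1(t)=1+t=D_1(t)$. (Alternatively, $B_{(0)}^{5}(t)=A_{(0)}^{(5)}(1,1,t)=\tfrac{1-t^2}{(1-t)^8}$ from Example \ref{E:quad} gives a cross-check once the recursion is in hand.)

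For the inductive step, substitute $B_{(0)}^{r-1}=N_{r-1}/(1-t)^{5+2(r-1)}$ and $B_{(0)}^{r-2}=N_{r-2}/(1-t)^{5+2(r-2)}$ into the recursion. The first term contributes $\tfrac{(1+t)N_{r-1}}{(1-t)^2(1-t)^{3+2r}}=\tfrac{(1+t)N_{r-1}}{(1-t)^{5+2r}}$, and the second term contributes $\tfrac{t\,N_{r-2}}{(1-t)^4(1-t)^{1+2r}}=\tfrac{t\,N_{r-2}}{(1-t)^{5+2r}}$, so the denominators match exactly and we read off $N_r=(1+t)N_{r-1}+tN_{r-2}$, which is the defining recursion of the Delannoy polynomials. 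By induction $N_r(t)=D_r(t)$ for all $r\ge 0$, i.e. the numerator of $B_{(0)}^r(t)$ is $D_r(t)$.

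The only genuine content is bookkeeping: verifying that the $z=q=1$ specialization of the three-step chain of recursions in Proposition \ref{E:recursionDelgen} collapses to the clean two-term recursion (\ref{E:Delan}) with precisely the coefficients $\tfrac{1+t}{(1-t)^2}$ and $\tfrac{t}{(1-t)^4}$, and pinning down $B_{(0)}^1$. Neither is hard; the main obstacle, such as it is, is just being careful that the indexing of $J$ in (\ref{E:J}) aligns $\hdelta_r$ with the weight whose Poincar\'e series has denominator exponent $5+2r$, so that the power-of-$(1-t)$ accounting in the inductive step works out — which the table (\ref{E:htvalues}) and the shift identity (\ref{E:chshift}) make routine to check.
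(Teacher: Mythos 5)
Your proof is correct and takes essentially the same route as the paper: the paper's proof is the single sentence that $B_{(0)}^r(t)$ and $B_r(t)=D_r(t)/(1-t)^{5+2r}$ satisfy the same recurrence and initial conditions, which is exactly the induction you spell out. You also correctly read $B_{(0)}^1(t)=\tfrac{1+t}{(1-t)^7}$ from Example \ref{E:quad}; note the paper has a small typo labelling it $B_{(0)}^{5}$ rather than $B_{(0)}^{1}$.
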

\begin{proof}
Sequences $B_{(0)}^r(t)$ and $B_r(t)$ satisfy the same recurrence relation and initial conditions.
\end{proof}

The formula for generating function
\[\sum_{r\geq 0} B_r(t)s^r={\frac {1}{ \left( -1+t \right)  \left( s-st-s{t}^{2}+s{t}^{3}+t{s}^{2
}-1+4\,t-6\,{t}^{2}+4\,{t}^{3}-{t}^{4} \right) }}\]
follows easily form the equation (\ref{E:Delan}). Its verification is left as an exercise. 

\bigskip
 \bigskip
 
 {\bf \Huge Appendix}
 \appendix
 \section{Some automorphisms of $\hat{A}$}\label{S:Aut}
  \begin{remark}
Recall that $\mathrm{Pin}(10)$-group, the universal cover of $\mathrm{O}(10)$, is generated by elements $e\in W+W'$ in Clifford algebra $Cl(W+W')$, that satisfy 
\begin{equation}\label{E:square}
e^2=1.
\end{equation}
 In the basis $v_1,\dots,v_5$ of $W$ and the dual basis $v_{1^*},\dots,v_{5^*}$ in $W^*$  element $e_{i}(a)=av_i+a^{-1}v_{i^*}$ satisfies (\ref{E:square}).  in addition $e_{i}(a)e_{j}(b)=-e_{j}(b)e_{i}(a)$. An element $e_{i}(a)e_{j}(b)$ belongs to $\Spin(10)\subset \GL(S)$.
 \end{remark}
The group $\Spin_{\mathbb{C}}(10)=\Spin(10)\times \mathbb{C}^{\times}/(-1,-1)$ has a tautological spinorial representation. 
Its Lie algebra acts in $S$ by operators $v_iv_j,v_iv_{j^*},v_{i^*}v_{j^*},1$ (see e.g. \cite{ChevalleySpinors},p131).
 
  Let  $u\in \Spin_{\mathbb{C}}(10)$ to be $u=e_2e_3e_4e_5$. Equation \ref{E:square} and anti-commutativity of $e_i$ imply $u^2=1$.
 \begin{lemma}\label{L:gaction}
 We have the following table of $s$-action on basis elements:
 \begin{equation}
 \begin{split}
 &u(\theta_{(0)})=\theta_{(1)};u(\theta_{(12)})=-\theta_{(2)};u(\theta_{(13)})=\theta_{(3)};u(\theta_{(14)})=-\theta_{(4)}\\
 &u(\theta_{(15)})=\theta_{(5)};u(\theta_{(23)})=-\theta_{(45)};u(\theta_{(24)})=\theta_{(35)};u(\theta_{(25)})=-\theta_{(34)}.
 \end{split}
 \end{equation} 
 We give only a half of the formulas. The remaining half  can be recovered from the condition $u^2=1$.
 \end{lemma}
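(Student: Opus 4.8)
The plan is to compute $u=e_2e_3e_4e_5$ directly in the fermionic Fock realization of the spinor representation from Section~\ref{S:finite}. On $\Lambda W$ the element $e_j=e_j(1)$ acts as $v_j+v^{*}_j$, where $v_j$ denotes exterior multiplication by $v_j$ and $v^{*}_j$ the contraction dual to the pairing $(v_i,v_{j^*})=\delta_{ij}$; these operators square to zero, anticommute, and satisfy $\{v_j,v^{*}_j\}=\id$, which is the Fock-space incarnation of the identity $e_j(1)^2=1$ recorded in the Remark preceding the Lemma. Although each individual $e_j$ interchanges the chiral halves $S^{\pm}=\Lambda^{\mathrm{even}/\mathrm{odd}}W$, the four-fold product $u$ preserves $S=\Lambda^{\mathrm{even}}W$, so I would carry out all the bookkeeping inside the full exterior algebra $\Lambda W$ and read off the answer at the end.

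Concretely, I would take the eight listed monomials $\theta_{(0)}=1$, $\theta_{(ij)}=v_i\wedge v_j$, $\theta_{(k)}=v_1\wedge\cdots\hat{v}_k\cdots\wedge v_5$ and apply $e_5$, then $e_4$, then $e_3$, then $e_2$. The key simplification is that at each stage exactly one of the two summands of $e_j=v_j+v^{*}_j$ survives: if $v_j$ is absent from the current monomial, the contraction kills it and the wedge term inserts $v_j$; if $v_j$ is present, the wedge term dies and the contraction deletes $v_j$. Thus each application of $u$ reduces to four successive insertions/deletions, and one only has to record a single sign at each step — the parity of the transposition needed to bring a newly wedged generator into increasing order, or $(-1)^{p-1}$ where $p$ is the slot occupied by the deleted generator. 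For example $u(\theta_{(0)})=v_2\wedge v_3\wedge v_4\wedge v_5=\theta_{(1)}$ with no signs, while for $u(\theta_{(12)})$ the operators $e_5,e_4,e_3$ produce $v_1\wedge v_2\wedge v_3\wedge v_4\wedge v_5$ and then $e_2$ contracts out the generator in the second slot, yielding $(-1)^{2-1}\theta_{(2)}=-\theta_{(2)}$; the remaining six rows are obtained in exactly the same way, the sign in each case being dictated by the position of the contracted generator and the number of transpositions used in the insertions.

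Finally, I would invoke $u^2=1$ — already noted in the text, since $e_j(1)^2=1$ and $e_ie_j=-e_je_i$ for $i\neq j$ — to recover the other eight values of $u$ from the ones just computed, e.g.\ $u(\theta_{(1)})=\theta_{(0)}$ and $u(\theta_{(2)})=-\theta_{(12)}$. The only genuine obstacle is consistency of conventions: one must fix the normalization of the contraction operator and the ordering conventions so that they agree with the pairing $(v_i,v_{j^*})=\delta_{ij}$ and with the choice of $\so_{10}$-action and weight basis made in Section~\ref{S:finite}; once that normalization is pinned down, the verification is purely mechanical sign-tracking and the table follows entry by entry.
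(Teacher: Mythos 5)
Your proposal is correct and is exactly what the paper means by its one-line proof ``Direct check'': compute $u=e_2e_3e_4e_5$ on $\Lambda W$ by successive wedge/contraction, track signs, and use $u^2=1$ for the remaining entries. The sample entries you work out ($u(\theta_{(0)})=\theta_{(1)}$ and $u(\theta_{(12)})=-\theta_{(2)}$) match the table, and the remaining six are handled by the same mechanical bookkeeping you describe.
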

 \begin{proof}
 Direct check.
 \end{proof}
 
 We see that $u$ permutes weight spaces and induces a transformation of the set $\rE$, which we by abuse of notations denote by $u$.
\begin{definition}
A invertible transformation $h:M\rightarrow M$ of a poset $M$ is an anti-automorphism if $\alpha\leq \beta$ implies $h(\alpha)\geq h(\beta)$.
\end{definition}
 \begin{proposition}\label{P:centralsym}
 The poset $E$ has a anti-automorphism $u$. It is realized as a  symmetry of the graph $\rE$ (see Picture \ref{P:kxccvgw1}) with respect to its geometric center. This symmetry can be realized by a linear transformation $u\in \Spin_{\mathbb{C}}(10)$, that satisfies $u\theta_{\alpha}=\pm\theta_{u(\alpha)},\alpha\in \rE$.
  The operator $u\in \Spin(10)$  defines an automorphism of algebra $A$.
 \end{proposition}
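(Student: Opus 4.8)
The plan is to upgrade Lemma \ref{L:gaction} to the three assertions of the proposition in turn. First, for membership in $\Spin(10)$: writing $u=(e_2e_3)(e_4e_5)$ and invoking the Remark preceding Lemma \ref{L:gaction} (each $e_ie_j$ of unit Clifford vectors lies in $\Spin(10)\subset\GL(S)$), $u$ is a product of two such elements and hence lies in $\Spin(10)$; anticommutativity of the $e_i$ together with $e_i^2=1$ gives $u^2=1$. By Lemma \ref{L:gaction}, $u$ permutes the weight lines $\C\theta_\alpha$ of $S$, so it induces an involution of the index set $\rE$, which we again call $u$; the table of Lemma \ref{L:gaction} together with $u^2=1$ determines it on all sixteen elements.

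Second, and this is the substantive point, $u$ reverses the partial order of (\ref{P:kxccvgw1}). I would compute $\mathrm{Ad}(u)$ on the vector representation $V$ straight from the Clifford relations: since $u$ is the product of the four vectors $e_i=e_i(a)=av_i+a^{-1}v_{i^*}$ for $i=2,3,4,5$, a short calculation shows $\mathrm{Ad}(u)$ fixes the plane $\langle v_1,v_{1^*}\rangle$ and interchanges $v_j\leftrightarrow v_{j^*}$ up to a scalar for $j=2,3,4,5$. Thus $\mathrm{Ad}(u)$ normalizes the maximal torus and acts on the weights of $V$ by $\epsilon_1\mapsto\epsilon_1$, $\epsilon_j\mapsto-\epsilon_j$ for $j\geq2$; with respect to the simple roots attached to the Coxeter generators $s_1,\dots,s_5$ of Section \ref{S:finite} (whose fork carries the roots $\epsilon_2-\epsilon_1$ and $\epsilon_1+\epsilon_2$) this Weyl element sends every simple root to a negative root, hence is the longest element $w_0$ of $W(D_5)$. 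Consequently $\mathrm{Ad}(u)$ interchanges $\n_+$ and $\n_-$ in the decomposition (\ref{E:so10decomp}). Since $\alpha\leq\alpha'$ in $\rE$ holds iff some element of $U(\n_+)$ sends $\theta_\alpha$ to a multiple of $\theta_{\alpha'}$, conjugating such an element by $u$ yields $\alpha\leq\alpha'\Rightarrow u(\alpha')\leq u(\alpha)$, i.e. $u$ is an anti-automorphism of the poset $\rE$. One can also simply verify, edge by edge, that $u$ carries each covering relation of (\ref{P:kxccvgw1}) to the reversed one — that $u$ is the $180^\circ$ rotation of the picture — which is a finite check usable as a cross-check of the computation above.

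Third, $u$ descends to $A$. As $P=\Sym(S^*)$, the group $\Spin(10)$ acts on $P$ by graded algebra automorphisms, so it suffices to see that $u$ preserves the ideal generated by the quadrics $\Gamma^{\s}$. But their span in $P_2=\Sym^2 S^*$ is the image of the $\Spin(10)$-equivariant map $V^*\to\Sym^2 S^*$ dual to the $\Gamma$-map (\ref{E:gammamap}), hence a $\Spin(10)$-submodule of $P_2$; the ideal it generates is therefore $\Spin(10)$-stable, in particular $u$-stable, and $u$ induces an algebra automorphism of $A$. On generators this automorphism is $\lambda^\alpha\mapsto\pm\lambda^{u(\alpha)}$, obtained by dualizing Lemma \ref{L:gaction}.

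I expect the only real work to be the order-reversal in the second step. The delicate point is that identifying the Weyl element $\mathrm{diag}(1,-1,\dots,-1)$ with $w_0$ uses precisely the simple system fixed in Section \ref{S:finite} (the $D_5$ fork consisting of $s_1$ and $s_2$); for a different positive system this element is merely a conjugate of $w_0$ and need not reverse the order, so the bare observation that $u$ swaps the least and greatest elements of $\rE$ is not by itself enough. Once the order-reversal is in hand, the membership $u\in\Spin(10)$ and the passage to $A$ are formal consequences of the $\Spin(10)$-equivariance of the relation space.
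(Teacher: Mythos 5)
Your proof is correct, and in the substantive second step it takes a genuinely different route from the paper's. The paper's own proof is essentially the one-line remark that the anti-automorphism property ``follows from Lemma~\ref{L:gaction}'': that lemma tabulates $u\theta_\alpha$ for all $\alpha\in\rE$, and the order-reversal is then read off directly from the picture as the $180^\circ$ rotation (the finite check you mention at the end as a cross-check). You instead compute $\mathrm{Ad}(u)$ on $V$ from the Clifford relations, identify the resulting Weyl element ($\epsilon_1\mapsto\epsilon_1$, $\epsilon_j\mapsto-\epsilon_j$ for $j\geq 2$) as the longest element $w_0$ with respect to the simple system fixed in Section~\ref{S:finite}, and conclude $\mathrm{Ad}(u)\n_+=\n_-$, which reverses the order because the paper's chain criterion characterizes $\alpha\leq\alpha'$ via $U(\n_+)$. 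This is a sound and more conceptual argument, and you are right to flag the delicate point that $\mathrm{diag}(1,-1,-1,-1,-1)$ is $w_0$ only for the particular positive system the paper has fixed; for a different choice it is merely conjugate to $w_0$ and would not reverse this order. The cost is that your route needs the explicit simple system and the reflection computation, whereas the paper's route needs only the table already recorded in Lemma~\ref{L:gaction}. The first and third steps (membership $u\in\Spin(10)$ via $(e_2e_3)(e_4e_5)$, and descent to $A$ by $\Spin(10)$-stability of the span of the quadrics) coincide with the paper's in substance.
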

 \begin{proof}
 The first assertion follows from Lemma \ref{L:gaction}.
 The operator $u$ belongs to $\Spin_{\mathbb{C}}(10)$. The operator  must preserve linear space spanned by relations (\ref{E:equations}) in $P$ and defines an automorphism of algebra $A$.
 \end{proof}
 
The automorphism $u$ acts on $\hat{P}$ and preserves ideal of relations $I$. We denote by $\hat{u}$ the composition of $u$ with transformation $\lambda^{\alpha}(z)\rightarrow \lambda^{\alpha}(\frac{1}{z})$. The automorphism $\hat{u}$ permutes weight spaces and defines a transformation of  $\hat{\rE}$, which we also denote by $\hat{u}$.
\begin{proposition}
Transformation $\hat{u}$ defines an anti-automorphism  $\hat{\rE}$ 
\begin{equation}\label{E:inv}
\hat{u}:(\alpha)^l\rightarrow u(\alpha)^{-l}. 
\end{equation}
It comes from an automorphism $\hat{u}$ of the pair $I\subset \hat{P}$.
\end{proposition}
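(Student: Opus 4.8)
The plan is to build $\hat{u}$ by combining the already–constructed finite automorphism $u \in \Spin_{\mathbb C}(10)$ of Proposition \ref{P:centralsym} with the loop inversion $z \mapsto 1/z$, and then to check separately (i) that the resulting map is a well-defined map of the completed polynomial algebra $\hat P$, (ii) that it carries the ideal $I$ onto itself, and (iii) that the induced transformation of the poset $\hat{\rE}$ is the formula $(\alpha)^l \mapsto u(\alpha)^{-l}$ and is order-reversing. First I would make the definition precise: on a generating function $\lambda^\alpha(z) = \sum_l \lambda^{\alpha^l} z^l$ the map $\lambda^\alpha(z) \mapsto \lambda^\alpha(1/z)$ sends the coefficient $\lambda^{\alpha^l}$ to $\lambda^{\alpha^{-l}}$; composing with the linear action of $u$ on the spinor index (which by Lemma \ref{L:gaction} permutes the $\theta_\alpha$ up to sign, hence permutes the generators $\lambda^\alpha$ up to sign) gives $\hat u(\lambda^{\alpha^l}) = \pm \lambda^{u(\alpha)^{-l}}$. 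This is exactly the claimed action on weight vectors, so the induced set map on $\hat{\rE}$ is $(\alpha)^l \mapsto u(\alpha)^{-l}$.

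Next I would verify that $\hat u$ is continuous for the topology on $\hat P$ defined in Section \ref{S:Richardson}: the basic opens are the tails $T^l[(0),\infty)$ and $T^l(-\infty,(0)]$, and since $z \mapsto 1/z$ swaps $l$ with $-l$ while $u$ is a fixed finite permutation of $\rE$, $\hat u$ interchanges these two families of tails; hence it extends to a continuous algebra automorphism of the completion $\hat P$. For the ideal: $u \in \Spin_{\mathbb C}(10)$ already preserves the span of the finite relations $\Gamma^{\s}$ (Proposition \ref{P:centralsym}), and the loop-inversion $z \mapsto 1/z$ carries the generating function of relations $\Gamma^i_{\alpha\beta}\lambda^\alpha(z)\lambda^\beta(z)$ to the same expression in $\lambda^\alpha(1/z)$, i.e. it sends the coefficient $\Gamma^{\s^l}$ to $\Gamma^{\s^{-l}}$; combining, $\hat u$ permutes the topological generators of $I$ up to sign and the appropriate index reflection, so $\hat u(I) = I$. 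Thus $\hat u$ descends to an automorphism of $\hat A = \hat P / I$. For the anti-automorphism claim on the poset, I would argue as follows: $u$ is an anti-automorphism of the finite lattice $\rE$ (Proposition \ref{P:centralsym}), and on $\hat{\rE}$ the order is generated by the arrows within each $\rE_r$ together with the arrows produced by $R^+_6 = v_4 v_5 z$ linking level $r$ to level $r+1$; the inversion $z \mapsto 1/z$ reverses the level shift, turning $R^+_6$-arrows into arrows pointing from level $-r-1$ up to level $-r$ reversed, while $u$ reverses the intra-level arrows. A clean way to phrase this uniformly is via the Lie-theoretic description: $\hat u$ conjugates the triangular decomposition $\hat{\n}_- + \hat{\h} + \hat{\n}_+$ of $\hat{\so}_{10}$ into $\hat{\n}_+ + \hat{\h} + \hat{\n}_-$ (because $u$ does this for $\so_{10}$ and $z\mapsto 1/z$ flips the loop grading, hence flips the sign of the affine root $\delta$), so it sends directed paths to reversed directed paths in $\hat{\rE}$, i.e. it is order-reversing.

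The main obstacle I anticipate is purely bookkeeping rather than conceptual: one must check that the signs coming from $u\theta_\alpha = \pm\theta_{u(\alpha)}$ (only half of which are tabulated in Lemma \ref{L:gaction}, the rest determined by $u^2 = 1$) are consistent with $\hat u$ being an \emph{algebra} map on all of $\hat P$ and with the claimed $\hat u^2$, and that the reflection $l \mapsto -l$ interacts correctly with the grading shift $T:\Gamma^{\s^l}\to\Gamma^{\s^{l+2}}$ versus $T:h_{\alpha^k}\to h_{\alpha^{k+3}}$ used elsewhere; concretely one wants $\hat u T \hat u = T^{-1}$ on $\hat{\rE}$, which follows from $\hat u((\alpha)^l) = u(\alpha)^{-l}$ by direct substitution. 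Since all of these are finite checks on a 16-element fundamental domain $[(0),(1)^1]$ (using invariance under $T$), the verification reduces, as in the other proofs in Section \ref{S:posetaf}, to direct inspection of Picture \ref{P:kxccvgw15}. I would therefore organize the proof as: (1) define $\hat u$ and record its action on generators; (2) continuity and hence extension to $\hat P$; (3) $\hat u(I)=I$ from Proposition \ref{P:centralsym} plus loop inversion of \eqref{E:relgenfunction}; (4) the induced poset map is \eqref{E:inv} and is order-reversing, citing the anti-automorphism property of $u$ on $\rE$ and the sign flip of the affine root.
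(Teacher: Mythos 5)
Your proposal is correct, and the overall architecture matches the paper: you build $\hat u$ as the composition of the finite anti-automorphism $u$ from Proposition \ref{P:centralsym} with the loop inversion $z\mapsto 1/z$, observe that it preserves the ideal $I$ (which the paper also takes as given from the surrounding discussion), and then prove the order-reversal. The place where your argument genuinely diverges is the order-reversal step. The paper's proof is purely combinatorial and reduction-based: it notes $\hat u\rE_r=\rE_{-r}$, invokes Lemma \ref{E:dom} (that $\rE_{r\pm k}$ dominates/is dominated by $\rE_r$ for $|k|\geq 2$) together with the conjugation identity $\hat u T\hat u=T^{-1}$ to shrink the verification to the finite interval $[(0),(1)^1]$, and then declares that finite check an exercise. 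You instead give a conceptual, Lie-theoretic justification: $\hat u$ conjugates the triangular decomposition $\hat\n_-+\hat\h+\hat\n_+$ of $\hat{\so}_{10}$ into $\hat\n_++\hat\h+\hat\n_-$, because $u$ already does so for $\so_{10}$ and $z\mapsto 1/z$ negates the loop grading, so $\hat u$ reverses every $R_i^+$-arrow including the affine one $R_6^+=v_4v_5z\mapsto v_{4^*}v_{5^*}z^{-1}$. That is a cleaner uniform explanation that avoids case-checking entirely; the paper's route is more elementary (needs no Lie-theoretic interpretation of $\hat\rE$) but pushes the actual work into an unstated finite verification. Both are valid; yours buys conceptual transparency, the paper's buys a shorter formal proof relying on previously established reductions. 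One small caution if you pursue your route: to make the conjugation argument airtight you should explicitly verify that $u$ sends $v_4v_5$ to (a scalar multiple of) $v_{4^*}v_{5^*}$ using Lemma \ref{L:gaction}, since that is what identifies $\hat u R_6^+\hat u^{-1}$ with the affine lowest root vector; you flagged this as "bookkeeping," and it is, but it is the bookkeeping that carries the argument.
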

\begin{proof}
Note that $\hat{u}\rE_r=\rE_{-r}$. By Lemma \ref{E:dom} and by 
 virtue of  identity $\hat{u}T\hat{u}=T^{-1}$, it suffice to verify axioms of an  anti-automorphism for $\hat\alpha,\hat\beta\in [(0),(1)^1]$. We leave this  as an exercise.
\end{proof}

The shift operator $T$ is continuous  on $\hat{\rE}$ and   defines an automorphism of  $\hat P$:
\begin{equation}\label{E:shift}
\lambda_{\alpha}(z)\rightarrow \frac{\lambda_{\alpha}(z)}{z}.
\end{equation}  On quadrics  $\Gamma^{\s^k}\in \hat P$    (\ref{E:relgenfunction})with no restrictions on $l,l'$ indices  it acts by the formula $T\Gamma^{\s^k}=\Gamma^{\s^{k+2}}$. Thus $T$  is an automorphism of the pair $I\subset \hat P$. 

\section{Lists of obstructions}\label{A:obstr}
Below is a list of pairs of noncommutative obstructions in $\rE$ (see Definition \ref{D:obstruction})
\begin{equation}\label{E:obstructionsord}
\begin{split}
&o_{(0)}=\{((4),((5)(45))), ((5),((4)(45)))\}\\
&o_{(12)}=\{ ((45),((5)(35))), ((35),((5)(45)))\}\\
&o_{(13)}=\{ ((45),((5)(25))), ((25),((5)(45)))\}\\
&o_{(23)}=\{ ((45),((5)(15))), ((15),((5)(45)))\}\\
&o_{(14)}=\{ ((35),((5)(25))), ((25),((5)(35)))\}\\
&o_{(24)}=\{ ((35),((5)(15))), ((15),((5)(35)))\}\\
&o_{(15)}=\{ ((34),((5)(25))), ((5),((5)(35)))\}\\
&o_{(34)}=\{ ((25),((5)(15))), ((15),((5)(25)))\}\\
&o_{(25)}=\{ ((34),((5)(15))), ((5),((15)(34)))\}\\
&o_{(5)}=\{ ((15),((25)(34))), ((25),((15)(34)))\}\\
&o_{(35)}=\{ ((24),((5)(15))), ((5),((15)(24)))\}\\
&o_{(4)}=\{ ((24),((15)(34))), ((34),((15)(24)))\}\\
&o_{(45)}=\{ ((23),((5)(15))), ((5),((15)(23)))\}\\
&o_{(3)}=\{ ((23),((15)(24))), ((34),((15)(23)))\}\\
&o_{(2)}=\{ ((23),((15)(24))), ((24),((15)(23)))\}\\
&o_{(1)}=\{ ((14),((15)(23))), ((15),((14)(23)))\}
\end{split}
\end{equation}
The sets $o_{\alpha}$ carry a label $\alpha$, a weight in a $\so_{10}$-spinor representation. Expressions $h_{\alpha}$ (\ref{E:fierzksimple}) carry the same label. To make connection of $o_{\alpha}$ with $h_{\alpha}$ more explicitly we identify the later with an element in $P\otimes I$ by means of substitution $x_{\s}\rightarrow \Gamma^{\s}$.   Then $o_{\alpha}$ appears as a collection  indices of noncommutative obstructions in $h_{\alpha}$.

A list of noncommutative obstruction in $[(0),(1)^1]\subset \hat{\rE}$, that are  not in the preceding list and not in $[(0)^1,(1)^1]$ contains elements of the form $(\alpha^i,(\beta^j,\gamma^k))$. Indices $i,j,k\geq 0$ satisfy $i+j+k=l=1$ or $2$. The table below describes the $l=1$ case
\begin{equation}\label{E:ncobstraf}
\begin{split}
&o_{(1)^1}=\{((15),((5),(0)^1)),((0)^1,((15),(5)))\}\\
&o_{(2)^1}=\{((25),((5),(0)^1)),((0)^1,((25),(5)))\}\\
&o_{(3)^1}=\{((35),((5),(0)^1)),((0)^1,((35),(5)))\}\\
&o_{(4)^1}=\{((45),((5),(0)^1)),((0)^1,((45),(5)))\}\\
&o_{(4)^1}=\{((45),((4),(0)^1)),((0)^1,((45),(4)))\}\\
&o_{(45)^1}=\{((5),((0)^1,(4))),((4),((0)^1,(5)))\}\\
&o_{(35)^1}=\{((5),((0)^1,(3))),((3),((0)^1,(5)))\}\\
&o_{(25)^1}=\{((5),((0)^1,(2))),((2),((0)^1,(5)))\}\\
&o_{(15)^1}=\{((5),((0)^1,(1))),((1),((0)^1,(5)))\}\\
&o_{(34)^1}=\{((4),((0)^1,(3))),((3),((0)^1,(4)))\}\\
&o_{(24)^1}=\{((4),((0)^1,(2))),((2),((0)^1,(4)))\}\\
&o_{(14)^1}=\{((4),((0)^1,(1))),((1),((0)^1,(4)))\}\\
&o_{(23)^1}=\{((3),((0)^1,(2))),((2),((0)^1,(3)))\}\\
&o_{(13)^1}=\{((3),((0)^1,(1))),((1),((0)^1,(3)))\}\\
&o_{(12)^1}=\{((2),((0)^1,(1))),((1),((0)^1,(2)))\}\\
&o_{(0)^1}=\{((2),((12)^1,(1))),((1),((12)^1,(3)))\}\\
\end{split}
\end{equation}
Obstructions with $l=2$ can be transformed by symmetries $T$ and $\hat u$ into obstructions with $l=1$. Because of this, the above table is sufficient for our purposes.

\end{document}